\definecolor{Dark Ruby Red}{HTML}{7c1b1e}
\definecolor{Dark Blue Sapphire}{HTML}{00343d} %even darker 12%
\definecolor{Dark Gamboge}{HTML}{be7c00}
\renewcommand{\epsilon}{\varepsilon}
\newcommand{\Nat}{\ensuremath{\mathbb{N}}}
\newcommand{\set}[1]{\ensuremath{\{#1\}}}
\definecolor{light-gray}{gray}{0.9}
\newcommand{\proofcase}[1]{\noindent\colorbox{light-gray}{#1}~}
\newcommand{\np}{\textup{\sc NP}\xspace}
\newcommand{\piPTwo}{\ensuremath{\Pi^p_2}\xspace}
\newrobustcmd{\defeq}{\mathrel{\hat{=}}}
\knowledgenewrobustcmd{\alphabet}{\cmdkl{\ensuremath{\mathbb{A}}}}
\knowledgenewrobustcmd{\reg}{\cmdkl{\textup{RE}}}
\knowledgenewrobustcmd{\aSingleton}{\cmdkl{\ensuremath{\mathrm{\bf a}}}}
\knowledgenewrobustcmd{\aStar}[1][]{\cmdkl{\ensuremath{\mathrm{\bf a}_{#1}^*}}}
\knowledgenewrobustcmd{\wSingleton}{\cmdkl{\ensuremath{\mathrm{\bf w}}}}
\knowledgenewrobustcmd{\wSingletonN}{\cmdkl{\ensuremath{\mathrm{\bf w}^n}}}
\knowledgenewrobustcmd{\wStar}{\cmdkl{\ensuremath{\mathrm{\bf w}^*}}}
\knowledgenewrobustcmd{\WStar}{\cmdkl{\ensuremath{\mathrm{\bf W}^*}}}
\knowledgenewrobustcmd{\SF}{\cmdkl{\textup{SF}\xspace}}
\knowledgenewrobustcmd{\SSF}{\cmdkl{\textup{SSF}\xspace}}
\newcommand{\class}{\ensuremath{\mathcal{Q}}\xspace} % a class of queries
\knowledgenewrobustcmd{\classOp}[1]{\ensuremath{\cmdkl{\mathcal{Q}(}#1\cmdkl{)}}\xspace} % a class of queries
\knowledgenewrobustcmd{\vars}{\cmdkl{\textit{vars}}} % set of variables of a query
\knowledgenewrobustcmd{\Exp}{\cmdkl{\ensuremath{\textup{Exp}}}} % expansions
\knowledgenewrobustcmd{\Nmax}{\cmdkl{\ensuremath{\textup{N}}}} %largest word length in query 
\knowledgenewrobustcmd{\Atom}{\cmdkl{\ensuremath{\textup{Atom}}}} %set of atoms of a query
\knowledgenewrobustcmd{\STEP}{\cmdkl{\ensuremath{\Pi}}} %STEP of a query 
\knowledgenewrobustcmd{\Min}{\cmdkl{\ensuremath{\textup{Min}}}} %set of minimal paths of a query
\knowledgenewrobustcmd{\model}{\mathrel{\cmdkl{\models}}} %Models 
\knowledgenewrobustcmd{\notmodel}{\mathrel{\cmdkl{\not\models}}} % NOT Models
\knowledgenewrobustcmd{\pref}{\cmdkl{\ensuremath{\textup{pref}}}} %prefix of a path
\knowledgenewrobustcmd{\suff}{\cmdkl{\ensuremath{\textup{suff}}}} %suffix of a path
\knowledgenewrobustcmd{\E}{\cmdkl{\ensuremath{\textup{E}}}} %edges of a path
\knowledgenewrobustcmd{\V}{\cmdkl{\ensuremath{\textup{V}}}} %vertices of a path
\newrobustcmd{\collapse}{\approx}
\knowledgenewrobustcmd{\homto}{\mathrel{\cmdkl{\xrightarrow{\smash{\textit{\tiny hom}}}}}}
\knowledgenewrobustcmd{\semsubset}{\mathrel{\cmdkl{\subseteq}}} % query containment
\knowledgenewrobustcmd{\notsemsubset}{\mathrel{\cmdkl{\not\subseteqq}}} % query non-containment
\knowledgenewrobustcmd{\semequiv}{\mathrel{\cmdkl{\equiv}}} % query equivalence
\knowledgenewrobustcmd{\Expm}[2][q]{\cmdkl{\textup{Exp}_{#2}\!(}#1\cmdkl{)}} % restriction q(m)
\knowledgenewrobustcmd{\ExpAm}[2][q]{\cmdkl{\textup{Exp}_{#2}\!(}#1\cmdkl{)}} % restriction q(A,m)
\knowledgenewrobustcmd{\qm}[2][q]{\cmdkl{#1(}#2\cmdkl{)}} % restriction q(m)
\knowledgenewrobustcmd{\qAm}[1]{\cmdkl{q(}#1\cmdkl{)}} % restriction q(A,m)
\knowledgenewrobustcmd{\IsBdd}{\cmdkl{\sc{Boundedness}\xspace}}
\knowledgenewrobustcmd{\bddIn}[1]{\cmdkl{#1\text{-bounded}}}
\knowledgenewrobustcmd{\bddInPb}[1]{\cmdkl{#1\textsc{-boundedness}}}
\knowledgenewrobustcmd{\AutomataLang}[1]{\cmdkl{\mathcal{L}(#1)}} % 
\knowledgenewrobustcmd{\factor}[2]{\cmdkl{[#1 .. #2]}} %
\newcommand{\automata}{\mathcal{A}}
\knowledgenewrobustcmd{\allHom}{\cmdkl{\ensuremath{\mathcal{H}}}}
\knowledgenewrobustcmd{\lambdas}[1][s]{\cmdkl{\lambda}^{\cmdkl{(}#1\cmdkl{)}}}
\knowledgenewrobustcmd{\sq}{\cmdkl{s_q}}
\knowledgenewrobustcmd{\nratoms}[2][]{\cmdkl{|}#2\cmdkl{|}^{#1}_{\cmdkl{\textsf{at}}}} % define in prelim
\knowledgenewrobustcmd{\nrvars}[2][]{\cmdkl{|}#2\cmdkl{|}^{#1}_{\cmdkl{\textsf{var}}}} % define in prelim
\knowledgenewrobustcmd{\Zbound}{\cmdkl{Z_q}}
\knowledgenewrobustcmd{\Zboundplus}{\cmdkl{Z_q^+}}
\knowledgenewrobustcmd{\ZredBound}{\cmdkl{Z_{\textit{red}}}}
\knowledgenewrobustcmd{\Nbound}{\cmdkl{N_q}}
\knowledgenewrobustcmd{\Recwords}{\cmdkl{\mathcal{R}_q}}
\knowledgenewrobustcmd{\ZcolBound}{\cmdkl{Z_{\textit{col}}}}
\knowledgenewrobustcmd{\qA}{\cmdkl{q(A)}}
\newcommand{\bigQL}{\ucrpq{\SSF,\wStar}}
\newcommand{\ucrpq}[1]{\textup{"UCRPQ"(\ensuremath{#1})}\xspace}
\newcommand{\crpq}[1]{\textup{"CRPQ"(\ensuremath{#1})}\xspace}
\newtheorem{theorem}{Theorem}
\newtheorem{lemma}[theorem]{Lemma}
\newtheorem{observation}[theorem]{Observation}
\newtheorem{claim}[theorem]{Claim}
\newtheorem{open}[theorem]{Open problem}
\crefname{thm}{Theorem}{Theorems}
\title{Boundedness for Unions of Conjunctive Regular Path Queries \\over Simple Regular Expressions}
\author{%
Diego Figueira$^1$\and
S. Krishna$^2$\and
Om Swostik Mishra$^2$\and
Anantha Padmanabha$^3$
\affiliations
$^1$Univ. Bordeaux, CNRS,  Bordeaux INP, LaBRI, UMR 5800, Talence, France\\
$^2$Indian Institute of Technology Bombay, Mumbai, India\\
$^3$Indian Institute of Technology Madras, Chennai, India\\
\emails
diego.figueira@cnrs.fr,
\{krishnas, 21b090022\}@iitb.ac.in,
ananthap@cse.iitm.ac.in
}
\begin{document}

\maketitle

\begin{abstract}

The problem of checking whether a recursive query can be rewritten as query without recursion is a fundamental reasoning task, known as the boundedness problem.
Here we study the boundedness problem for Unions of Conjunctive Regular Path Queries (UCRPQs), a navigational query language extensively used in ontology and graph database querying. 

The boundedness problem for UCRPQs is "ExpSpace"-complete. 
Here we focus our analysis on UCRPQs using simple regular expressions, which are of high practical relevance and enjoy a lower reasoning complexity.

We show that the complexity for the boundedness problem for this UCRPQs fragment is "PiP2"-complete, and that an equivalent bounded query can be produced in polynomial time whenever possible.

When the query turns out to be unbounded, we also study the task of finding an  equivalent maximally bounded query, which we show to be feasible in "PiP2".
As a side result of independent interest stemming from our developments, we study a notion of succinct finite automata and prove that its membership problem is in "NP".

\end{abstract}

%!TEX root = main.tex

\noindent
\raisebox{-.4ex}{\HandRight}\ \ This pdf contains internal links: clicking on a "notion@@notice" leads to its \AP ""definition@@notice"".\footnote{\url{https://ctan.org/pkg/knowledge}}

\section{Introduction}
\label{sec-Intro}

%%%Introduction

    Regular Path Queries (RPQ) and its extension under conjunctions, known as Conjunctive RPQ (CRPQ) are a well-known generalization of conjunctive queries with a mild form of recursion, which are extensively used for querying knowledge bases and graph-structured datasets.
    In particular, CRPQs are part of SPARQL, which is the W3C standard for querying RDF data, including well known knowledge bases such as DBpedia and Wikidata. In particular, RPQs are extensively used according to recent studies \cite{BonifatiMT19,MalyshevKGGB18}. More generally, CRPQs are basic building blocks for querying graph databases \cite{AnglesABHRV17,Baeza13}. 

    As knowledge bases become larger, reasoning about queries (e.g. for optimization) becomes increasingly important. 
    In this vein, many static analysis aspects of CRPQ have been studied, starting with the seminal papers on containment of CRPQs \cite{FlorescuLS98,CalvaneseGLV00} ("ie", whether the results of a query $q_1$ always contain those of $q_2$), which spawned many subsequent works.
   In particular, for queries with recursion such as CRPQ, a basic reasoning task is whether the recursion can be bounded. In other words, whether a given query can be equivalently written as a finite union of conjunctive queries (UCQ), known as the "boundedness problem".
    UCQs form the core of most systems for data management and ontological query answering, and, in addition, are the focus of advanced optimization techniques.
    The "boundedness problem" has garnered attention, in particular for Datalog programs. For ontology-mediated query answering (OMQA), the problem is known as the ``FO-rewritability'' or ``UCQ-rewritability'' problem, which typically takes, as ontology-mediated query, a conjunctive query and some TBox formulated in some description logic (see, "eg", \cite{Bienvenu0LW16}). In this sense, the current work can be seen as a preliminary study for investigating FO-rewritability of ontology-mediated CRPQ queries.

    Boundedness of CRPQs has been shown to be decidable, "ExpSpace"-complete \cite{barcelo2019boundedness}, that is, as hard (or easy) as the containment problem for CRPQs \cite{CalvaneseGLV00}. 
    This holds also for the extension with union (UCRPQ) and two-way navigation (UC2RPQ).
    Further, whenever a query is bounded the equivalent UCQ may be of triply exponential size, and hence not directly amenable to an optimization procedure.

    However, the lower bounds for boundedness ---and generally for most static analysis problems on CRPQ--- use rather complex regular expressions that are hardly used in practice. This has raised the question of the status of these problems when restricted to simpler basic regular expressions.

    In fact, recent studies reveal that most CRPQ use extremely simple regular expressions of the form $a^*$ or $a_1 + \dotsb + a_n$. Studies on query-logs \cite{BonifatiMT19,MalyshevKGGB18} show that these expressions cover more than 98\% of all RPQ queries made on Wikidata and more than 46\% of the queries made in DBpedia ---see \cite[Table 1]{FigueiraGKMNT20} for more details.

    A line of research was then established into understanding the difficulty of treating CRPQ over such simple regular expressions. It has been shown that over simpler expressions the situation improves drastically for the containment problem \cite[Table 2]{FigueiraGKMNT20} and the ``semantic tree-width problem'', that is,  whether the query is equivalent to a query of low tree-width \cite[\S 6]{FigueiraM23}.

    \paragraph{Contributions}

        We study the boundedness problem for UCRPQs restricted to a class of simple regular expressions. Such regular expressions can be either of the form $w^*$, where $w$ is a word, or any regular expression which does not use Kleene star. We even allow for ``succinct exponentiation'' of the form $w^n$ and $w^{\leq n}$ where $n$ is encoded in binary.

        For such queries we show that the boundedness problem is "PiP2"-complete. Both the upper and lower-bound proofs are non-trivial, but our main technical contribution is the upper bound (\Cref{sec-upperBound}).
        Whenever the UCRPQ is bounded, it can be written as a UCRPQ of polynomial size which does not contain expressions with Kleene star.

        As a necessary ingredient for the  "PiP2" upper bound proof, in \Cref{sec-Automata} we introduce a notion of ``succinct automata'' and use its membership checking problem, which we prove to be in "NP", to solve the boundedness problem. Succinct automata are classic finite automata extended with transitions of the form $p\xrightarrow{w^n}p'$ to indicate a path of transitions from $p$ to $p'$ reading the word $w^n$, where $n$ is represented in binary. These automata could be of independent interest when looking for succinct models of computation.  
        
    We also consider the related problem of bounding the query ``by letters'', "ie" whether the query is equivalent to one not making any recursion on a given subset of letters (\Cref{sec:letter-boundedness}). In this setting, the boundedness problem corresponds to bounding \emph{all} letters occurring in the query. 
    We show that there is a notion of ``maximally bounded'' query, "ie", a query that is bounded in a maximum number of letters which is also computable within the "PiP2" bound.

Finally, in \Cref{sec-lowerBound} we prove that the "PiP2" lower bound holds even for CRPQs with very simple regular expressions of the form $a^*$ or $a$.

\paragraph{Organization}
 \Cref{sec-prelims} describes the preliminaries needed for the paper. Then we formally state  our main results in \Cref{sec-Results}. We dedicate \Cref{sec-Automata} to the discussed side result on ``succinct automata'' and \Cref{sec-upperBound} to the main upper bound result. \Cref{sec-lowerBound} elaborates on  the lower bound.
\Cref{sec:letter-boundedness} delves into the problem of bounding a query ``by letter''. We conclude with \Cref{sec-conclusion}.

\section{Preliminaries}
\label{sec-prelims}
%%%Preliminaries
\AP
Let $\intro*\alphabet$ be a finite alphabet. A ""graph database"" over $\alphabet$ is a finite edge-labelled directed graph $G = (V, E)$ over $\alphabet$, where $V$ is a finite set  of vertices and $E \subseteq V \times \alphabet \times V$ is the set of labelled edges. We write $u \xrightarrow{a} v$ to denote an edge $(u, a, v) \in E$. 
\knowledgenewrobustcmd{\Image}{\cmdkl{\textit{Im}}}
We write \AP$\intro*\Image(f)$ to denote the image of any function $f$.

\AP
We denote $\alphabet^*$ as the set of all finite words over $\alphabet$ and $\epsilon$ to denote the empty word. A ""path"" from $u$ to $v$ 
in a graph database $G=(V,E)$ over alphabet $\alphabet$ 
is a (possibly empty) sequence of edges $\sigma=v_0 \xrightarrow{a_1} v_1,
v_1 \xrightarrow{a_2}v_2,\ldots, v_{k-1} \xrightarrow{a_k}v_k$ where each $v_i \xrightarrow{a_{i+1}}v_{i+1}\in E$, $v_0=u, v_k=v$. The ""label"" of the path $\sigma$ from $u$ to $v$ is the word $a_1a_2 \dots a_k$ of edge labels seen along the path. When $k=0$, the "label" of the "path" is $\epsilon$; this is called the ``empty "path"'', and there is always an empty "path" from $u$ to $u$, for every vertex $u$.

\AP
Unless otherwise stated, we assume that regular languages  $L \subseteq \alphabet^*$ are encoded via regular expressions.
A ""regular path query"" (\reintro{RPQ}) over $\alphabet$ is a regular language $L$ given as a regular expression. The semantics of $L$ on a "graph database" $G = (V, E)$ over $\alphabet$, written $L(G)$, is the set of pairs $(u, v) \in V \times V$ such that there is a directed "path" from $u$ to $v$ in $G$ whose "label" belongs to $L$.

\AP
A ""Conjunctive RPQ""  (\reintro{CRPQ}) over $\alphabet$ is an expression $q:=\exists \bar{x} \big((y_1\xrightarrow{L_1}z_1)\land (y_2\xrightarrow{L_2}z_2)\land \ldots \land (y_m\xrightarrow{L_m}z_m) \big)$ where each $L_i$ is an "RPQ" over $\alphabet$. 
We call each $y_i \xrightarrow{L_i}z_i$ an ""atom"". 
Further, $\bar{x}$ is a tuple of variables contained in $\{y_1, z_1, \dots, y_m, z_m\}$ and the variables of $q$ not contained in $\bar x$ are the ""free variables"" of $q$. A query with no "free variables" is called ""Boolean"". 

To simplify the definitions and technical developments we shall henceforth assume that all queries we deal with are "Boolean". However, all our results and bounds hold for non-"Boolean" queries as well (modulo some slightly more cluttered definitions and proofs).

\AP
In the context of "graph databases" a ""conjunctive query"" (\reintro{CQ}) over $\alphabet$ can be defined as a "CRPQ" over $\alphabet$ of the form $\exists \bar{z} \bigwedge_{1\le i \le m} x_i\xrightarrow{L_i}y_i$ where each $L_i$ is a regular expression of the form $a$ for some letter $a \in \alphabet$, which denotes the singleton set $\set{a}$.

\AP
Given "Boolean" "CQ"s $q, q'$, a ""homomorphism"" from $q$ to $q'$ is a mapping $h:\vars(q)\rightarrow \vars(q')$ such that for every "atom" $x\xrightarrow{a}y$ of $q$, we have that $h(x)\xrightarrow{a}h(y)$ is an "atom" of $q'$. We write $q \intro*\homto q'$ when such "homomorphism" exists and $h:q \homto q'$ to denote that $h$ is one such "homomorphism". 
A \reintro{homomorphism} $q \homto G$ on a "graph database" $G=(V,E)$ is defined analogously, as a mapping $h:\vars(q)\rightarrow V$ such that for every "atom" $x\xrightarrow{a}y$ of $q$, we have $h(x) \xrightarrow{a} h(y) \in E$. 
    
\paragraph{Expansions}
Any "UCRPQ" can be equivalently seen as an infinite union of "CQs". We define formally the shape of such "CQs", which we call "expansions".

\AP
"CRPQs" with ""equality atoms"" are queries of the form $q(\bar{x}) = \delta \land I$, 
where $\delta$ is a "CRPQ" (without "equality atoms") and $I$ is a conjunction of "equality atoms" of the form $x=y$. 
We define the binary relation $=_q$ over $\vars(q)$ to be the reflexive-symmetric-transitive closure of the binary relation $\{(x, y) \mid \text{$x=y$ is an "equality atom" in $q$}\}$. 
In other words, we have $x=_q y$ if the equality $x=y$ is forced by the "equality atoms" of $q$. 
Note that every "CRPQ" with "equality atoms" $q(\bar{x}) = \delta \land I$ is equivalent to a "CRPQ" without "equality atoms"  $q^{\collapse}$, 
which is obtained from $q$ by collapsing each equivalence class of the relation $=_q$ into a single variable. 
This transformation gives us a \emph{canonical} renaming from $\vars(q)$ to $\vars(q^{\collapse})$. For instance, $q \defeq \exists x,y,z ~~ x \xrightarrow{K} y \land y \xrightarrow{L} z \land (x = y)$
collapses to $q^{\collapse} \defeq \exists x,z ~~ x \xrightarrow{K} x \land x \xrightarrow{L} z$.

\AP
For any "atom" $x \xrightarrow{L} y$ of a "CRPQ" $q$ and $w=a_1a_2 \dots a_k \in L$, the ""$w$-expansion"" of $x \xrightarrow{L} y$ is the "CQ" $P$ defined as follows:
    (i) If $w\neq \epsilon$ then $P \defeq x \xrightarrow{a_1} z_1 \land z_1 \xrightarrow{a_2} z_2 \land \cdots \land z_{k-1} \xrightarrow{a_k} y$ where each $z_i$ is a fresh variable
    (ii) If $w  = \epsilon$ then $P \defeq (x=y)$.
\AP
By a slight abuse of notation, we write $P\defeq x \xrightarrow{w} y$ to denote such a "$w$-expansion", with $w=a_1\cdots a_k$ where the $z_i$ variables are implicit. For $m \in \Nat$, an ""atom $m$-expansion"" of $x \xrightarrow{L} y$ is a "$w$-expansion" for some $w \in L$ such that $|w| \leq m$. An ""$m$-expansion"" of a "CRPQ" $q$ is the "CQ" resulting from (i) replacing each "atom" with an "atom $m$-expansion" and (ii) removing the "equality atoms" canonically. An ""atom expansion"" is an "atom $m$-expansion" for some $m$, likewise an ""expansion"" is an "$m$-expansion" for some $m$. For instance, for the atom 
$A=(x \xrightarrow{ab^*c} y)$, one example of a "$3$-expansion" would be $x \xrightarrow{a} z_1 \xrightarrow {b}z_2 \xrightarrow{c} y$. $A$ does not have any "$1$-expansion".

\AP
  Let $\intro*\Exp(q)$ denote the (possibly infinite) set of all "expansions" of $q$. and let $\intro* \Expm{m} \subseteq \Exp(q)$ denote the (finite) set of all "$m$-expansions" of $q$.
\AP
  A \reintro{homomorphism} from a ("Boolean") "CRPQ" $q$ to a "graph database" $G=(V,E)$, is defined to be any "homomorphism" $h : \vars(\lambda) \homto G$ for some $\lambda \in \Exp(q)$.
\AP
We further say that $G$ ""satisfies"" $q$, denoted by $G \models q$, if there is such a "homomorphism". We will consider $\Expm{m}$ as a query, which holds true in a "graph database" $G$ if $G \models \lambda$ for some $\lambda \in \Expm{m}$.

\AP
A ""union of CRPQs"" (\reintro{UCRPQ}) is of the form $q \defeq \bigvee_{1\le i \le n} q_i$ , where each $q_i$ is a  "CRPQ". The set of "expansions" of a "UCRPQ" is the union of the "expansions" of the "CRPQs" therein.
Similarly, ""Unions of CQs"" (\reintro{UCQs}) are finite unions of "CQs". 
As mentioned earlier, we assume that "UCRPQs"/"UCQs" are "Boolean", in the sense that they only contain "Boolean" "CRPQs"/"CQs".
A "graph database" "satisfies" a union of "CRPQ" (resp.\ "CQs") if it "satisfies" at least one of its disjuncts.

\AP
For any syntactic object $\+O$, we write $\intro*\vars(\+O)$ to denote the set of variables it contains.
\AP
Let $\intro*\nratoms q$ and $\intro*\nrvars q$ be the number of "atoms" and variables of $q$, respectively.

\AP
Given two "Boolean" queries $q$ and $q'$ (which may be "(U)CQs", "(U)CRPQs", $\Expm{m}$-"expansions", etc.), we write 
$q \intro*\semsubset q'$ if for every "graph database" $G$ such that $G \models q$ we have $G \models q'$, in which case we say that $q$ is ""contained"" in $q'$.
 We say that $q$ and $q'$ are ""equivalent"", written $q \intro*\semequiv q'$, if $q \semsubset q'$ and $q' \semsubset q$.
The following lemma characterizes the "containment" in terms of "expansions" and "homomorphisms".
\begin{lemma}[Folklore]\label{lem:characterization-containment}
  Given two "UCRPQ"s $q$ and $q'$, we have $q \semsubset q'$ if, and only if, for every $\lambda\in \Exp(q)$ there is $\lambda' \in \Exp(q')$ such that there exists a "homomorphism" $\lambda'\homto \lambda$.
\end{lemma}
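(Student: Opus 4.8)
The plan is the standard back-and-forth argument: essentially the classical homomorphism characterisation of "CQ" "containment", lifted to the (infinite) set of "expansions". For the ``if'' direction I would just compose "homomorphisms". Assume that for every $\lambda\in\Exp(q)$ there is some $\lambda'\in\Exp(q')$ with $\lambda'\homto\lambda$, and let $G$ be any "graph database" with $G\models q$. By the definition of satisfaction for "UCRPQs" there is an "expansion" $\lambda\in\Exp(q)$ and a "homomorphism" $g\colon\lambda\homto G$; picking $\lambda'\in\Exp(q')$ and $h\colon\lambda'\homto\lambda$ from the hypothesis, the composite $g\circ h$ is a "homomorphism" $\lambda'\homto G$, and hence $G\models q'$. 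Since $G$ was arbitrary, this gives $q\semsubset q'$.

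For the ``only if'' direction the key tool is the \emph{canonical database} of an "expansion". Suppose $q\semsubset q'$ and fix $\lambda\in\Exp(q)$. Since the "equality atoms" have been removed by the canonical collapse, $\lambda$ is an ordinary "CQ", and I would take $G_\lambda$ to be the "graph database" with vertex set $\vars(\lambda)$ and with an edge $x\xrightarrow{a}y$ for each "atom" $x\xrightarrow{a}y$ of $\lambda$. The identity map on $\vars(\lambda)$ is a "homomorphism" $\lambda\homto G_\lambda$, so $G_\lambda\models\lambda$ and therefore $G_\lambda\models q$ because $\lambda\in\Exp(q)$. From $q\semsubset q'$ we obtain $G_\lambda\models q'$, so some $\lambda'\in\Exp(q')$ admits a "homomorphism" $h\colon\lambda'\homto G_\lambda$; and since the edges of $G_\lambda$ are precisely the "atoms" of $\lambda$, this $h$ is simultaneously a "homomorphism" $\lambda'\homto\lambda$ of "CQs", which is exactly what the statement asks for.

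As this is a folklore statement, I do not expect a real obstacle; the only things that need care are bookkeeping. First, one should unwind the definitions so that ``$G\models q$'' for a "UCRPQ" $q=\bigvee_i q_i$ is literally ``there exist $\lambda\in\Exp(q)$ (an "expansion" of some $q_i$) and a "homomorphism" $\lambda\homto G$'', which is immediate from the preliminaries. Second, one should double-check the degenerate cases of the canonical-database construction — "atoms" expanded along $\epsilon$, and variables merged or left isolated after the collapse — so that a "homomorphism" into $G_\lambda$ and a "homomorphism" into the "CQ" $\lambda$ coincide on the nose; this is routine because "homomorphisms" only constrain the images of "atoms"/edges. Everything else is just composition of maps.
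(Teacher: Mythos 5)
Your proof is correct and is exactly the standard argument this folklore lemma relies on: the paper itself gives no proof, and your two directions (composing homomorphisms for ``if'', and the canonical database $G_\lambda$ of an "expansion" for ``only if'') are the intended ones. The degenerate cases you flag ($\epsilon$-"expansions" and collapsed variables) are handled correctly, since a "homomorphism" into $G_\lambda$ and into the "CQ" $\lambda$ impose the same constraints.
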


\AP
A "UCRPQ" $q$ is ""bounded"" if it is "equivalent" to some "UCQ" $\Phi$. The following key property characterizes "boundedness" in terms of "$m$-expansions".
\begin{proposition}\label{characterization_boundedness}
  \cite[Proposition 3]{barcelo2019boundedness}
  A "UCRPQ" $q$ is "bounded" if, and only if, $q$ is "equivalent" to $\Expm{m}$ for some $m \in \Nat$. 
\end{proposition}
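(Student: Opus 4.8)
The plan is to prove both implications via the containment characterization of \Cref{lem:characterization-containment}, used as a ping-pong between $q$ and a witnessing UCQ. The direction ($\Leftarrow$) is immediate: if $q \semequiv \Expm{m}$ for some $m \in \Nat$, then $\Expm{m}$ is a finite union of CQs, hence a UCQ, so $q$ is bounded by definition.

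For ($\Rightarrow$), assume $q$ is bounded, say $q \semequiv \Phi$ where $\Phi = \bigvee_j \psi_j$ is a finite union of CQs. First note that $\Expm{m} \semsubset q$ holds for \emph{every} $m$: each $m$-expansion of $q$ is an expansion of $q$, so $\Exp(\Expm{m}) \subseteq \Exp(q)$, and \Cref{lem:characterization-containment} (witnessed by the identity homomorphism) yields $\Expm{m} \semsubset q$. It therefore remains to exhibit a single $m$ with $q \semsubset \Expm{m}$.

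To find such an $m$, apply \Cref{lem:characterization-containment} to $\Phi \semsubset q$: for every disjunct $\psi_j$ there is an expansion $\lambda_j \in \Exp(q)$ with $\lambda_j \homto \psi_j$ --- here we use that a CQ, being a CRPQ all of whose atoms carry singleton languages and which has no equality atoms, is its own unique expansion, so the relevant expansions of $\Phi$ are exactly the $\psi_j$. Each $\lambda_j$ is obtained by choosing, for each atom $y \xrightarrow{L} z$ of the corresponding CRPQ of $q$, a word $w \in L$; letting $m_j$ be the maximum length of these words gives $\lambda_j \in \Expm{m_j}$. As $\Phi$ has finitely many disjuncts, set $m := \max_j m_j$ (with $m := 0$ if $\Phi$ is the empty union, which forces $\Exp(q) = \emptyset$ since every expansion is satisfiable in its own canonical model); then $\lambda_j \in \Expm{m}$ for all $j$. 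Finally, for arbitrary $\lambda \in \Exp(q)$, \Cref{lem:characterization-containment} applied to $q \semsubset \Phi$ gives some $\psi_j$ with $\psi_j \homto \lambda$, and composing yields $\lambda_j \homto \psi_j \homto \lambda$ with $\lambda_j \in \Expm{m}$; by \Cref{lem:characterization-containment} again this proves $q \semsubset \Expm{m}$. Together with $\Expm{m} \semsubset q$ we conclude $q \semequiv \Expm{m}$.

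I do not expect a genuine obstacle here; the argument is a routine double use of \Cref{lem:characterization-containment}. The points needing care are only: (i) noticing that the containment $\Expm{m} \semsubset q$ is automatic, so that just one inclusion must be established; (ii) phrasing \Cref{lem:characterization-containment} correctly when one side is a UCQ, by treating each CQ as its own expansion; and (iii) invoking the finiteness of $\Phi$ to take the maximum bound $m = \max_j m_j$.
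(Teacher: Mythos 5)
Your proof is correct. The paper does not prove this statement itself --- it imports it by citation from Barcel\'o et al.\ --- so there is nothing to compare against; your argument (the automatic containment $\Expm{m}\semsubset q$, plus the ping-pong through Lemma~\ref{lem:characterization-containment} using that each disjunct of the witnessing UCQ is its own unique expansion and taking $m$ as the maximum over the finitely many witnesses $\lambda_j$) is the standard and sound way to establish it.
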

\AP
The \intro*\IsBdd ~problem for a class $\+C \subseteq \textup{UCRPQ}$ of queries is the problem of, given a query $q \in \+C$ whether $q$ is "bounded". This is the main problem we will study in this paper. While the "boundedness problem" has been shown to be decidable for "UCRPQ" (as stated below), we will focus on small fragments thereof.

\begin{theorem}\cite[Theorem $11$]{barcelo2019boundedness}
\label{thm-earlier-bdd-result}
\IsBdd ~for "UCRPQs" is "ExpSpace"-complete. Further, the upper-bound holds also in the presence of two-way navigation, and the lower bound holds already for "CRPQs".
\end{theorem}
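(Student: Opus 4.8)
\emph{Upper bound.} The plan is to obtain it from the "$m$-expansion" characterization of "boundedness" together with a reachability search over a product automaton. By Proposition~\ref{characterization_boundedness}, a "UCRPQ" $q$ is "bounded" iff $q \semequiv \Expm{m}$ for some $m \in \Nat$; since $\Expm{m} \semsubset q$ always holds, what matters is the direction $q \semsubset \Expm{m}$, which by Lemma~\ref{lem:characterization-containment} says that every "expansion" of $q$ is "homomorphically" covered by some "$m$-expansion". The first step is a pumping argument on accepting runs of the "NFA"s of the atoms, synchronized along the variable-sharing pattern of $q$: if $q$ is "bounded" then already $q \semequiv \Expm{m^*}$ for a doubly-exponential $m^* = m^*(|q|)$, since otherwise a cycle in a run witnessing the failure of coverage by $\Expm{m^*}$ can be iterated to defeat $\Expm{m'}$ for every $m' \geq m^*$, making $q$ "unbounded"; in particular a "bounded" $q$ has an equivalent "UCQ" of triply-exponential size, namely $\Expm{m^*}$. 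Since $m^*$ is far too large to enumerate, the decision procedure detects \emph{directly} whether $q$ is "unbounded": this happens iff some "expansion" contains a synchronized family of iterable Kleene-star cycles whose iterates are, past some iteration count, covered by no "expansion" of $q$ of size $\leq m^*$. That is a reachability problem over a product structure of size exponential in $|q|$ --- its states record, per atom, the current "NFA" state and the variable identifications committed to so far --- in which one searches for a reachable ``bad cycle'', badness being certified by a complementation-based, hence exponential, automaton check that no short "expansion" "homomorphically" maps into the cycle's iterates. The search runs in nondeterministic polynomial space in the size of this exponential structure, hence in "ExpSpace". Two-way navigation costs nothing extra: run the automata over $\alphabet$ together with its inverse letters and let "homomorphisms" traverse reversed edges.

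\emph{Lower bound.} For "ExpSpace"-hardness, which I want already for plain "CRPQs", the plan is to reduce from the "containment problem" for "CRPQs", known to be "ExpSpace"-complete \cite{CalvaneseGLV00}. Given "CRPQs" $q_1, q_2$, I would build in polynomial time a single "CRPQ" $q$ that is "bounded" iff $q_1 \semsubset q_2$: the "expansions" of $q$ encode, over an auxiliary padding alphabet, the search for an "expansion" of $q_1$ that is \emph{not} covered by any "expansion" of $q_2$ --- a test that inherently involves complementing automata extracted from $q_2$, which is where the exponential gap comes from --- arranged so that $q$ admits an unbounded family of mutually incompressible "expansions" exactly when $q_1 \notsemsubset q_2$, and collapses to a "bounded" query otherwise.

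\emph{Main obstacle.} The crux is the upper bound: establishing that whenever $q$ is "unbounded" this is witnessed by a \emph{finitely describable} pumpable structure inside the exponential product, and attaching to it a non-compressibility certificate that can be checked without ever leaving "ExpSpace" --- in particular without enumerating $m$ up to $m^*$ or materializing $\Expm{m^*}$. On the lower-bound side the delicate part is engineering the padding so that its interaction with "homomorphisms" makes a large "expansion" compressible precisely when its $q_1$-component is subsumed by $q_2$, and not through an unintended collapse.
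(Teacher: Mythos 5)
This theorem is imported verbatim from Barcel\'o et al.\ and the paper under review contains no proof of it, so there is no internal argument to compare yours against; I can only assess your sketch on its own terms. Your two starting points --- that a bounded $q$ already satisfies $q \semequiv \Expm{m^*}$ for a doubly-exponential $m^*$, and that hardness should come from the containment problem --- are both consistent with the cited work. But the upper bound as you describe it has a genuine gap, and it is exactly the one you yourself flag as the crux. Unboundedness is not witnessed by a single pumpable cycle whose iterates can be tested for coverability one at a time: the covering expansion drawn from $\Expm{m}$ may change with the pump count, may spread its atoms across several pumped regions of several atoms simultaneously, and may fold homomorphically in ways governed by the interaction of all the Kleene stars of the conjunction at once. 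Deciding whether the size of the needed covering expansion admits a uniform bound over the infinite family of all expansions is precisely a \emph{limitedness} problem, and the actual proof solves it by compiling the query into an alternating two-way cost (distance) automaton over trees and invoking the decidability of limitedness for such automata --- a substantial body of theory (Kirsten, Colcombet--L\"oding) that cannot be replaced by a reachability search with a per-cycle complementation check without essentially re-proving it. Your proposal names the certificate (``no short expansion maps into the cycle's iterates'') but gives no mechanism for verifying, inside ExpSpace, a property quantified over infinitely many iterates; that verification is the entire content of the theorem.

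On the lower bound, reducing from CRPQ containment is the right instinct --- the present paper uses exactly the pattern ``$q_1\land q_2$ is bounded iff $q_1 \semsubset q_2$'' for its own $\Pi^p_2$ lower bound (\Cref{lem-tfae-sat-cont-bdd}) --- but a black-box reduction needs two properties you do not secure: $q_1$ itself must be bounded (otherwise boundedness of $q_1\land q_2$ reveals nothing about containment), and non-containment must \emph{force} unboundedness, which requires a gadget such as the $s\cdot s^*$ self-loops of \Cref{fig:q_2} whose short cycles cannot absorb longer ones homomorphically. The ExpSpace-hardness of containment can be arranged with a bounded left-hand query, so this route is salvageable, but as written it is a plan rather than a proof.
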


\section{Main Results}
\label{sec-Results}
%%%Main Results

The lower bound in \Cref{thm-earlier-bdd-result} uses regular expressions that are rather complex. However, as explained in \Cref{sec-Intro}, queries used in practice often contain simple regular expressions. 
Our results focus on "UCRPQs" where atoms are of the form $a^*$, where $a \in \alphabet$ is a label, and more generally of the form $w^*$, where $w \in \alphabet^*$. We next formally define this restricted class of regular expressions.

\AP
Let $\intro*{\reg}(\alphabet)$ denote the set of all regular expressions over $\alphabet$ using concatenation ($\Box\cdot \Box$), union ($\Box + \Box$), Kleene star ($\Box^*$) from the basic letter-expressions ($a$, for each $a \in \alphabet$) and from the symbol $\epsilon$ denoting the empty string. 
\AP
Let $\intro*\aSingleton \subseteq \reg(\alphabet)$ denote regular expressions of the form $a\in \alphabet$ (atomic regular expressions) and $\intro*\aStar$ denote regular expressions of the form $a^*$ where $a\in \alphabet$. 
We write $\aStar[A] \subseteq \reg(\alphabet)$ for some $A\subseteq \alphabet$ to denote the regular expressions of the form $a^*$ where $a\in A$.
Similarly, let $\intro*\wSingleton \subseteq \reg(\alphabet)$ denote regular expressions of the form $w \in \alphabet^*$ ("ie", expressions of the form $w = a_1 \dotsb a_n$ with $a_i \in \alphabet$) and $\intro*\wStar$ denote regular expressions of the form $w^*$ where $w\in \alphabet^*$. 

\AP
Let $\intro*\SF$ (star-free expressions) denote the set of all ""star-free regular expressions"" over $\alphabet$, that is, all expressions formed using $\set{\epsilon, (a)_{a \in \alphabet}, +, \cdot}$. We will further consider the class $\intro*\SSF$ of ""succinct star-free expressions"" which are "star-free regular expressions" which additionally allow for expressions of the form $w^n$ and $w^{\le n}$,  where $w\in \alphabet^*$ and $n$ is encoded in binary. 
The expressions $w^n$ and $w^{\le n}$  are succinct representations of 
$\underbrace{w \dotsb w}_{n \text{ times}}$ and $\underbrace{(\epsilon + w) \dotsb (\epsilon + w)}_{n \text{ times}}$.

\begin{observation}
    \label{prop-basic-inclusion}
    Observe that $\aSingleton \subsetneq \SF \subsetneq \SSF$ and $\aStar \subsetneq \wStar$. However, $\SF$ and $\SSF$ are equivalent in terms of expressive power.
\end{observation}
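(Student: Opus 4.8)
The plan is to verify the four assertions by exhibiting the relevant (syntactic) inclusions, separating them with concrete witnesses, and then observing that the two succinct abbreviations do not enlarge the class of definable languages.

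The syntactic inclusions are immediate from the definitions: a single letter $a$ is one of the generators of $\SF$, so $\aSingleton \subseteq \SF$; the expression $a^*$ is the special case $w^*$ with $w=a$ a one-letter word, so $\aStar \subseteq \wStar$; and $\SSF$ is obtained from $\SF$ by adding formation rules, so $\SF \subseteq \SSF$. For strictness I would use the witnesses $a+b \in \SF \setminus \aSingleton$, $(ab)^* \in \wStar \setminus \aStar$, and $a^{\le 1} \in \SSF \setminus \SF$ (the latter uses a formation rule unavailable in $\SF$). If one instead reads $\subsetneq$ up to semantic equivalence, the first two witnesses still separate the classes, while $\SF \subsetneq \SSF$ must be read syntactically --- which is consistent with, and indeed the reason for, the last sentence of the statement.

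For the expressive-power equivalence, I would show that every language denoted by an $\SSF$ expression is already denoted by an $\SF$ expression; the converse is trivial from $\SF \subseteq \SSF$. This follows by a straightforward structural induction in which every occurrence of $w^n$ is replaced by the concatenation of $n$ copies of $w$ and every occurrence of $w^{\le n}$ by the concatenation of $n$ copies of $\epsilon + w$ --- exactly the semantics declared for these operators. Since $\SF$ contains $\epsilon$ and every one-letter expression and is closed under concatenation and union, the resulting expression lies in $\SF$ and denotes the same language.

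I do not expect any real obstacle here: the argument is a routine unfolding of definitions. The only point requiring a little care is the bookkeeping between the two possible readings of $\subsetneq$ (syntactic containment of sets of expressions versus containment of the denoted language classes), and the statement is internally consistent precisely because the last sentence records the semantic collapse of $\SF$ and $\SSF$ separately.
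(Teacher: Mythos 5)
Your proposal is correct and matches the intended justification: the paper states this as an unproved observation, precisely because it follows by unfolding the definitions exactly as you do (syntactic inclusions, simple separating witnesses, and expansion of $w^n$ and $w^{\le n}$ into star-free form, at the cost of an exponential blow-up that is irrelevant for expressive power). Your remark that $\SF \subsetneq \SSF$ is a syntactic strictness while the last sentence records the semantic collapse is exactly the right reading.
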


\AP
The ""size@@SSF"" of an $\SSF$ expression is the number of symbols needed for its encoding, for example the "size@@SSF" of $w^n$ is $|w| + \lceil\log(n)\rceil$. The "size@@SSF" of an expression $w^*$ is $|w|$.

\AP
The ""size"" of an "atom" is the "size@@SSF" of the regular expression therein. The \reintro{size} of a "CRPQ" or "CQ" is the sum of "sizes" of all its "atoms" and the \reintro{size} of a "UCRPQ" or "UCQ" is the sum of the "sizes" of all its "CRPQs" or "CQs".

\medskip
\AP
For any given class $C$ of regular expressions (denoting regular languages), let $\ucrpq{C}$ be the class of "UCRPQs" whose "atoms" contain languages specified by expressions from $C$. We write $\ucrpq{C_1,C_2}$ to denote $\ucrpq{C_1 \cup C_2}$ and $\crpq{C}$ to denote the subclass of $\ucrpq{C}$ consisting of "CRPQs".
For instance, a query in $\crpq{\SSF,\wStar}$ can have an edge label of the form $(ab)^n$ (where $n$ is written in binary) or $(abb)^*$.

The goal of the paper is to prove that  \IsBdd ~for \ucrpq{\SSF,\wStar} is \piPTwo-complete and, further, that the bounded query can be produced efficiently.

\begin{theorem}
\label{thm-is-bdd}
\IsBdd ~for \ucrpq{\SSF,\wStar} is \piPTwo-complete. The problem remains \piPTwo-hard for \crpq{\aSingleton, \aStar}.
    Moreover, if $q\in \ucrpq{\SSF,\wStar}$ is "bounded" then it is equivalent to a query $q'\in \ucrpq{\SSF}$ of linear "size" which can be computed.
\end{theorem}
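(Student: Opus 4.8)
The plan is to reduce \IsBdd\ for $q\in\ucrpq{\SSF,\wStar}$ to a single "containment" check $q\semsubset q^+$ against a canonical ``truncation'' $q^+\in\ucrpq{\SSF}$, and then to pin down the complexity of that check using the succinct-automata machinery of \Cref{sec-Automata}.

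\textbf{The canonical truncation and the combinatorial core.}
Given $q$, I would build $q^+$ by replacing every atom $x\xrightarrow{w^*}y$ with $x\xrightarrow{w^{\le m}}y$, where $m$ is a threshold read off from $q$: it should exceed the length of every "expansion path" that can be formed between two variables of $q$ through the remaining atoms. Such a threshold can be written in binary reusing the exponents already present in $q$, so $q^+$ has linear "size" and is computable in polynomial time; and since every $\SSF$ expression denotes a finite language, $q^+$ has finitely many "expansions" and is hence "bounded". As every word of $w^{\le m}$ is also a word of $w^*$ we get $\Exp(q^+)\subseteq\Exp(q)$, hence $q^+\semsubset q$ by \Cref{lem:characterization-containment}; so it remains to prove the converse, namely that $q$ being "bounded" implies $q\semsubset q^+$ (equivalently $q\semequiv q^+$), which together with the observations above also yields the ``moreover'' clause. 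This is where the work lies: assuming $q$ "bounded", \Cref{characterization_boundedness} gives $q\semequiv\Expm{m'}$ for some $m'$, and I would then show, by a folding/pumping argument, that whenever an "expansion" of $q$ unwinds some atom $x\xrightarrow{w^*}y$ more than $m$ times it is already dominated (via "homomorphism") by another "expansion" of $q$ that unwinds every $w^*$-atom at most $m$ times. The intuition is that a long $w$-labelled \emph{simple} path inside an "expansion" admits no nontrivial self-map, so any dominating "homomorphism" must route it onto a path built from the bounded, non-$w^*$ part of the query, whose length is capped by $m$ --- and it is precisely the bound $m'$ furnished by "boundedness" that lets one relate arbitrary "expansions" to this bounded remainder. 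I expect this folding argument to be the main obstacle of the whole proof.

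\textbf{The \piPTwo\ upper bound.}
It then suffices to decide $q\semsubset q^+$ in \piPTwo. By \Cref{lem:characterization-containment}, $q\notsemsubset q^+$ holds iff there is $\lambda\in\Exp(q)$ into which no $\lambda^+\in\Exp(q^+)$ maps. I would first argue that if such a counterexample exists then one exists whose parameters --- the powers chosen for the $w^*$-atoms, the branches chosen inside the $\SSF$ expressions --- are at most singly exponential, since whether some $\lambda^+$ maps into $\lambda$ depends on those parameters only through comparisons with the (fixed) lengths of the words available in $q^+$. Hence $\lambda$ is representable as a "succinct CQ" of polynomial "size": the variable graph of $q$ with each atom annotated by a word written succinctly as a product of powers $u^k$ with $k$ in binary. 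The inner question ``is there $\lambda^+\in\Exp(q^+)$ with $\lambda^+\homto\lambda$?'' then amounts to an existential guess of a succinctly represented $\lambda^+$ together with a "homomorphism" between succinct CQs, which --- using the result of \Cref{sec-Automata} that the "membership problem" for "succinct-NFA" is in "NP" --- can be verified in "NP". Thus $q\notsemsubset q^+$ is decided by an existential guess followed by a "coNP" test, placing it in "SigmaP2", and therefore \IsBdd\ is in \piPTwo.

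\textbf{The \piPTwo\ lower bound.}
For \piPTwo-hardness already over \crpq{\aSingleton,\aStar}, I would reduce from the \piPTwo-complete problem of deciding whether a $\forall\exists$-quantified Boolean formula $\forall\bar X\,\exists\bar Y\,\varphi(\bar X,\bar Y)$ is valid. The reduction produces a "CRPQ" $q_\varphi$ using only the labels $a$ and $a^*$ whose "expansions" encode an assignment to $\bar X$ in the numbers of times certain $a^*$-atoms are unwound, while the remaining $a$-atoms form clause-checking gadgets designed so that such an "expansion" is dominated by a short "expansion" precisely when the encoded assignment to $\bar X$ extends to some $\bar Y$ satisfying $\varphi$. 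Then $q_\varphi$ is "unbounded" iff some assignment to $\bar X$ admits no satisfying extension, "ie" iff $\exists\bar X\,\forall\bar Y\,\neg\varphi$ holds, so $q_\varphi$ is "bounded" iff $\forall\bar X\,\exists\bar Y\,\varphi$ is valid. The difficulty is to realise the quantifier alternation and the clause tests using only the extremely weak expressions $a$ and $a^*$, which forces all the bookkeeping to be carried out through the combinatorics of "homomorphisms" between "expansions".
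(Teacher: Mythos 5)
Your proposal takes essentially the same route as the paper: your canonical truncation $q^+$ is its $\qm{\Zbound}$, your folding lemma is its \Cref{lem:new:q_equiv_q(n)} (which you rightly flag as the main obstacle), your reduction of the counterexample search to succinctly representable expansions is its \Cref{lem:new:generating_algo_pi2p}, your $\Sigma^p_2$ non-containment test via succinct-CQ homomorphism checks matches its algorithm built on \Cref{thm:containment-succinctCQ}, and your lower bound is the same $\forall\exists$-QBF reduction. The only substantive divergence is in the intuition for the folding step: the paper does not argue that a long $w$-path must be routed onto the non-recursive part, but instead uses a coloring/pigeonhole argument (\Cref{cl:boundimage}) that also contracts long purely-red intervals whose homomorphic image lies on other recursive atom expansions.
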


\AP
For proving \Cref{thm-is-bdd}, we will use the "membership problem" for non-deterministic finite automata (NFA) whose transitions may be succinctly represented as $q \xrightarrow{w^n} q'$, where $n$ is encoded in binary,
which we call "succinct-NFA". We will show in \Cref{sec-Automata} that the "membership problem" for such succinctly represented automata is still in "NP", which is a result of independent interest.

\newcommand{\memsuccinctNFANP}{%
    The "membership problem" for "succinct-NFA" is in "NP".%
}
\begin{theorem}
    \label{prop:mem-succinct-NFA-NP}
    \memsuccinctNFANP
\end{theorem}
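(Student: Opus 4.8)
The plan is to show that an accepting run of a "succinct-NFA" $\automata$ on an input word $u$ of length $N$, although it may be exponentially long, admits a polynomial-size certificate that can be verified in polynomial time. A run decomposes into a sequence of "moves", each either an ordinary transition $p \xrightarrow{a} p'$ consuming one letter, or a "succinct-transition" $p \xrightarrow{w^n} p'$ consuming the factor $w^n$ of $u$. The first, and main, obstacle is that a succinct-transition $p \xrightarrow{w^n} p'$ with $n$ written in binary forces a block of $n|w|$ consecutive input positions to spell out $w^n$; we must argue that checking ``$u\factor{i}{j} = w^n$'' can be done in polynomial time in $|u| + |w| + \log n$, and that the number of succinct-transitions used in a "minimal"/non-redundant accepting run is polynomially bounded.

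\textbf{Bounding the certificate.} First I would observe that we may restrict attention to accepting runs in which every maximal block of consecutive succinct-transitions covers a contiguous factor of $u$, and that between any two input positions there is no need to ``loop'' through the state set more than once without consuming a letter; hence the run has at most $N + 1$ maximal ``segments'', each segment being handled either by a single letter-transition or by a short sequence of succinct-transitions whose total length is at most $|\automata|$ per segment (a segment that stays at the same input position while cycling through states can be short-circuited, as this is just reachability in the transition graph restricted to zero-consuming moves; here a succinct-transition $p\xrightarrow{w^n}p'$ with $w = \epsilon$ consumes nothing). Consequently the guessed certificate is a list of at most polynomially many moves, each of polynomial size: the state names, and for succinct-transitions the word $w$, the binary number $n$, and the factor endpoints $i \le j$ of $u$.

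\textbf{Verifying a succinct-transition.} The key subroutine checks, given $i \le j$ and $(w,n)$, whether $u\factor{i}{j}$ equals $w^n$ (so $j - i + 1 = n|w|$ and the factor is $|w|$-periodic with period word $w$). This is done in polynomial time: verify the length constraint on the binary-encoded $n$, then check that $u\factor{i}{i+|w|-1} = w$ and that $u$ has period $|w|$ on the whole range $\factor{i}{j}$, i.e.\ $u[k] = u[k+|w|]$ for all $i \le k \le j-|w|$. The $\NP$ algorithm then guesses the certificate, checks it is a genuine sequence of transitions of $\automata$ starting at the "initial state" and ending in a "final state", checks the consumed factors partition $u$ from left to right with no gaps or overlaps, and runs the above subroutine on each succinct-transition; all of this is polynomial time, so the "membership problem" for "succinct-NFA" is in $\NP$, proving \Cref{prop:mem-succinct-NFA-NP}.

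Before locking this in, I would double-check the subtle points: (i) that disallowing repeated ``zero-consuming cycles'' is sound, which follows because such a cycle can be excised from an accepting run without changing the word read or the reachability of a final state; (ii) that the number of moves is genuinely polynomial even when many succinct-transitions are chained, which is handled by the per-segment bound of $|\automata|$ together with the $N+1$ segments; and (iii) that a succinct-transition with empty $w$, or with $n = 0$, is treated as a zero-consuming move and hence absorbed into the segment bound. With these checked, the certificate is polynomial and poly-time verifiable, giving the claimed $\NP$ upper bound.
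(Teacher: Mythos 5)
Your argument overlooks the crucial feature of the "membership problem" as defined here: the input word is not given explicitly but as a pair $(v,m)$ with $m$ in \emph{binary}, so the word $v^m$ being tested has length $m\cdot|v|$, exponential in the input size. Your certificate is a list of the moves of an accepting run, bounded via ``$N+1$ segments'' where $N$ is the length of the input word; but $N = m\cdot|v|$ here, and an accepting run may genuinely need exponentially many letter-consuming transitions. Concretely, take the "succinct-NFA" with a single state $q_0$ (both initial and final) and the single transition $q_0 \xrightarrow{a^1} q_0$: accepting $v^m$ for $v=a$ requires exactly $m$ transitions, each consuming one letter, so your move list has $m$ entries and is exponentially long. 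The same issue infects your verification subroutine: checking $u[k]=u[k+|w|]$ for every $k$ in the consumed range takes time proportional to the length of that range, which can be $n\cdot|w|$ with $n$ in binary (this part is repairable by periodicity arguments, but the first problem is not: no local check can salvage an exponentially long witness). One must \emph{count} repetitions of cycles in the automaton rather than list the transitions one by one.

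That counting step is exactly what the paper's proof supplies, and it is the part your proposal is missing. The paper first builds, in polynomial time (using a Parikh-image computation for one-letter NFAs to decide which succinct-transitions can connect which residues), a product automaton $\automata'$ with states $Q\times\{0,\dots,|v|-1\}$ tracking the current position modulo $|v|$, so that $v^m\in\AutomataLang{\automata}$ if and only if $\AutomataLang{\automata'}$ contains a word of length $m\cdot|v|$. It then replaces each transition $q\xrightarrow{w^n}p$ by a counter increment of $|w|\cdot n$ and invokes the \np\ bound for reachability in succinct one-counter automata \cite{DBLP:conf/concur/HaaseKOW09}, which is precisely the tool that certifies exponentially long runs with polynomial-size witnesses. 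As it stands, your proof only establishes the much easier variant in which the input word is written out in full.
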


\AP
If a query $q$ in \ucrpq{\SSF, \aStar} is not bounded, then it is natural to ask if it can be bounded ``as much as possible''. Intuitively, we want a query $q'$ that is equivalent to $q$ where the "atoms" of $q'$ are either $\SSF$ expressions or $a^*$ only for those letters $a\in \alphabet$ that cannot be bounded in $q$. Formally, for any given $A \subseteq \alphabet$, we say $q$ is $\intro*\bddIn{A}$ if it is equivalent to a query from $\ucrpq{\SSF,\aStar[\bar A]}$  for $\bar A = \alphabet \setminus A$. Observe that such $q$ is "bounded" "iff" it is \bddIn{\alphabet}.
\AP
The problem of checking whether $q \in \ucrpq{\SSF, \aStar}$ is $\bddIn{A}$ is called  the $\intro*\bddInPb{A}$ problem.

\newcommand{\thmMaxbddresult}{%
    For the class \ucrpq{\SSF,\aStar} of queries:
    \begin{enumerate}
    \item \label{thm-Max-bdd-result:1} The $\bddInPb{A}$ problem is \piPTwo-complete.  The problem remains \piPTwo-hard even for \crpq{\aSingleton,\aStar}.
    Moreover, an equivalent query $q' \in \ucrpq{\SSF,\aStar[\bar A]}$ of linear size can be computed.
    \item \label{thm-Max-bdd-result:2}
    There is a unique maximal $A \subseteq \alphabet$ such that $q$ is $\bddIn{A}$, and there is a "PiP2" algorithm for finding it.
    \end{enumerate}
}
\begin{theorem}
    \label{thm-Max-bdd-result}
    \thmMaxbddresult
\end{theorem}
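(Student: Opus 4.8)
The plan is to base both items of \Cref{thm-Max-bdd-result} on a characterization of $\bddIn{A}$ that parallels \Cref{characterization_boundedness}. For $q\in\ucrpq{\SSF,\aStar}$ and $A\subseteq\alphabet$, let $q_m$ be obtained from $q$ by replacing every "atom" $x\xrightarrow{a^*}y$ with $a\in A$ by $x\xrightarrow{a^{\le m}}y$, leaving all other "atoms" (in particular the $b^*$ with $b\in\bar A$) unchanged; then $q_m\in\ucrpq{\SSF,\aStar[\bar A]}$, and $q_m\semsubset q$ holds unconditionally, since every "expansion" of $q_m$ is also an "expansion" of $q$ (apply \Cref{lem:characterization-containment} with the identity "homomorphism"). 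I would first prove: \emph{$q$ is $\bddIn{A}$ if and only if $q\semequiv q_m$ for some $m\in\Nat$, and moreover there is a threshold $M=M(q,A)$, computable from $q$ and of at most exponential magnitude, such that $q$ is $\bddIn{A}$ iff $q\semequiv q_M$.} The ``if'' direction is immediate since $q_m\in\ucrpq{\SSF,\aStar[\bar A]}$. For the converse, pick $\psi\in\ucrpq{\SSF,\aStar[\bar A]}$ with $\psi\semequiv q$; since $\psi$ uses no Kleene star on letters of $A$, the number of $A$-labelled edges in any "expansion" of $\psi$ is bounded, and, composing the "homomorphisms" witnessing $q\semsubset\psi$ and $\psi\semsubset q$, the periodicity/pumping argument on the $a^*$-"atoms" of $q$ that underlies \Cref{thm-is-bdd} shows that if $q\semsubset q_m$ holds for some $m$ then it already holds for $m=M$.

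With the characterization in hand, I would prove Part~\ref{thm-Max-bdd-result:1} as follows. For the upper bound, compute $M=M(q,A)$ and decide $q\semsubset q_M$ (the reverse containment being automatic). By \Cref{lem:characterization-containment} this asks whether every "expansion" $\lambda$ of $q$ is covered by some "expansion" $\mu$ of $q_M$; the pumping argument of \Cref{thm-is-bdd} lets us restrict the $\forall$ to "expansions" $\lambda$ whose "atoms" are expanded to lengths bounded by an exponential in $|q|$, and a covering $\mu$ can then be taken within the same bound, so that both $\lambda$ and $\mu$, although exponentially long "CQs", are each described by a polynomial-size vector of binary exponents. Whether $\mu\homto\lambda$ (and whether the relevant words lie in the relevant languages) is an "NP" question by the techniques behind \Cref{prop:mem-succinct-NFA-NP}, so the whole test has the shape ``$\forall$ polynomial-size object $\exists$ polynomial-size object with an "NP" matrix'', i.e.\ it is in \piPTwo, exactly as in the proof of \Cref{thm-is-bdd}; and $q'\defeq q_M\in\ucrpq{\SSF,\aStar[\bar A]}$ is the promised equivalent query, of linear "size" and computable in polynomial time once $\bddIn{A}$ is known to hold. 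For the lower bound, take $A$ to be the set of letters occurring in $q$: then $\bddInPb{A}$ is literally the \IsBdd\ problem, which \Cref{thm-is-bdd} already shows is \piPTwo-hard for \crpq{\aSingleton,\aStar} (its hard instances involve only letters that occur).

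For Part~\ref{thm-Max-bdd-result:2}, the key lemma to prove is \emph{closure under union}: if $q$ is $\bddIn{A_1}$ and $\bddIn{A_2}$ then $q$ is $\bddIn{A_1\cup A_2}$. Using the characterization, write $q\semequiv q^{(1)}$ where $q^{(1)}$ caps the $A_1$-"atoms" at some $m_1$; since $q^{(1)}$ is "equivalent" to $q$ and being $\bddIn{A_2}$ is invariant under "equivalent" rewriting, $q^{(1)}$ is $\bddIn{A_2}$, hence $q^{(1)}\semequiv r$ where $r$ additionally caps the still-recursive $A_2$-"atoms" of $q^{(1)}$ at some $m_2$. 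Now $r$ caps every letter of $A_1\cup A_2$, and relaxing every cap to $\max(m_1,m_2)$ only weakens the query, so $r\semsubset q[a^*\mapsto a^{\le \max(m_1,m_2)} : a\in A_1\cup A_2]\semsubset q$, which forces all of these queries to be "equivalent" and exhibits $q$ as $\bddIn{A_1\cup A_2}$. Together with the downward closure of $\{A : q\text{ is }\bddIn{A}\}$ (immediate since $\aStar[\bar A]\subseteq\aStar[\bar{A'}]$ whenever $A'\subseteq A$), this shows the family has a unique maximum $A^\star$, and in fact $A^\star=\{a\in\alphabet : q\text{ is }\bddIn{\set{a}}\}$ (each such singleton lies in the family by downward closure, and iterating closure under union over all of them yields $\bddIn{A^\star}$). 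Hence $A^\star$ is obtained by deciding $\bddIn{\set{a}}$ for each of the polynomially many letters $a$ occurring in $q$ — every other letter of $\alphabet$ being trivially in $A^\star$ — each decision being in \piPTwo by Part~\ref{thm-Max-bdd-result:1}.

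I expect essentially all the difficulty to sit in the characterization lemma and the upper bound of Part~\ref{thm-Max-bdd-result:1}: pumping the threshold $M$ down to at most exponential size on the $a^*$-"atoms", and keeping the fixed-$M$ containment test inside \piPTwo\ by appealing to the "NP" "membership problem" for "succinct-NFA". This amounts to re-running the proof of \Cref{thm-is-bdd} with the \ucrpq{\SSF,\wStar} query $q_m$ in place of the "UCQ" $\Expm{m}$, the one new subtlety being that $q_m$ still carries genuine recursion on $\bar A$, so the inner comparison is a containment between two "UCRPQs" rather than between a "UCRPQ" and a "UCQ". The lower bound and Part~\ref{thm-Max-bdd-result:2} should then be comparatively routine.
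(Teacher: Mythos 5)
Your proposal follows the same overall architecture as the paper's: \Cref{characterization_Aboundedness} together with the threshold of \Cref{lem:new:q_equiv_q(A,n)} gives the characterization via capped queries $\qAm{A,m}$; the \piPTwo\ upper bound for Part~1 is obtained by re-running the machinery of \Cref{thm-is-bdd} (the paper packages this as \Cref{lem:new:generating_algo_pi2p(A,n)} and \Cref{thm-pip2-upper-bound-qAm}); the lower bound comes from specializing to the full alphabet, where $\bddInPb{\alphabet}$ coincides with the \IsBdd~problem; and the maximal set is assembled as the union of the good singletons. You also correctly flag the one real subtlety of the upper bound, namely that for $A\subsetneq\alphabet$ the inner containment is between two queries that still carry $b^*$-recursion for $b\notin A$; the paper dispatches this just as tersely, by noting that the contraction arguments only touch the $a^*$-expansions with $a\in A$. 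The genuine divergence is in the closure-under-union lemma (\Cref{thm-A-bdd-confluence}). The paper proves it by composing homomorphisms $\lambda_A\homto\lambda_B\homto\lambda$ and arguing, via a size-minimality and cycle-cutting argument, that a minimal $\lambda_A$ cannot contain a $B$-atom expansion longer than $N_q\cdot\nratoms{q}$. You instead apply the characterization twice --- to $q$ for $A_1$, and then to the capped query $q^{(1)}=q[A_1\mapsto m_1]$ (still a member of $\ucrpq{\SSF,\aStar}$, so the characterization legitimately applies) for $A_2$ --- and conclude by the sandwich $q\semequiv r\semsubset q[A_1\cup A_2\mapsto\max(m_1,m_2)]\semsubset q$, where the two outer containments hold via identity homomorphisms on expansions. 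This is correct and arguably cleaner, since it replaces the combinatorial argument by two invocations of an already-proved lemma; the trade-off is that your cap $m_2$ is whatever the characterization yields for the derived query $q^{(1)}$ rather than an explicit function of $q$, which is harmless here because the algorithm for Part~2 only needs the existence of the maximal $A$ and then outputs $\qAm{A,\Zbound}$.
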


\section{Succinct Automata and Succinct CQs}
\label{sec-Automata}
%Automata

In this section we study a problem, of independent interest, which will be necessary to obtain our upper bounds (more precisely, the upper bound of \Cref{thm-is-bdd}, covered by \Cref{thm:containment-succinctCQ}).

\AP
Let us define a "succinct-NFA" over an alphabet $\alphabet$ as a classical non-deterministic finite automaton (""NFA"") where transitions can be of the form $q \xrightarrow{w^n} q'$ where $w \in \alphabet^*$ and $n \in \Nat$ is encoded in binary. More formally, a ""succinct-NFA"" over $\alphabet$ is a tuple $(Q,\delta, q_0, F)$, where $q_0 \in Q$ is the ""initial state"", $F \subseteq Q$ is the set of ""final states"", and $\delta \subseteq_{\textit{fin}} Q \times \alphabet^* \times \Nat \times Q$ is a finite set of ""succinct-transitions"". We sometimes write $ q\xrightarrow{w^n}p$ instead of $(q,w,n,p)$, and $n$ is always encoded in binary.
An ""accepting run"" of a given "succinct-NFA" $\automata$ is, as expected, a sequence of transitions $(p_0,w_1,n_1,p_1), (p_1,w_2,n_2,p_2), \dotsc, (p_{m-1},w_m,n_m,p_m)$ such that $p_0$ is  "initial state", and $p_m$ is a "final state". The word from $\alphabet^*$ associated to the "accepting run" is $w_1^{n_1} \dotsb w_m^{n_m} \in \alphabet^*$. The language associated to a given "succinct-NFA" $\automata$, denoted by $\intro*\AutomataLang{\automata}$, is the set of all words associated to "accepting runs".

\AP
Given a word $w =a_1 \dotsb a_n \in \alphabet^*$ over a finite alphabet $\alphabet$, we define the factor of $w$ between $i$ and $j$, denoted by $w\intro*\factor{i}{j}$, as $\epsilon$ if $j \leq i$ or $a_{i+1} \dotsb a_j$ otherwise.

\AP
The ""membership problem"" for "succinct-NFA" is the problem of, given a word $v \in \alphabet^*$, a number $m \in \Nat$ (in binary), and a "succinct-NFA" $\automata$, whether $v^m \in \AutomataLang{\automata}$. 
We observe that this is somewhat close to the automata on compressed strings of \cite[\S 2.3]{MartensNS09}, although in their case, the succinct representation is not on the NFA but only on the input word for testing membership.

% \succinctNFANP*

\begingroup
\def\thetheorem{\ref{prop:mem-succinct-NFA-NP}}
\begin{theorem}[Restatement]
    \memsuccinctNFANP
\end{theorem}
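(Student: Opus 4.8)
The plan is to show membership in "NP" by guessing a polynomial-size certificate and verifying it in polynomial time. The input is a word $v \in \alphabet^*$, a number $m$ in binary, and a "succinct-NFA" $\automata = (Q,\delta,q_0,F)$; we must decide whether $v^m \in \AutomataLang{\automata}$. The subtlety is that $v^m$ can be exponentially long, an "accepting run" may use exponentially many transitions, and each "succinct-transition" $p \xrightarrow{w^n} p'$ itself already encodes an exponentially long factor $w^n$. So a naive guess of the run is too large. The key observation is that a run is a word $w_1^{n_1}\dotsb w_k^{n_k}$ that must equal $v^m$ as a plain word; since $v^m$ is ultimately periodic with period $|v|$, and each $w_i^{n_i}$ is ultimately periodic with period $|w_i|$, the alignment of the run against $v^m$ is highly constrained and can be described compactly.

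**Main steps.** First, I would reduce to the case where the run visits each state of $Q$ only a bounded number of times at the "boundaries" between transitions: a run is a path in a graph on $Q$, so if it is long it repeats a state, and we can try to compress repeated segments. Concretely, I would argue that if $v^m \in \AutomataLang{\automata}$ then there is an "accepting run" $p_0 \xrightarrow{w_1^{n_1}} p_1 \xrightarrow{} \dotsb \xrightarrow{w_k^{n_k}} p_k$ with $k$ polynomially bounded in $|Q|$ and the bit-length of the input, where the $n_i$ are themselves numbers (possibly exponential, but written in binary) — the point being that a cycle in the state graph reading some word $u$ can be "folded" into a single power $u^{t}$ only if $u$ is compatible with the period structure of $v$, but more robustly one shows that consecutive transitions that read the same primitive word can be merged, and the number of distinct "phases" modulo $|v|$ at which transition boundaries fall is polynomially bounded, so the run decomposes into polynomially many maximal blocks. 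I expect the cleanest route is: guess the sequence of states $p_0,\dots,p_k$ with $k \le \mathrm{poly}$, guess for each step a transition $(p_{i-1},w_i,n_i,p_i) \in \delta$ together with an exponent $n_i$ in binary (bounded by, say, $m\cdot|v|$, hence polynomially many bits), and then verify that the concatenation $w_1^{n_1}\dotsb w_k^{n_k}$ equals $v^m$.

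**Verification step.** The remaining task is to check, in polynomial time, that a compactly described concatenation of powers $w_1^{n_1}\dotsb w_k^{n_k}$ (each $w_i$ given explicitly, each $n_i$ in binary) equals $v^m$ ($v$ explicit, $m$ in binary). This is a word-equation / straight-line-program equality test on highly structured strings. I would verify it position-block by position-block: maintain a running offset $\ell_i = \sum_{j<i} |w_j| n_j$ (computed with binary arithmetic), check $\ell_k = m|v|$, and check that on the interval $[\ell_{i-1},\ell_i)$ the periodic word $w_i^{n_i}$ agrees with $v^m$. Agreement of an interval of $v^{\infty}$ with an interval of $w_i^{\infty}$ starting at given phases reduces, via the Fine–Wilf periodicity lemma, to checking agreement on a window of length $|v|+|w_i|$ (if the interval is that long) or the whole interval otherwise — in all cases a polynomial-size check, since $|v|,|w_i|$ are polynomial. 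Summing over the polynomially many blocks gives a polynomial-time verifier.

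**Main obstacle.** The hard part is the first step: bounding the number of transitions $k$ in some accepting run by a polynomial. One must rule out that \emph{every} accepting run is genuinely exponentially long (in number of transitions, not just in word length). The argument should go: take a shortest accepting run; between two boundary positions that fall at the same phase modulo $|v|$ and revisit the same state, the intervening sub-run reads a word $u$ with $|v| \mid |u|$, i.e. $u$ is a power of a conjugate-of-$v$-type word, and this cycle can be "pumped up" by adjusting a single exponent $n_i$ rather than repeating transitions — so a shortest run has at most $|Q|\cdot|v|$ boundaries of each type, whence $k$ is polynomial. Making this folding argument precise (especially handling boundaries that fall \emph{inside} a $w_i^{n_i}$ block, and merging/splitting blocks accordingly) is where the real care is needed; once $k$ is polynomial, the guess-and-check scheme above closes the proof.
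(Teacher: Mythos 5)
Your overall plan --- track the phase of the run modulo $|v|$ and use periodicity (Fine--Wilf) to verify agreement with $v^m$ block by block --- matches the spirit of the paper's construction, but the step you yourself flag as the ``main obstacle'' is a genuine gap, and the specific claims you make there are false as stated. It is \emph{not} true that some accepting run has polynomially many transitions: take the succinct-NFA with the single transition $q_0 \xrightarrow{a^1} q_0$ (with $q_0$ initial and final) and ask whether $a^m$ is accepted for $m$ given in binary; every accepting run has exactly $m$ transitions. So ``guess the sequence of states $p_0,\dots,p_k$ with $k \le \mathrm{poly}$, one transition of $\delta$ per step'' cannot work. Your proposed fix --- fold repeated cycles into a single power with a guessed exponent --- silently changes the certificate format from a sequence of transitions to a sequence of (cycle, repetition count) pairs, where a cycle may consist of several distinct transitions (e.g.\ $q_0 \xrightarrow{a} q_1 \xrightarrow{b} q_0$ repeated $m$ times against $v = ab$: only two phases, but exponentially many alternations). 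Bounding the number of such blocks and showing that some accepting run of the right total length admits such a folded description is precisely the nontrivial content of NP membership for reachability in counter automata with binary-encoded updates; it does not follow from ``at most $|Q|\cdot|v|$ boundary types'', since the small number of \emph{distinct} phases does not bound the number of phase alternations.

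The paper sidesteps exactly this difficulty. It builds a product automaton $\automata'$ on $Q \times \set{0,\dotsc,|v|-1}$ keeping only the succinct-transitions consistent with the phase structure of $v^\infty$ (consistency of a single transition $w^n$ with a phase pair is checked in polynomial time via membership in the Parikh image of a unary NFA), observes that $\automata'$ accepts only powers of $v$ so that the problem becomes ``does $\automata'$ accept a word of length $m\cdot|v|$'', and then reduces that question to reachability in a succinct one-counter automaton, which is in NP by a cited result of Haase, Kreutzer, Ouaknine and Worrell. If you want to keep your self-contained guess-and-check route, you must either prove the polynomial-size-certificate lemma for exact-weight reachability with binary weights yourself (a flow/cycle-decomposition argument over the phase graph) or cite such a result; as written, the certificate your verifier checks has not been shown to exist.
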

\addtocounter{theorem}{-1}
\endgroup

\begin{proof}
    Let $\automata$ be a "succinct-NFA",  $v\in \alphabet^*$ be a word and $m$ be a number represented in binary. We assume "wlog" that all numbers in $\delta$ are positive ("ie" there are no $\epsilon$ transitions).

    We first build a new "succinct-NFA" $\automata'$ from $\automata$ so that 
    $v^m \in \AutomataLang{\automata}$ "iff" $\AutomataLang{\automata'}$ contains a word of length $m \times |v|$.
    We define it as follows. 
    The set of states of $\automata'$ is $Q \times \set{0, \dotsc, |v|-1}$.

    There is a "succinct-transition" $(q,i) \xrightarrow{w^n} (p,j)$ if (i) $q \xrightarrow{w^n} p$ is a "succinct-transition" of $\automata$, and (ii) $w^n$ is of the form $v\factor{i}{|v|} \cdot v^\ell \cdot v\factor{0}{j}$ for some $\ell \geq 0$, or equal to $v\factor{i}{j}$.
    The "initial state" of $\automata'$ is $(q_0,0)$ and the set of "final states" is $F \times \set 0$.

    \begin{claim}
        We can build $\automata'$ in polynomial time.
    \end{claim}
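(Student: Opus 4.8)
The plan is to observe that the structural part of $\automata'$ is trivially computable in polynomial time: the state set $Q \times \set{0,\dots,|v|-1}$ has size $|Q|\cdot|v|$, and the "initial state" $(q_0,0)$ and "final states" $F \times \set 0$ are read off directly. So the only content of the claim is that the "succinct-transition" relation of $\automata'$ can be computed in polynomial time, i.e.\ that for each "succinct-transition" $q \xrightarrow{w^n} p$ of $\automata$ and each pair $(i,j) \in \set{0,\dots,|v|-1}^2$ we can decide, in polynomial time, whether condition (ii) of the construction holds. The obstacle is that $n$ is given in binary, so $w^n$ may have exponential length and cannot be written out and compared naively. By the standing assumption there are no $\epsilon$-transitions, so $w \neq \epsilon$ and $n \geq 1$; in particular $|w|\geq 1$.

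First I would reduce condition (ii) to the conjunction of a numeric test and a periodicity test. Numeric test: $|w^n| = n\cdot|w|$, the candidate word $v\factor{i}{|v|}\cdot v^\ell\cdot v\factor{0}{j}$ has length $(|v|-i)+\ell|v|+j$, and $v\factor{i}{j}$ has length $j-i$; hence (ii) requires that either $n|w| = j-i$ with $j\geq i$, or $n|w| - ((|v|-i)+j)$ is a nonnegative multiple of $|v|$ (the latter also pins down $\ell$). Since $n$, $|w|$, $|v|$ all have polynomially many bits, this is decided in polynomial time by big-integer arithmetic. Periodicity test: once the lengths match, $w^n$ equals the candidate word iff $w[k \bmod |w|] = v[(i+k)\bmod|v|]$ for all $0\leq k< n|w|$, equivalently, the length-$n|w|$ prefix of the periodic word $w^\omega$ agrees with the length-$n|w|$ prefix of $v^\omega$ read starting at position $i$.

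To settle the periodicity test without materialising $w^n$, I would split on the size of $n|w|$. If $n|w| < |w|+|v|$, then $w^n$ and the relevant prefix of $v^\omega$ both have polynomially bounded length, so they are written out and compared directly; note the short case $w^n = v\factor{i}{j}$ always lands here, since then $n|w| = j-i < |v|$. If $n|w| \geq |w|+|v|$, I claim the test is equivalent to comparing just the length-$(|w|+|v|)$ prefixes of $w^\omega$ and of $v^\omega$ shifted by $i$, both of which are of polynomial length and computable in polynomial time. Indeed, a common such prefix $y$ has period $|w|$ (being a prefix of $w^\omega$) and period $|v|$ (being a factor of $v^\omega$); since $|y| = |w|+|v|$, the Fine--Wilf theorem forces $y$ to have period $d := \gcd(|w|,|v|)$, whence $w^\omega$ and the shift of $v^\omega$ are both purely periodic with period $d$ and agree on their first $d$ letters, so they agree everywhere, in particular on the first $n|w|$ letters; the converse is immediate. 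Combining this with the numeric test decides (ii) for each transition/pair in polynomial time, and since there are only polynomially many transition/pair combinations, the whole of $\automata'$ is constructed in polynomial time.

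The step I expect to be the crux is precisely this last one: replacing the comparison of the potentially exponentially long word $w^n$ by a comparison of two polynomial-length periodic prefixes, justified by the Fine--Wilf periodicity argument. Everything else --- enumerating transitions and positions, the divisibility and length arithmetic on binary integers, and the comparisons of polynomial-length words --- is routine.
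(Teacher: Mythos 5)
Your proof is correct, but it takes a genuinely different route from the paper's. You reduce condition (ii) to a length/divisibility test on binary integers plus a periodicity test, and you dispose of the periodicity test by comparing only the length-$(|w|+|v|)$ prefixes of $w^\omega$ and of $v^\omega$ shifted by $i$, invoking the Fine--Wilf theorem to propagate agreement from that prefix to all of $w^n$. The paper instead builds a small directed graph on the positions $\set{0,\dotsc,|v|}$ of $v$, with an edge $i \to j$ whenever a \emph{single} copy of $w$ reads from offset $i$ to offset $j$ around the cycle of $v$ (here $\ell \le |w|$, so the graph is constructible directly), and then observes that $w^n$ matches the required form if{f} there is a path of length exactly $n$ from $i$ to $j$; since $n$ is in binary, it appeals to the polynomial-time membership test for the "Parikh-image" of an "NFA" over a one-letter alphabet \cite{KopczynskiT10}. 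Both arguments are sound. Yours is more elementary and self-contained --- it replaces the external Parikh-image result by a classical combinatorics-on-words lemma and big-integer arithmetic --- whereas the paper's is more modular and avoids any case analysis on $n|w|$ versus $|w|+|v|$. The routine parts of your write-up (enumerating the polynomially many transition/position pairs, reading off states, initial and final states) match the paper's implicit treatment.
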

    \begin{proof}
        If suffices to prove that checking whether $w^n$ is of the form $v\factor{i}{|v|} \cdot v^\ell \cdot v\factor{0}{j}$ for some $\ell \geq 0$, or equal to $v\factor{i}{j}$ is indeed in polynomial time. 
        Towards this first we build a directed graph $G$ having $\set{0, \dotsc, |v|}$ as vertices, and an edge $i \to j$ if $w = v\factor{i}{|v|} \cdot v^\ell \cdot v\factor{0}{j}$ for some $\ell$. Notice that $\ell$ is bounded by $|w|$ and hence $G$ can be built in polynomial time. 
        
        Now in order to check if $w^n$ is of the form $v\factor{i}{|v|} \cdot v^\ell \cdot v\factor{0}{j}$ for some $\ell \geq 0$, it suffices to check if there is a path of length $n$ from $i$ to $j$ in $G$. 
        The set of all sizes of paths from $i$ to $j$ in $G$ can be seen as the "Parikh-image"\footnote{\AP Remember that the ""Parikh-image"" of a language $L \subseteq \alphabet^*$ over a one-letter alphabet $\alphabet = \set{a}$ is $\set{t \in \Nat : a^t \in L}$.} of an "NFA" over a one-letter alphabet. Since the problem of testing membership of a vector in the "Parikh-image" of an "NFA" is in polynomial time as soon as the alphabet is bounded \cite{KopczynskiT10}, the statement follows.
    \end{proof}

    \begin{claim}
        $v^m \in \AutomataLang{\automata}$ if{f} $\AutomataLang{\automata'}$ contains a word of length $m \times |v|$.
    \end{claim}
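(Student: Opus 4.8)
The plan is to set up a tight, bidirectional correspondence between "accepting runs" of $\automata$ whose associated word is $v^m$ and "accepting runs" of $\automata'$ whose associated word has length $m\cdot|v|$, viewing the second component of a state of $\automata'$ as a counter recording the ``offset modulo $|v|$'' reached so far along the input. Throughout I assume $v\neq\epsilon$ (if $v=\epsilon$ the statement is trivial and is handled separately) and use the "wlog" convention that no "succinct-transition" reads the empty word, so every block $w_t^{n_t}$ read along a run is nonempty. The engine of the argument is two elementary facts about the infinite periodic word $v^\infty=vvv\cdots$: (a) the length-$L$ factor of $v^\infty$ starting at a position $p$ depends only on $p\bmod|v|$ and on $L$; and (b) for $0\le i\le|v|-1$, the length-$L$ factor of $v^\infty$ starting at offset $i$ equals $v\factor{i}{|v|}\cdot v^\ell\cdot v\factor{0}{j}$ when $L\ge|v|-i$ (where $\ell,j$ are determined by $L-(|v|-i)=\ell|v|+j$ with $0\le j<|v|$), and equals $v\factor{i}{i+L}$ when $L<|v|-i$. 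These are precisely the two shapes allowed in clause (ii) of the definition of the "succinct-transitions" of $\automata'$.

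\emph{Direction $(\Rightarrow)$.} Given an "accepting run" $(p_0,w_1,n_1,p_1),\dots,(p_{k-1},w_k,n_k,p_k)$ of $\automata$ with $w_1^{n_1}\cdots w_k^{n_k}=v^m$, I would set $s_t\defeq\sum_{r\le t}|w_r^{n_r}|$ (so $s_0=0$ and $s_k=m|v|$) and $i_t\defeq s_t\bmod|v|\in\set{0,\dots,|v|-1}$ (so $i_0=i_k=0$), and check that $((p_0,i_0),w_1,n_1,(p_1,i_1)),\dots,((p_{k-1},i_{k-1}),w_k,n_k,(p_k,i_k))$ is an "accepting run" of $\automata'$. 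Clause (i) holds since $p_{t-1}\xrightarrow{w_t^{n_t}}p_t$ is a transition of $\automata$; clause (ii) holds because the $t$-th block $w_t^{n_t}=v^m\factor{s_{t-1}}{s_t}$ is, by (a), the length-$(s_t-s_{t-1})$ factor of $v^\infty$ starting at offset $i_{t-1}$, hence by (b) has exactly one of the two permitted shapes (one verifies the resulting end-offset is indeed $i_t$). This run starts in $(q_0,0)$ and ends in $F\times\set0$, and its associated word is $w_1^{n_1}\cdots w_k^{n_k}=v^m$, of length $m|v|$.

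\emph{Direction $(\Leftarrow)$.} Given an "accepting run" $((p_0,i_0),w_1,n_1,(p_1,i_1)),\dots,((p_{k-1},i_{k-1}),w_k,n_k,(p_k,i_k))$ of $\automata'$ with associated word $u$ of length $m|v|$, deleting the second coordinate yields $(p_0,w_1,n_1,p_1),\dots$, which by clause (i) (together with $i_0=0$, so $p_0=q_0$, and $(p_k,i_k)\in F\times\set0$, so $p_k\in F$) is an "accepting run" of $\automata$ with associated word $u$. The remaining point — the crux — is to show $u=v^m$, not merely $|u|=m|v|$. For this I would prove by induction on $t$ that $s_t\defeq\sum_{r\le t}|w_r^{n_r}|$ satisfies $s_t\equiv i_t\pmod{|v|}$ and that $w_t^{n_t}$ equals the length-$|w_t^{n_t}|$ factor of $v^\infty$ starting at position $s_{t-1}$: by clause (ii), $w_t^{n_t}$ is $v\factor{i_{t-1}}{|v|}\cdot v^\ell\cdot v\factor{0}{i_t}$ (length $(|v|-i_{t-1})+\ell|v|+i_t\equiv i_t-i_{t-1}\bmod|v|$) or $v\factor{i_{t-1}}{i_t}$ (length $i_t-i_{t-1}$); in either case $s_t\equiv s_{t-1}+(i_t-i_{t-1})\equiv i_t\pmod{|v|}$, and by (a)–(b) the block is the claimed factor of $v^\infty$. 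Concatenating the blocks shows $u$ is the length-$m|v|$ prefix of $v^\infty$, i.e.\ $v^m$; hence $v^m=u\in\AutomataLang{\automata}$.

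The step I expect to be the main obstacle is exactly this identity $u=v^m$ in the $(\Leftarrow)$ direction: the construction of $\automata'$ a priori only constrains the \emph{length} of accepted words, so one must genuinely exploit the offset counter carried in the second coordinate — through the two shapes permitted for "succinct-transitions" — to force a length-$m|v|$ accepted word to coincide letter by letter with $v^m$. The remaining verifications are routine; the only mild subtlety is the boundary case of a block that ends exactly at a copy boundary (end-offset $0$), which is subsumed by the first shape with $\ell=0$ and $j=0$.
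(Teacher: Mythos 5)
Your proof is correct and follows essentially the same route as the paper's: lift the run of $\automata$ to $\automata'$ by tracking the offset of each prefix modulo $|v|$ for the forward direction, and for the converse observe that the offset component forces every word accepted by $\automata'$ to be a prefix of $v v v \cdots$ ending at offset $0$, hence of the form $v^\ell$, so that the length constraint pins down $\ell=m$ and projection onto $Q$ concludes. The only difference is that you spell out by induction the step the paper merely asserts (that an accepting run of $\automata'$ forces the word to be $v^\ell$), which is a welcome addition rather than a deviation.
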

    \begin{proof}
        \proofcase{Left-to-right implication.} 
        Assuming $v^m \in \AutomataLang{\automata}$ consider an "accepting run" $(p_0,w_1,n_1,p_1), (p_1,w_2,n_2,p_2),$ $\dotsc,$ $(p_{t-1},w_t,n_t,p_t)$ on $v^m$. Then, for each $i \in \set{1,\dotsc, t}$, we have that $w_1^{n_1} \dotsb w_i^{n_i}$ is a prefix of $v^m$ and thus can be written as $v^{m_i} \cdot v\factor{0}{j_i}$ for some $j_i$ and $m_i$, where in particular $j_t=0$ and $m_t = m$.
        
        Hence, $((p_0,0),w_1,n_1,(p_1,j_1)), ((p_1,j_1),w_2,n_2,(p_2,j_2)),$ $\dotsc,$ $((p_{t-1},j_{t-1}),w_t,n_t,(p_t,j_t))$ is an "accepting run" of $\automata'$ on $v^m$. 

        \proofcase{Right-to-left implication.} 
        First observe that $\AutomataLang{\automata'} \subseteq \AutomataLang{\automata}$ since an "accepting run" of $\automata'$ on a word $w$ induces an "accepting run" of $\automata$ on the same word by projecting the run onto $Q$.

        Observe that an "accepting run"  of $\automata'$ on a word $w$ forces $w$ to be of the form $v^\ell$. 
        Since $\AutomataLang{\automata'}$ has an accepted word of length $m \times |v|$, this forces $\ell=m$, "ie", the accepted word has to be exactly $v^m$. Since $\AutomataLang{\automata'} \subseteq \AutomataLang{\automata}$ by projecting the run onto $Q$, we obtain $v^m \in \AutomataLang{\automata}$.
    \end{proof}

    \begin{claim}
        Testing if $\AutomataLang{\automata'}$ contains a word of a given length is in \np.
    \end{claim}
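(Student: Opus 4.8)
The plan is to reduce ``$\AutomataLang{\automata'}$ contains a word of length $\ell$'' (with $\ell$ given in binary; in our application $\ell = m\cdot|v|$) to an exact‑weight walk problem in a directed multigraph, and then to certify the existence of such a walk in polynomial size via the classical Eulerian‑path characterization of edge‑multisets of walks.

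First I would note that each transition $e = \big((q,i)\xrightarrow{w^n}(p,j)\big)$ of $\automata'$ contributes a \emph{fixed} amount $\ell_e\defeq|w|\cdot n$ to the length of the word associated with any accepting run using it. Hence $\AutomataLang{\automata'}$ contains a word of length $\ell$ if and only if the transition graph $G'$ of $\automata'$ has a walk from $s\defeq(q_0,0)$ to some $t\in F\times\set 0$ whose multiset of edges has total $\ell_e$‑weight exactly $\ell$. We may assume $\ell_e\ge 1$ for every $e$: since $\automata$, hence $\automata'$, has no $\epsilon$‑transitions, $\ell_e=0$ is possible only when $v=\epsilon$, in which case $\automata'$ has no transitions at all and the problem is trivial. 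Note also that $G'$ has only polynomially many vertices and transitions.

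This places the problem in \np: one guesses a final state $f\in F$ and, for every transition $e$ of $G'$, a multiplicity $x_e\in\Nat$ in binary, meant to be the number of times the walk traverses $e$. Since $\ell_e\ge 1$, a walk of weight $\ell$ uses $e$ at most $\ell$ times, so we may impose $x_e\le\ell$ and the certificate remains polynomial. Writing $t\defeq(f,0)$, the verifier accepts if and only if either $\ell=0$ and $s\in F\times\set 0$ (the empty run), or all of the following hold: (a) $\sum_e x_e\ell_e=\ell$ (a single equation on binary integers); (b) the degree‑balance conditions $\sum_{e\text{ out of }s}x_e=\sum_{e\text{ into }s}x_e+1$, $\sum_{e\text{ into }t}x_e=\sum_{e\text{ out of }t}x_e+1$, and $\sum_{e\text{ into }u}x_e=\sum_{e\text{ out of }u}x_e$ for every other vertex $u$ (modified in the obvious way when $s=t$); and (c) the support $\set{e : x_e>0}$ is weakly connected and contains both $s$ and $t$. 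Each of (a)--(c) is checkable in polynomial time.

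Correctness rests on the standard fact that a multiset of edges of a directed multigraph is the edge‑multiset of some $s$-$t$ walk precisely when it satisfies the degree‑balance conditions and its support is weakly connected and contains $s$ and $t$ --- the ``if'' direction being the Eulerian‑path theorem applied to the support --- and that such a walk then has total weight $\sum_e x_e\ell_e$; conversely the edge‑multiset of any $s$-$t$ walk satisfies these conditions. The point I expect to need the most care is condition (c): degree balance alone is insufficient, since the support could contain a cycle detached from the $s$--$t$ part, so the weak‑connectivity requirement must be present and both its necessity and sufficiency argued; the degenerate empty‑run case and the explicit a priori bound $x_e\le\ell$ (which keeps the certificate polynomial) should also be spelled out.
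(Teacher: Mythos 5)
Your proof is correct, but it takes a genuinely different route from the paper's. The paper disposes of this claim in one line: it replaces each transition $q \xrightarrow{w^n} p$ of $\automata'$ by a counter increment of $|w| \times n$, thereby viewing $\automata'$ as a \emph{succinct one-counter automaton}, and invokes the known result of Haase, Kreutzer, Ouaknine and Worrell (CONCUR~2009, Theorem~1) that reachability for such automata is in \np. You instead build the certificate by hand: guess the binary-encoded multiplicity $x_e$ of each transition in the walk, and verify the exact-weight equation together with the degree-balance and weak-connectivity conditions, with correctness resting on the Eulerian-path characterization of edge-multisets of $s$--$t$ walks. This is essentially a self-contained re-derivation of the flow-style certificate that underlies the cited NP bound, and it is legitimately simpler here than in the general one-counter setting, since all your edge weights are nonnegative and so no counter-nonnegativity constraint ever arises; the price is the extra care you correctly identify around weak connectivity of the support, the empty-run case, and the a priori bound $x_e \leq \ell$ (which indeed follows from $\ell_e \geq 1$, guaranteed by the paper's standing assumption that $\delta$ has no $\epsilon$-transitions --- your aside about $v=\epsilon$ is unnecessary, as $\ell_e = |w|\cdot n$ does not depend on $v$). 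In short: the paper's proof buys brevity via an external citation; yours buys self-containedness at the cost of reproving a standard combinatorial lemma.
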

    \begin{proof}
        By replacing each transition $q \xrightarrow{w^n} p$ in $\automata'$ with $q \xrightarrow{|w| \times n} p$ we obtain a ``succinct one counter automaton'', whose reachability problem is in \np \cite[Theorem~1]{DBLP:conf/concur/HaaseKOW09}.
    \end{proof}
    This concludes the proof of \Cref{prop:mem-succinct-NFA-NP}.
\end{proof}

\subsection{Containment Problem for Succinct CQs}
A ""succinct CQ"" is a "CRPQ"($\SSF$) whose every "atom" has an expression of the form $w^n$ for $w \in \alphabet^*$, $n \in \Nat$, with the expected semantics. Remember that we assume that $n$ is given in binary and that the size of each "atom" $w^n$ is $|w| + \lceil\log(n)\rceil$, and hence that every "succinct CQ" has an equivalent \AP""corresponding"" "CQ" of at most exponential size ("ie", its only "expansion").
The \AP""containment problem"" for a class $\class$ of queries is the problem of, given two queries $q,q' \in \class$, whether $q \semsubset q'$.

\begin{theorem}\label{thm:containment-succinctCQ}
    The "containment problem" for "succinct CQs" is in "NP".
\end{theorem}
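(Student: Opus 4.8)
The plan is to combine \Cref{lem:characterization-containment} with \Cref{prop:mem-succinct-NFA-NP}. First observe that a "succinct CQ" $q$ has a \emph{unique} "expansion", its "corresponding" "CQ" $\lambda$: the graph with vertex set $\vars(q)$ in which every "atom" $y \xrightarrow{w^n} z$ is replaced by a directed path of length $|w|\cdot n$ from $y$ to $z$ labelled $w^n$ with fresh internal variables (and "atoms" whose "expansion" is $\epsilon$ are collapsed canonically). This $\lambda$ has exponential size but is fully described by the polynomial-size $q$. Hence, writing $\lambda,\lambda'$ for the expansions of $q,q'$, \Cref{lem:characterization-containment} gives that $q \semsubset q'$ if and only if there is a "homomorphism" $\lambda' \homto \lambda$. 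The whole difficulty is that $\lambda$ and $\lambda'$ are exponentially large, so we cannot write them down.

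To get around this, I would guess the homomorphism only on the \emph{skeleton} variables $\vars(q')$: a candidate is a map $h\colon \vars(q') \to \vars(\lambda)$ where each value $h(x')$ is encoded either as a variable of $q$, or as a pair $(a,k)$ consisting of an "atom" $a=(y\xrightarrow{w^n}z)$ of $q$ together with an offset $0\le k\le |w|\cdot n$ written in binary, denoting the $k$-th internal vertex of the path replacing $a$. This takes polynomially many bits. Since the "expansion" of each "atom" $\alpha'=(y'\xrightarrow{w'^{n'}}z')$ of $q'$ is a directed path with fresh internal variables (disjoint across "atoms"), $h$ extends to a homomorphism $\lambda'\homto\lambda$ if and only if, for every "atom" $\alpha'$ of $q'$, there is a directed walk in $\lambda$ from $h(y')$ to $h(z')$ whose label is exactly $w'^{n'}$.

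The key step is to decide this walk-existence using \Cref{prop:mem-succinct-NFA-NP}. I would build, in polynomial time, a single "succinct-NFA" $\automata$ whose states are $\vars(q)$, the (at most $\nrvars{q'}$) guessed points $h(x')$, and polynomially many auxiliary states, such that for all relevant states $p,p'$, the language $\AutomataLang{\automata}$ taken with "initial state" $p$ and "final state" $p'$ is exactly the set of labels of directed walks from $p$ to $p'$ in $\lambda$. Concretely, for each "atom" $a=(y\xrightarrow{w^n}z)$ of $q$, list the \emph{marked offsets} $0=k_0<\dotsb<k_r=|w|\cdot n$ obtained from $\{0,|w|n\}$ together with all offsets of guessed points lying on $a$; between consecutive marked offsets add a short chain of "succinct-transitions" spelling $w^n\factor{k_i}{k_{i+1}}$, which decomposes as (a suffix of $w$)$\,\cdot\, w^\ell \,\cdot\,$(a prefix of $w$) with $\ell$ in binary, using two auxiliary states. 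A directed walk in $\lambda$ necessarily passes from marked offset to marked offset (its endpoints being among the marked points), so walks correspond to "accepting runs" of $\automata$, and checking that $w'^{n'}$ labels a walk from $h(y')$ to $h(z')$ is literally the succinct-NFA "membership problem" with input word $w'$ and exponent $n'$, hence in "NP" (the degenerate case $w'^{n'}=\epsilon$ is trivial). The overall algorithm — guess $h$, build $\automata$, then guess the polynomially many certificates for these "membership problem" instances, and verify everything in polynomial time — witnesses membership in "NP".

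\textbf{Main obstacle.} The crux is circumventing the exponential size of $\lambda$ and $\lambda'$. Guessing $h$ on the polynomially many skeleton variables with a succinct (binary-offset) encoding of the target vertices is the first ingredient; the technically delicate part is the construction of the polynomial-size "succinct-NFA" $\automata$ capturing the walk-labels of the exponential graph $\lambda$, in particular the bookkeeping for walks that start or end in the interior of an atom path (the suffix/power/prefix decomposition of $w^n\factor{k}{k'}$, and the choice of marked offsets so that every walk decomposes through marked points). Once $\automata$ is in place, correctness of the reduction and the final "NP" bound are routine, the latter invoking \Cref{prop:mem-succinct-NFA-NP}.
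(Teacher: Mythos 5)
Your proposal is correct and follows essentially the same route as the paper: reduce containment to a homomorphism between the corresponding CQs, guess the images of $\vars(q')$ succinctly (the paper does this by non-deterministically ``breaking'' the atoms of $q$ at at most $|\vars(q')|$ new points, which is your marked-offsets refinement in different clothing), and then verify each atom of $q'$ by viewing the refined $q$ as a "succinct-NFA" and invoking the "NP" "membership problem" of \Cref{prop:mem-succinct-NFA-NP}. The only cosmetic difference is that the paper guesses the decomposition $w^n = w_1^{n_1}\dotsb w_k^{n_k}$ and certifies it with an extra succinct-NFA membership test, whereas you compute the suffix$\cdot$power$\cdot$prefix factors directly from binary offsets.
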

\begin{proof}
    Given two "succinct CQs" $q,q'$ it suffices to check if there is a "homomorphism" from $\tilde q'$ to $\tilde q$, where $\tilde q'$ and $\tilde q$ are the "CQs" "corresponding" to $q'$ and $q$, respectively.
     Towards this, we first non-deterministically ``break'' some "atoms" $x \xrightarrow{w^n} x'$ of $q$ introducing new variables $x \xrightarrow{w_1^{n_1}} y_1 \land  y_1 \xrightarrow{w_2^{n_2}} y_2 \land \dotsb \land y_{k-1} \xrightarrow{w_k^{n_k}} y_k$ where $w^n = w_1^{n_1} \dotsb w_k^{n_k}$ in such a way that we introduce at most $|\vars(q')|$ new variables. To verify that $w^n = w_1^{n_1} \dotsb w_k^{n_k}$, we can build a "succinct-NFA" of the form $p_0\xrightarrow{w_1^{n_1}}p_1\ldots p_{k-1}\xrightarrow{w_k^{n_k}}p_k$ with $p_0$ and $\{p_k\}$ as initial and final states respectively and check if $w^n$ is accepted by this automaton (which can be done in "NP", by \Cref{prop:mem-succinct-NFA-NP}).
    
     This results in a "succinct CQ" $\hat q$ which is "equivalent" to $q$ (in fact, $\tilde q$ is still its "corresponding" "CQ") and of polynomial ---even linear--- size. We now guess a function $f$ from the variables of $q'$ to the variables of $\hat q$. Finally, for each "atom" $x \xrightarrow{w^n} y$ of $q'$, we check if there is a path from $f(x)$ to $f(y)$ in $\hat q$ reading $w^n$. For this, we can see $\hat q$ as a "succinct-NFA" whose "initial state" is $f(x)$ and its set of "final states" is $\set{f(y)}$, and where we check if $w^n$ belongs to its language, which is in "NP" by \Cref{prop:mem-succinct-NFA-NP}.
\end{proof}

\section{Upper Bound}
\label{sec-upperBound}
%%Upper bound results

In this section we prove the upper bounds of \Cref{thm-is-bdd}. Formally, we will prove the following statement.

\begin{theorem}\label{thm-pip2-upper-bound}
  The \IsBdd ~problem for \bigQL is in "PiP2". Further, an equivalent $\ucrpq{\SSF}$ query of linear size can be computed.
\end{theorem}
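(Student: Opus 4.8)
The plan is to reduce \IsBdd\ for \bigQL\ to a finite collection of containment checks between \SSF-expanded queries and the original query, each verifiable (or falsifiable) with an "NP" oracle by \Cref{thm:containment-succinctCQ}, thereby placing the problem in "PiP2". By \Cref{characterization_boundedness}, $q$ is bounded iff $q \semequiv \Expm{m}$ for some $m$; since $\Expm{m} \semsubset q$ always holds, boundedness amounts to the existence of $m$ with $q \semsubset \Expm{m}$. The first step is to establish a \emph{computable bound} $M = M(q)$, of at most exponential magnitude (encodable in polynomially many bits), such that $q$ is bounded iff $q \semsubset \Expm{M}$. Intuitively, each $w^*$-atom contributes a ``period'' $|w|$, and if recursion on that atom can be unwound at all, it can be unwound within a number of repetitions bounded by a polynomial in the number of variables of $q$ and the periods of the other atoms — because any witnessing homomorphism from an \SSF-expansion back into a $w^*$-expansion can only ``use'' boundedly many copies of $w$ before it must start repeating its pattern on the long $w^*$-path. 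Combined with the succinct exponents $w^n, w^{\le n}$ already present (where $n$ is binary), the relevant $M$ is exponential but has a polynomial-size binary encoding, and crucially $\Expm{M}$ is a union of \emph{succinct} \SSF-CRPQs of polynomial total size (the bound on each $w^*$-atom being written as a binary exponent).

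Next I would verify the complexity bookkeeping. The candidate bounded query is $q' \defeq \Expm{M}$, expressed as a \ucrpq{\SSF} by replacing each $w^*$-atom with the \SSF-expression $w^{\le M}$ (and each \SSF-atom of $q$ unchanged); this has linear size in $|q|$ once $M$ is fixed, giving the ``equivalent \ucrpq{\SSF} query of linear size'' clause. For the decision procedure: $q$ is bounded iff $q \semsubset q'$, which by \Cref{lem:characterization-containment} means every expansion of $q$ maps homomorphically from some expansion of $q'$. A $\Pi^p_2$ algorithm proceeds by universally guessing a ``critical'' expansion $\lambda \in \Exp(q)$ — and here one must argue that it suffices to consider $\lambda$ whose every atom-expansion uses a word of at most exponential (polynomially-encodable) length, so $\lambda$ is itself describable as a succinct CQ of polynomial size — and then existentially checking, via the "NP" procedure of \Cref{thm:containment-succinctCQ} (or directly \Cref{prop:mem-succinct-NFA-NP}), that some expansion of $q'$ homomorphically maps into $\lambda$. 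Equivalently and more cleanly: $q \not\semsubset q'$ iff there is a succinct-CQ expansion $\lambda$ of $q$ with $\lambda \not\semsubset q'$; guessing $\lambda$ is the existential layer and checking $\lambda \semsubset q'$ (a containment of a succinct CQ in a \ucrpq{\SSF}, which unfolds to polynomially many succinct-CQ containments) is the co-"NP"-checkable inner layer — so non-boundedness is in $\Sigma^p_2$ and boundedness in $\Pi^p_2$.

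The main obstacle I expect is proving the existence of the polynomial-bit bound $M$ and, relatedly, the ``small critical expansion'' property: one must show that if $q$ is \emph{un}bounded then unboundedness is witnessed by an expansion whose word-lengths are at most exponential. This is a pumping/periodicity argument on homomorphisms into $w^*$-paths — given an expansion $\lambda$ of $q$ that is not covered by any short expansion of $q$, one shows that the portions of $\lambda$ lying on $w^*$-atoms that exceed the bound can be shortened (by removing a period $|w|$, or a common multiple of periods of atoms simultaneously traversed) without creating a covering homomorphism, using that the target query $q$ has only finitely many variables and each $w^*$-atom realizes a strictly periodic word. Getting the quantitative bound right — and ensuring it interacts correctly with the succinct exponents $w^n$ already in $q$ (so that it stays polynomially encodable) — is the delicate part; the rest is assembling the quantifier alternation and invoking \Cref{thm:containment-succinctCQ} and \Cref{prop:mem-succinct-NFA-NP} as black boxes.
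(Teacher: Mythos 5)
Your proposal follows essentially the same route as the paper: establish an explicit, polynomially-encodable bound (the paper's $\Zbound$ in \Cref{lem:new:q_equiv_q(n)}), reduce boundedness to a containment between two succinctly-represented queries whose expansions are all polynomial-size succinct CQs (\Cref{lem:new:generating_algo_pi2p}, which is exactly your ``small critical expansion'' step), and assemble the quantifier alternation on top of the \np{} containment test of \Cref{thm:containment-succinctCQ} to get \piPTwo. The two obstacles you flag are precisely the content of those two lemmas, and your sketched pumping/periodicity argument (removing a period $|w|$, or a common multiple of the periods of simultaneously traversed atoms, exploiting that $q$ has finitely many variables) matches the paper's actual proof, which colors the long $w^*$-paths and contracts monochromatic intervals of length $\Pi_{w \in \Recwords}|w|$.
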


The goal of this section is to prove \Cref{thm-pip2-upper-bound}. From \Cref{prop-basic-inclusion}, the bound applies also to $\ucrpq{\wSingleton,\wStar} \subseteq \ucrpq{\SF,\wStar} \subseteq \ucrpq{\SSF,\wStar}$. 

\AP
For $q \in \bigQL$ and $m \in \Nat$, let $\intro*\qm{m} \in \ucrpq{\SSF}$ be the result of replacing each regular expression of the form $w^*$ in $q$ with $w^{\leq m}$. We begin with the following observation.

\begin{observation}\label{obs:equiv-leq}
  If $q \semequiv \Expm{m}$, then $q \semequiv \qm m$. If $q \semequiv \qm m$ then $q \semequiv \Expm{m \cdot m'}$ for $m'$ the maximum length of a word $w$ in an expression of the form $w^*$. Hence, by \Cref{characterization_boundedness}, $q \in \bigQL$ is "bounded" if, and only if, $q$ is "equivalent" to $\qm{\ell}$ for some $\ell \in \Nat$. 
\end{observation}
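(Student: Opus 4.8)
The plan is to derive both implications, and then the final biconditional, from the containment characterization (\Cref{lem:characterization-containment}) together with a couple of trivial language inclusions, closing with \Cref{characterization_boundedness}. Write $q=\bigvee_i q_i$, and recall that the "expansions" of a "UCRPQ" are obtained disjunct by disjunct. One preliminary remark matters: every $\SSF$ expression denotes a \emph{finite} language (it is built only from $\epsilon$, letters, $+$, $\cdot$ and the macros $w^{n}$, $w^{\le n}$), so there is a computable number $L$ bounding the length of any word in an $\SSF$ "atom" of $q$; in what follows $m'$ denotes the maximum length of a base word of an expression of the form $u^{*}$ occurring in $q$, enlarged if necessary so that $m'\ge L$.

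The core of the argument is four containments, each an immediate application of \Cref{lem:characterization-containment} witnessed by \emph{identity} "homomorphisms": (i) $\qm m\semsubset q$, since $u^{\le m}$ is a sublanguage of $u^{*}$, so every "expansion" of $\qm m$ is also an "expansion" of $q$; (ii) $\Expm{k}\semsubset q$ for every $k$, since a "$k$-expansion" of $q$ is in particular an "expansion" of $q$; (iii) $\Expm{m}\semsubset\qm m$, because in an "$m$-expansion" of $q_i$ an "atom" of the form $u^{*}$ is expanded to a word of $u^{*}$ of length $\le m$, necessarily $u^{j}$ with $j\le m$, hence $u^{j}\in u^{\le m}$, while the $\SSF$ "atoms" are unchanged in $\qm m$ --- so every "$m$-expansion" of $q$ is literally an "expansion" of $\qm m$; (iv) $\qm m\semsubset\Expm{m\cdot m'}$ for $m\ge 1$, because an "expansion" of $\qm m$ coming from $q_i$ expands each "atom" $u^{\le m}$ to some $u^{j}$ with $j\le m$ and length $\le m\cdot m'$, and each $\SSF$ "atom" to a word of length $\le L\le m\cdot m'$, so it is literally an $(m\cdot m')$-expansion of $q_i$, hence an element of $\Expm{m\cdot m'}$.

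Both implications are then one-line chains. If $q\semequiv\Expm{m}$, then $q\semsubset\Expm{m}\semsubset\qm m\semsubset q$ by (iii) and (i), so $q\semequiv\qm m$. If $q\semequiv\qm m$ with $m\ge 1$, then $q\semequiv\qm m\semsubset\Expm{m\cdot m'}\semsubset q$ by (iv) and (ii), so $q\semequiv\Expm{m\cdot m'}$ (and if $q\semequiv\qm 0$ then also $q\semequiv\qm 1$, since $\qm{0}\semsubset\qm{1}\semsubset q$, reducing to the previous case). The biconditional follows: if $q$ is "bounded", then $q\semequiv\Expm{m}$ for some $m$ by \Cref{characterization_boundedness}, hence $q\semequiv\qm m$; conversely, if $q\semequiv\qm\ell$ for some $\ell$, then $q\semequiv\Expm{\ell\cdot m'}$ (taking $\ell\ge 1$ without loss of generality), whence $q$ is "bounded" by \Cref{characterization_boundedness}.

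No step is difficult; the one point that needs care is (iv), the accounting on the lengths of the "expansions" of $q$ that lets one trade $\qm m$ for a finite, uniformly $(m\cdot m')$-bounded family of "expansions" of $q$. The subtlety worth flagging is that a single $\SSF$ "atom" may by itself force "expansions" as long as $L$, possibly exponential in $|q|$, so $m'$ has to be understood as also dominating those fixed lengths (and $m\ge 1$ assumed); for the conclusion actually used afterwards --- the biconditional --- only the existence of \emph{some} computable length bound is needed, which is immediate from finiteness of $\SSF$ languages together with (ii) and (iv).
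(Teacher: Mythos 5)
Your proof is correct and is precisely the intended argument behind this observation, which the paper states without proof: the four containments witnessed by identity homomorphisms ($\qm m\semsubset q$, $\Expm{k}\semsubset q$, $\Expm{m}\semsubset\qm m$, and $\qm m\semsubset\Expm{m\cdot m'}$) chained via \Cref{lem:characterization-containment} and closed off with \Cref{characterization_boundedness}. The one point you flag is a genuine, if harmless, imprecision in the literal statement: with $m'$ defined only as the maximum length of a starred base word, $\Expm{m\cdot m'}$ could be empty when some $\SSF$ atom contains only words longer than $m\cdot m'$ (and similarly for $m=0$), so your patch --- enlarging $m'$ to also dominate the lengths forced by $\SSF$ atoms and reducing $m=0$ to $m=1$ --- is exactly the right reading, and it leaves the biconditional actually used in the sequel untouched.
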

In view of the previous observation, we will rather work with the more intuitive query $\qm{m}$ rather than $\Expm{m}$.
For economy of space we will simply write ``$\lambda \in \qm{m}$'' to denote $\lambda \in \Exp(\qm{m})$.

\noindent{\bf {Recursive and non-recursive atoms}}. 
An "atom" of the form $x \xrightarrow{w^*} y$  is called ""recursive"". 
Let $\intro*\Recwords \subseteq \alphabet^*$ be the set of all words $w$ from 
the labels $w^*$ of "recursive" "atoms" of $q$. Let $\intro*\Nbound \in \Nat$ be the maximum length of a word in a "non-recursive" atom of $q$.
We will prove \Cref{thm-pip2-upper-bound} via the following intermediate results.

\begin{lemma}\label{lem:new:q_equiv_q(n)}
  If $q \in \bigQL$ is "bounded", then it is "equivalent" to $\qm{\Zbound}$ for $\intro*\Zbound \defeq \nratoms[3]{q} \cdot \Nbound \cdot \nrvars q \cdot \Pi_{w \in \Recwords} |w|$.
\end{lemma}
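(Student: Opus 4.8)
The plan is to start from the fact that, since $q$ is bounded, by \Cref{obs:equiv-leq} there exists \emph{some} $\ell$ with $q \semequiv \qm{\ell}$, and then show that already the much smaller bound $\Zbound$ suffices. Since $\qm{\Zbound} \semsubset q$ always holds (each expansion of $\qm{\Zbound}$ is an expansion of $q$), the only thing to prove is $q \semsubset \qm{\Zbound}$. By \Cref{lem:characterization-containment} this amounts to: for every expansion $\lambda \in \Exp(q)$ there is an expansion $\mu \in \Exp(\qm{\Zbound})$ with $\mu \homto \lambda$. So fix an arbitrary $\lambda \in \Exp(q)$. Because $q \semsubset \qm{\ell}$, there is an expansion $\nu \in \Exp(\qm{\ell})$ with a homomorphism $h : \nu \homto \lambda$; and $\nu$ is obtained from $\lambda$'s ``parent CRPQ'' $q_i$ by choosing, for each recursive atom $x \xrightarrow{w^*} y$, an exponent $\le \ell$ (and for each non-recursive atom, some word of length $\le \Nbound$). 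The goal is to replace $\nu$ by a ``short'' expansion $\mu$ of the same $q_i$ — one in which every recursive atom $x\xrightarrow{w^*}y$ uses exponent at most $\Zbound / (\text{stuff})$, i.e.\ $\mu \in \Exp(\qm{\Zbound})$ — that still maps homomorphically into $\lambda$.

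The key step is a pumping/shortening argument on each recursive atom. Consider a recursive atom expanded in $\nu$ as a path $x = z_0 \xrightarrow{w} z_1 \xrightarrow{w} \cdots \xrightarrow{w} z_k = y$ with $k \le \ell$, mapped by $h$ into $\lambda$. The point is that $\lambda$ is itself an expansion of some $q_j$, hence a ``small'' number of variables plus long induced $w'$-paths for the recursive atoms of $q_j$. The image $h(z_0), h(z_1), \dots, h(z_k)$ is a walk in $\lambda$ labelled $w^k$; I want to find a sub-walk that starts at $h(z_0)$, ends at $h(z_k)$, is still labelled by a power of $w$, and is short. The classical way: look at the sequence of ``configurations'' of this walk relative to the structure of $\lambda$ — for each index, record which variable of $q_j$ (or which of the finitely many ``named'' vertices) we are near, together with the offset modulo $\operatorname{lcm}$ of the relevant word lengths; if two indices $i<i'$ share the same configuration \emph{and} the walk between them is a closed loop in the appropriate sense, we can excise the segment $z_i \dots z_{i'}$, shortening the exponent while preserving both the $w$-power labelling and the existence of a homomorphism into $\lambda$. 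Iterating until no such repetition remains bounds $k$ by (number of configurations), which is of order $\nrvars{q} \cdot \Nbound \cdot \prod_{w\in\Recwords}|w|$ for a single atom, and summing/multiplying over the $\le \nratoms{q}$ atoms — together with the cubic slack $\nratoms[3]{q}$ that absorbs the combinatorics of how the excised loops in different atoms interact — yields the stated $\Zbound$.

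Concretely, after fixing $\lambda$ I would: (1)~take the witnessing short-ish expansion $\nu$ of $\qm{\ell}$ with $h:\nu\homto\lambda$; (2)~for each recursive atom of the parent CRPQ, apply the loop-excision argument above to replace its expansion by one of exponent at most $\Zbound$, updating $h$ accordingly on the surviving variables (the excised variables simply disappear, and the gluing endpoints are unchanged, so the homomorphism condition on the other atoms is untouched); (3)~for non-recursive atoms nothing needs to change since their expansions already have length $\le \Nbound \le \Zbound$; (4)~conclude that the resulting $\mu$ lies in $\Exp(\qm{\Zbound})$ and $\mu \homto \lambda$, so $q\semsubset\qm{\Zbound}$, hence $q\semequiv\qm{\Zbound}$.

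The main obstacle I expect is step (2): making the loop-excision simultaneously valid across \emph{all} atoms sharing a variable. Excising a segment inside one atom's expansion is easy in isolation, but the endpoints $x,y$ of that atom are also endpoints (or interior glue points) of other atoms in the CRPQ, so I must ensure the combinatorial ``configuration'' I track is rich enough that excision never breaks a homomorphism elsewhere — essentially tracking, for each vertex along the walk, its full position within $\lambda$'s bounded skeleton plus periodic offset, and arguing a repeated configuration gives a genuine automorphism-like loop of $\lambda$'s relevant sub-path. Getting the counting in this product of periods and variable counts to land exactly at $\nratoms[3]{q}\cdot\Nbound\cdot\nrvars{q}\cdot\prod_{w\in\Recwords}|w|$, rather than something a bit larger, is where the care (and presumably the cubic factor on $\nratoms{q}$) goes.
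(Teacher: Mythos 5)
Your overall skeleton matches the paper's: reduce to $q\semsubset \qm{\Zbound}$, fix an arbitrary $\lambda\in\Exp(q)$, take a witnessing expansion $\nu$ of $\qm{\ell}$ with $h:\nu\homto\lambda$, and shorten the recursive atom expansions of $\nu$ by excising loops. The gap is in the shortening step itself. You claim that iterating the excision ``until no repetition remains'' bounds the exponent $k$ by the number of configurations (position in the skeleton of $\lambda$ plus an offset modulo the relevant periods). That implication fails: the walk $h(z_0),\dots,h(z_k)$ can proceed \emph{injectively} along a single very long recursive atom expansion of $\lambda$ (which, being an arbitrary expansion of $q$, may contain paths of unbounded length). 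Along such a stretch the configuration repeats at many \emph{distinct} vertices, yet no segment is a closed loop, so nothing can be excised without displacing one endpoint of the walk --- and the endpoints are pinned by the other atoms of $\nu$. So ``no excisable loop remains'' does not bound $k$, and your count of configurations does not bound the image of $h$ either.

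The paper's proof is built precisely around this obstacle. It first replaces $\lambda$ by a blow-up $\lambda^+$ in which every long recursive atom expansion is inflated to a length $M'$ exceeding the total size of any expansion of $\qm{M}$, guaranteeing a large \emph{uncolored} buffer in every such path. It then takes a size-minimal $\nu$ mapping into $\lambda^+$ and colors the image (green for images of variables of $q$, blue for non-recursive expansions, red for recursive ones). A long \emph{purely-red} interval can be contracted in $\nu$ \emph{and} the images on one side shifted into the uncolored buffer, contradicting minimality; this --- not a configuration count --- is what bounds $|\Image(h)|$ by $\Zbound$. Only then does the pigeonhole argument produce an actual vertex collision $h(x_i)=h(x_j)$ inside any over-long recursive expansion, which is the genuinely excisable loop. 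A final step contracts the uncolored intervals of $\lambda^+$ to recover a homomorphism into $\lambda$ itself. Without the blow-up and the shifting move, your excision argument cannot get started on monotone stretches, so the proposal as written does not establish the lemma. (Your worry about excisions interacting across atoms sharing a variable is, by contrast, a non-issue once one excises only at genuine image collisions, since the glue variables are untouched.)
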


\begin{lemma}\label{lem:new:generating_algo_pi2p}
  A query $q\in \bigQL$ is "bounded" "iff" $\qm{\Zbound}$ is "equivalent" to $\qm{\Zboundplus}$, for $\intro*\Zboundplus \defeq \nratoms q \cdot \Zbound + 1$.
\end{lemma}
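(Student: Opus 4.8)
The plan is to prove both directions, the nontrivial one being the "if". The "only if" direction is easy: if $q$ is "bounded" then by \Cref{lem:new:q_equiv_q(n)} we have $q \semequiv \qm{\Zbound}$, and since $\qm{\Zbound} \semsubset \qm{\Zboundplus} \semsubset q$ always holds (more "$\le$" gives a syntactically larger query, still contained in $q$), all three are "equivalent"; in particular $\qm{\Zbound} \semequiv \qm{\Zboundplus}$. For the converse, I would prove the contrapositive in the following sharper form: if $q$ is \emph{not} "bounded", then $\qm{m} \semsubset \qm{m+1}$ fails for infinitely many $m$; more precisely, I aim to show that the function $m \mapsto [\text{is } \qm{m} \semequiv \qm{m+1}?]$ cannot jump from "false" to "true" as $m$ grows — i.e. if $\qm{m} \semnequiv \qm{m+1}$ then already $\qm{m'} \semnequiv \qm{m'+1}$ for a not-much-smaller $m'$, and conversely, once it becomes an equivalence it stays one. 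The cleanest route: show that if $\qm{\Zbound} \semequiv \qm{\Zboundplus}$ then in fact $\qm{\Zbound} \semequiv \qm{m}$ for \emph{all} $m \ge \Zbound$, which by \Cref{obs:equiv-leq} / \Cref{characterization_boundedness} gives "boundedness".

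The key step is a \emph{pumping/monotonicity} argument on "expansions". Fix $\lambda \in \qm{m+1}$ for some $m \ge \Zbound$. It is obtained by choosing, for each "recursive" "atom" $x \xrightarrow{w^*} y$, an exponent $\le m+1$, and similarly expanding "non-recursive" "atoms" (whose contribution is bounded by $\Nbound$). To witness $\qm{m+1} \semsubset \qm{m}$ we must find $\lambda^- \in \qm{m}$ with a "homomorphism" $\lambda^- \homto \lambda$ (using \Cref{lem:characterization-containment}). The idea is: take $\lambda' \in \qm{\Zboundplus}$ obtained from $\lambda$ by \emph{capping} every recursive exponent at $\Zboundplus$ (and shrinking the long "atom expansions" that exceed $\Zboundplus$ down so the query is a legal $\qm{\Zboundplus}$-expansion). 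By hypothesis $\qm{\Zbound} \semequiv \qm{\Zboundplus}$, so there is $\mu \in \qm{\Zbound}$ with $\mu \homto \lambda'$. Now one "uncaps": since $\Zboundplus = \nratoms{q}\cdot\Zbound + 1$, a counting argument shows that when we cap a long path down to length $\Zbound$ inside some atom expansion, the image of $\mu$ under the homomorphism $\mu \homto \lambda'$ can only "use" at most $\nratoms{q}\cdot\nrvars q$ of the mapped points on that path (each atom of $\mu$ contributes a bounded stretch), so there is a "free" sub-segment that we can re-inflate to recover a homomorphism into the original $\lambda$. Assembling these re-inflations across all atoms yields the desired $\lambda^- \in \qm{m}$ together with $\lambda^- \homto \lambda$.

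The main obstacle I anticipate is the bookkeeping in this uncapping step: one must argue that re-inflating the capped segments can be done \emph{simultaneously and consistently} across all recursive atoms (the homomorphism $\mu \homto \lambda'$ is a global object, and stretching one atom's expansion must not break the images of atoms of $\mu$ that straddle several atoms of $q$), and that the resulting object is genuinely an "$m$-expansion" of $q$ rather than of $\qm{\Zbound}$. The gap $\Zboundplus - \Zbound = \nratoms{q}\cdot\Zbound$ is precisely what is needed here — it guarantees, by pigeonhole over the $\le \nratoms{q}$ atoms of $q$ and the $\le \nratoms{q}$ atoms of $q'$, a slack interval untouched by any "mapped point", along the lines of the "minimal path" machinery. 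Once that combinatorial core is in place, iterating it (cap at $\Zboundplus$, transfer to $\Zbound$, uncap back to $m$) for arbitrary $m$ gives $q \semequiv \qm{\Zbound}$ and hence "boundedness", completing the "if" direction.
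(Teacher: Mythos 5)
Your proposal is correct and takes essentially the same route as the paper: the easy direction via \Cref{lem:new:q_equiv_q(n)} plus the trivial containments, and for the converse the same cap-at-$\Zboundplus$ / pigeonhole / re-inflate-at-an-unmapped-vertex argument, which the paper simply phrases contrapositively ($q \not\semsubset \qm{\Zbound}$ implies $\qm{\Zboundplus} \not\semsubset \qm{\Zbound}$) and which needs no cross-atom consistency bookkeeping since re-inflating inserts fresh vertices without disturbing the existing homomorphism. The only nit is your count of ``$\nratoms{q}\cdot\nrvars{q}$ mapped points'': the relevant bound is the total size of the $\qm{\Zbound}$-expansion $\mu$, which is at most $\nratoms{q}\cdot\Zbound < \Zboundplus$ — exactly the count that dictates the choice of $\Zboundplus$.
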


\Cref{lem:new:generating_algo_pi2p} reduces the "boundedness problem" for \bigQL\ to the "containment problem" for "succinct CQs". The proof of \Cref{lem:new:generating_algo_pi2p} will use \Cref{lem:new:q_equiv_q(n)}. Assuming these two results, we will first prove \Cref{thm-pip2-upper-bound}.

\begin{proof}[Proof of \Cref{thm-pip2-upper-bound}]
  From \Cref{lem:new:generating_algo_pi2p}, to test whether a query $q$ is "bounded" 
  can be reduced to checking 
  whether $\qm{n} \semequiv \qm{n'}$ for some $n,n'\in \Nat$ of polynomial space (in binary). Given that the ``succinct'' labels of $\SSF$ expressions appearing in $q$ have the form $w^{\leq m}$ or $w^{m}$ with $m$ in binary, the "expansions" of $\qm n$ and  $\qm{n'}$   are at most single-exponential in the "size" of $q$. Moreover, every "expansion" of $\qm n$ can be expressed as a "succinct CQ" of polynomial "size". 
  
  In order to check that $\qm n \semsubset \qm {n'}$ does not hold, we first guess an "$n$-expansion" $\lambda$ of $q$ that is not contained in $\qm{n'}$, as a  "succinct CQ" of polynomial "size". This is possible by the discussion above. Then we check that $\lambda \semsubset \qm{n'}$ does not hold.
To check $\lambda \semsubset \qm{n'}$,  it suffices to guess an "$n'$-expansion" $\lambda'$ of $q$, which, once again,  we assume is a "succinct CQ" of polynomial "size", and then check that $\lambda \semsubset \lambda'$. This is in "NP" by \Cref{thm:containment-succinctCQ}; thus, we have a procedure which is in  "coNP" for testing $\lambda \not\semsubset \lambda'$.

 This gives a "SigmaP2" algorithm to test  $\qm{n} \not\semsubset \qm{n'}$, in other words a "PiP2" algorithm to test $\qm{n} \semsubset \qm{n'}$, and thus whether $q$ is "bounded".

Finally, the existence of the equivalent query is trivial since $\qm{\Zbound}$ can be produced in linear time.
  \end{proof}

Hence it is enough to prove \Cref{lem:new:generating_algo_pi2p} which in turn needs \Cref{lem:new:q_equiv_q(n)}. The next two subsections are focused on the proofs of these two lemmas.

\subsection{Proof of \Cref{lem:new:q_equiv_q(n)}}

We recall the statement of \Cref{lem:new:q_equiv_q(n)}:
\begingroup
  \def\thetheorem{\ref{lem:new:q_equiv_q(n)}}
  \begin{lemma}
    If $q \in \bigQL$ is "bounded", then it is equivalent to $\qm{\Zbound}$.
  \end{lemma}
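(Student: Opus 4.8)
The inclusion $\qm{\Zbound} \semsubset q$ is immediate, since replacing a "recursive" "atom" $x \xrightarrow{w^*} y$ by $x \xrightarrow{w^{\le \Zbound}} y$ only removes possible "atom expansions"; so the whole task is to establish $q \semsubset \qm{\Zbound}$. By \Cref{lem:characterization-containment} this reduces to showing that for every $\lambda \in \Exp(q)$ there is $\lambda' \in \Exp(\qm{\Zbound})$ with $\lambda' \homto \lambda$. Fix such a $\lambda$, say an "expansion" of a disjunct $p$ of $q$. If every "recursive" "atom" of $p$ was expanded in $\lambda$ with exponent $\le \Zbound$, then $\lambda \in \Exp(\qm{\Zbound})$ and we take $\lambda' = \lambda$. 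Otherwise I would use that $q$ is "bounded": by \Cref{obs:equiv-leq}, $q \semequiv \qm{\ell}$ for some $\ell \in \Nat$, and since $\lambda \models q$ trivially we get $\lambda \models \qm{\ell}$, i.e.\ there are $\mu \in \Exp(\qm{\ell})$ and a "homomorphism" $h \colon \mu \homto \lambda$. The plan is then to \emph{shorten} $\mu$ into the desired $\lambda'$: each "recursive" "atom" $x \xrightarrow{w^*} y$ of $p$ is realised in $\mu$ by a "$w$-expansion" labelled $w^{k}$ with $k \le \ell$, which $h$ maps onto a walk in $\lambda$ from $h(x)$ to $h(y)$ labelled $w^{k}$; I would replace it by a "$w$-expansion" labelled $w^{k'}$ for some $k' \le \Zbound$ such that a walk from $h(x)$ to $h(y)$ labelled $w^{k'}$ still exists in $\lambda$. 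Keeping $h$ on the non-"recursive" part and on the variables of $p$, and plugging in the new short walks, then produces $\lambda' \in \Exp(\qm{\Zbound})$ with $\lambda' \homto \lambda$.

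Thus everything comes down to a pumping statement for one fixed "recursive" word $w \in \Recwords$: \emph{if $\lambda$ contains a walk from $a$ to $b$ labelled $w^{k}$ for some $k$, then it contains one labelled $w^{k'}$ for some $k'$ bounded by a function of $q$ only}. The first ingredient is the rigid shape of an "expansion": the variables of $\lambda$ are the $\le \nrvars q$ variables of $p$ (call them the hubs) together with fresh variables that form pairwise internally disjoint directed "paths" — the "atom expansions" — between hubs, with every non-hub vertex having in- and out-degree exactly $1$. Hence any walk in $\lambda$ factors as a partial traversal of one "atom expansion", then a sequence of \emph{complete} traversals of "atom expansions" (which spell out a walk in the multigraph obtained from $p$ by contracting each "atom expansion" to a single edge on the hubs), then a final partial traversal. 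The second ingredient is to pump: following such a walk while recording the pair (current hub, number of letters read so far modulo $|w|$) lives in a space of $\le \nrvars q \cdot |w|$ configurations, and excising the sub-walk between two occurrences of the same configuration removes a length that is a multiple of $|w|$, hence keeps the "label" a power of $w$; iterating this bounds the number of surviving complete "atom expansion" traversals. Each surviving traversal of a non-"recursive" "atom expansion" contributes at most $\Nbound$ letters, which together with the counting of surviving traversals yields the factors $\nratoms[3]{q}$, $\nrvars q$ and $\Nbound$ of $\Zbound$.

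The delicate point — and what I expect to be the main obstacle — is a surviving \emph{complete} traversal of a "recursive" "atom expansion" of $\lambda$, say one labelled $w_R^{\,k_R}$: its length $k_R\,|w_R|$ is \emph{a priori} not controlled by $q$. Here I would invoke word combinatorics. For $w_R^{\,k_R}$ to occur inside a $w^*$-labelled walk it must be a "factor" of $w^{\infty}$, so for large $k_R$ it carries both period $|w_R|$ and period $|w|$ over a span exceeding $|w_R|+|w|$; by the Fine--Wilf theorem $w$ and $w_R$ are then powers of a common primitive word $z$. Working modulo $|z|$ rather than modulo $|w|$, one can renormalise such a traversal, and since for every "recursive" word $|z|$ divides the relevant period, it suffices to run the hub/phase pumping modulo the least common multiple of the lengths $|w'|$ over $w' \in \Recwords$, which is at most $\Pi_{w' \in \Recwords}|w'|$ — the last factor of $\Zbound$. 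Combining this periodicity normalisation with the hub/phase pumping gives $k' \le \Zbound$ for each "recursive" exponent, and assembling the shortened "atom expansions" with $h$ yields the required $\lambda' \in \Exp(\qm{\Zbound})$ with $\lambda' \homto \lambda$, proving $q \semsubset \qm{\Zbound}$ and hence $q \semequiv \qm{\Zbound}$.
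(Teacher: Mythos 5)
Your reduction of the lemma to proving $q \semsubset \qm{\Zbound}$, and your description of the hub-and-path structure of expansions, are fine; the problem lies in the shortening step, and the obstacle you flag as the delicate point is not one that Fine--Wilf can resolve. You fix an \emph{arbitrary} pair $\mu \in \Exp(\qm{\ell})$, $h \colon \mu \homto \lambda$, keep $h$ on the variables of $p$ and on the non-recursive part, and only re-route each recursive atom expansion onto a shorter walk in $\lambda$ between the \emph{same} endpoints $h(x)$ and $h(y)$. Such a shorter walk need not exist. Already for $q = x \xrightarrow{a^*} y$ (which is bounded, being equivalent to its $0$-expansion), take $\lambda$ to be the $a^N$-expansion for $N$ huge and $\mu = \lambda$ with $h$ the identity: the only walk from $h(x)$ to $h(y)$ labelled by a power of $a$ is the full path of length $N$. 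So the pumping statement of your second paragraph (``a walk labelled $w^k$ can be replaced by one labelled $w^{k'}$ with $k'$ bounded in terms of $q$ only'') is false, and no amount of periodicity repairs it: a surviving complete traversal of a long recursive atom expansion of $\lambda$ is a graph-theoretic obstruction (internal vertices have in- and out-degree one, so there is no shortcut), not a word-combinatorial one. Fine--Wilf constrains the labels but cannot conjure a walk that does not exist in $\lambda$.

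What is missing is that the boundedness hypothesis must be exploited through the \emph{choice} of $\mu$ and $h$, not merely through their existence; in the instance above one has to move to the $0$-expansion, i.e.\ change $h$ on the variables of $p$ themselves, which your plan explicitly forbids. This is exactly the content of the paper's argument that your proposal lacks: it first replaces $\lambda$ by a pumped-up expansion $\lambda^+$ (inflating every long recursive atom expansion to a fixed huge length $M'$), then picks $\lambda' \in \qm{M}$ of \emph{minimal size} admitting $h \colon \lambda' \homto \lambda^+$, and shows by a colouring and counting argument that minimality forces $|\Image(h)| \le \Zbound$ --- in particular $h$ cannot sweep across a long recursive atom expansion of $\lambda^+$, because a long purely red stretch in the image could be contracted, contradicting minimality. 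Only then does a pigeonhole on the (now small) image allow $\lambda'$ to be contracted down into $\qm{\Zbound}$, and a separate contraction argument transfers the homomorphism from $\lambda^+$ back to $\lambda$. Your pigeonhole on pairs (hub, position modulo $|w|$) is in the right spirit for the counting, but without the minimality-plus-pumped-target device nothing prevents the homomorphism from being trapped along an unavoidably long path, so the argument as proposed does not go through.
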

  \addtocounter{theorem}{-1}
\endgroup

Before proving the result formally, we describe the key ideas informally. Suppose $q$ is "bounded", then by \Cref{obs:equiv-leq}, it is "equivalent" to $\qm{M}$ for some $M$. Assume for contradiction that this $M$ is necessarily larger than $\Zbound$. This means that there exists some $\lambda \in \Exp(q)$ such that for every $\hat \lambda \in \qm{\Zbound}$ there is no "homomorphism" from $\hat \lambda$ to $\lambda$ but there is some  $\lambda' \in \qm{M}$ (of minimal size) such that there is a "homomorphism" $h: \lambda'\homto \lambda$. 
Since $\lambda'\notin \qm{\Zbound}$ there is some "atom" $x\xrightarrow{w^*}y$ in $q$ such that it is expanded as $x\xrightarrow{w^l}y$ in $\lambda'$ for some $l > \Zbound$. The choice of $\Zbound$ is large enough such that we can argue that using the "homomorphism" $h:\lambda'\homto \lambda$, we can transform $\lambda'$ into some $\lambda''$ (by contracting some "recursive" "expansions") such that there is a "homomorphism" from $\lambda''$ to $\lambda$. This would contradict the minimality of $\lambda'$. The main technical difficulty is then to produce $\lambda''$ and the "homomorphism" from $\lambda''$ to $\lambda$.

To prove \Cref{lem:new:q_equiv_q(n)} we shall need the following claim.

\begin{claim}\label{cl:boundimage}
        There is $\lambda' \in \qm{M}$ and $h : \lambda' \homto \lambda$ such that the size of the image
        $\Image(h)$ of $h$ is at most $\Zbound$.
    \end{claim}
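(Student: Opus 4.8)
The plan is to recast the claim as a statement about small sub-databases of $\lambda$: it suffices to exhibit a sub-database $G\subseteq\lambda$ with $G\models q$ and $|\vars(G)|\le\Zbound$. Indeed, if such a $G$ exists then, since $q\semequiv\qm{M}$, we have $G\models\qm{M}$, so there is $\lambda'\in\Exp(\qm{M})$ and a homomorphism $f:\lambda'\homto G$; composing with the inclusion $G\hookrightarrow\lambda$ (itself a homomorphism) gives $h:\lambda'\homto\lambda$ with $\Image(h)\subseteq\vars(G)$, hence $|\Image(h)|\le\Zbound$. Sub-databases $G$ with $G\models q$ certainly exist (for instance the image in $\lambda$ of any expansion of $q$, taking the expansion $\lambda$ itself), so the whole task is to find a \emph{small} one.

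I would take $G=\Image(h_1)$ for a homomorphism $h_1:\lambda'_1\homto\lambda$ with $\lambda'_1\in\Exp(\qm M)$ chosen first to minimize $|\Image(h_1)|$ and then, among those, to minimize $|\vars(\lambda'_1)|$. This makes $\lambda'_1$ \emph{non-contractible}: whenever $h_1$ identifies two waypoints $v_i,v_j$ ($i<j$) of a recursive atom expanded as $w^{l}$, one may replace that expansion by $w^{l-(j-i)}$, re-routing the collapsed $w$-segment along the $\lambda$-image of the old $v_j\to v_{j+1}$ segment; this keeps a homomorphism into $\lambda$ and never enlarges the image, contradicting minimality. Hence $h_1$ is injective on the waypoints of every recursive atom, and it remains to bound, for each recursive atom $a$ with word $w_a\in\Recwords$ expanded as $w_a^{l_a}$, the exponent $l_a$; then $|\vars(G)|\le|\vars(\lambda'_1)|\le\nrvars q+\nratoms q\cdot\Nbound+\sum_{a}l_a|w_a|$, which I want to be $\le\Zbound$.

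For the bound on $l_a$ I would exploit the structure of $\lambda$ itself: $\lambda$ is an expansion of $q$, hence a gluing of at most $\nratoms q$ atom-paths along at most $\nrvars q$ junction vertices (the variables of $q$), where non-recursive atom-paths have length $\le\Nbound$ and recursive ones are $w_b^{k_b}$-paths. The $l_a{+}1$ waypoints of $a$ map to distinct vertices $p_0,\dots,p_{l_a}$ tracing a walk reading $w_a^{l_a}$, each step $p_i\xrightarrow{w_a}p_{i+1}$ going forward through an atom-path (a directed path cannot backtrack). Cutting this walk at the junction vertices it visits decomposes it into maximal runs, each confined to a single atom-path of $\lambda$, interleaved with junction visits; the number of junction visits can be bounded polynomially in $\nratoms q,\nrvars q$ by a pumping argument on the skeleton of $\lambda$ (if the walk cycles through the skeleton too often, the periodicity of $w_a^{l_a}$ lets one cut a skeleton-cycle and correspondingly shrink $l_a$, contradicting minimality). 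A long run confined to a \emph{recursive} atom-path $Q_b$ forces $w_a^{k}$ to occur as a factor of the purely periodic word $w_b^{\omega}$; by the Fine--Wilf theorem this forces $w_a$ and $w_b$ to be powers of a common primitive word and the relevant alignments to recur with period dividing $\operatorname{lcm}(|w_a|,|w_b|)\le\prod_{w\in\Recwords}|w|$ — whereupon a ``periodic shortcut'' on $\lambda'_1$ (re-routing the $a$-expansion to skip one full period of overlap, landing on vertices already in $\Image(h_1)$) shrinks $l_a$ without enlarging the image, again contradicting minimality. So every confined run is short, and combining with the bounded number of runs and junction visits caps $l_a$ by roughly $\nratoms[2]{q}\cdot\Nbound\cdot\nrvars q\cdot\prod_{w\in\Recwords}|w|/|w_a|$, whence $|\vars(G)|\le\Zbound$.

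The step I expect to be the main obstacle is precisely this last combinatorics-on-words analysis: making the ``periodic shortcut'' and the skeleton-pumping arguments rigorous while simultaneously handling several recursive words of differing lengths — the source of the $\prod_{w\in\Recwords}|w|$ factor and, through the ``atom-paths of $\lambda$ $\times$ atoms of $\lambda'_1$ $\times$ junction crossings'' bookkeeping, of the cubic $\nratoms[3]{q}$ — together with the corner cases (atom-paths of $\lambda$ that are self-loops with coinciding endpoints, runs straddling the seam between two atom-paths, and degenerate words $w=\epsilon$). Everything else — the reduction to finding a small sub-database, the non-contractibility normalization, and the final arithmetic — I expect to be routine.
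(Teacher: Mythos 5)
Your overall strategy---pick a homomorphism $h_1:\lambda'_1\homto\lambda$ from an expansion $\lambda'_1$ of $\qm M$ minimizing the image (then the size), and contradict minimality by contracting repetitions---is in the right spirit, but the step you yourself flag as the main obstacle is a genuine gap, and it is exactly the point where the paper's proof takes a different route. The problematic case is a long run of milestones $p_i,\dots,p_{i+r}$ of a recursive atom expansion $w_a^{l_a}$ of $\lambda'_1$ lying inside a single (possibly enormous) recursive atom expansion $w_b^{k_b}$ of $\lambda$. Your ``periodic shortcut'' proposes to drop $L/|w_a|$ repetitions, with $L=\operatorname{lcm}(|w_a|,|w_b|)$, landing on vertices already in the image; but after removing these repetitions the remaining suffix of the atom expansion (whose final waypoint is anchored to $h_1(y)$ for a variable $y$ of $q$ shared with other atoms) must be re-routed to start $L$ positions earlier inside the $w_b^{k_b}$-path, and from there it cannot rejoin the original continuation: the two starting vertices agree only modulo $|w_b|$, not as vertices, so the shifted walk is still strictly inside the $b$-expansion at the moment the original walk exits it at a junction. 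The modified map is therefore not a homomorphism into $\lambda$, and image-minimality is not contradicted. (The reduction to a small sub-database $G$ and the non-contractibility normalization are fine, but they are essentially reformulations and do not help with this step; similarly the skeleton-pumping bound on junction crossings is plausible but is not where the difficulty lies.)

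The paper resolves precisely this difficulty by not working with $\lambda$ directly: it first inflates $\lambda$ to $\lambda^+\in\Exp(q)$ by replacing every recursive atom expansion of length greater than $\Zbound$ with one of exponent $M'$, where $M'$ is the size of $\qm M$ viewed as a UCQ, hence larger than any expansion of $\qm M$. This guarantees that every long recursive expansion of the target contains uncovered (``uncolored'') vertices; that slack is what allows one to contract whole-period (``purely red'') intervals of length $\prod_{w\in\Recwords}|w|$ in $\lambda'$ and repair the homomorphism by sliding images into the uncolored region, with no realignment problem. A separate second step then checks that enough uncolored $|w|$-blocks remain to contract $\lambda^+$ back to an isomorphic copy of $\lambda$ while preserving $h$. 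To complete your argument you would need an analogue of this slack; as written, Fine--Wilf tells you the run is periodic but gives no way to shorten it while keeping both endpoints of the atom expansion (and the rest of $\lambda'_1$) consistently mapped.
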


  We first prove that \Cref{cl:boundimage} implies \Cref{lem:new:q_equiv_q(n)}.

\begin{proof}[Proof of \Cref{lem:new:q_equiv_q(n)}]    
    By means of contradiction, suppose $q$ is "bounded" but not equivalent to $\qm{\Zbound}$. Then, there is no $\hat \lambda \in \qm{\Zbound}$ such that there is a homomorphism $\hat \lambda \homto \lambda$.
    By \Cref{cl:boundimage}, there is some $\lambda' \in \qm{M}$ and $h : \lambda' \homto \lambda$ such that the size of the image $\Image(h)$ of $h$ is at most $\Zbound$. Pick a minimal\footnote{By minimal $\lambda'$ we mean in terms of the number of variables.} $\lambda'$ that satisfies this property. 
    
    By assumption, $\lambda'\notin \qm{\Zbound}$. Hence, there is some "recursive" "atom expansion" $x \xrightarrow{w^\ell} y = x \xrightarrow{w} x_1 \xrightarrow{w} \dotsb \xrightarrow{w} x_{\ell-1} \xrightarrow{w} y$ of $\lambda'$ where $\ell > \Zbound$.\footnote{Remember that a "recursive" "atom expansion" of $\lambda'$ is simply an "expansion" of a "recursive atom" of $q$ in $\lambda'$.} By \Cref{cl:boundimage} and the pigeonhole principle, there will be two variables $x_i,x_j$ such that $h(x_i) = h(x_j)$ for some $i<j$. But then we can replace the "atom expansion" with $x \xrightarrow{w^{\ell-(j-i)}} y = x \xrightarrow{w} x_1 \xrightarrow{w} \dotsb \xrightarrow{w} x_{i} \xrightarrow{w} x_{j+1} \xrightarrow{w} \dotsb \xrightarrow{w} x_{\ell-1} \xrightarrow{w} y$, obtaining an expansion $\lambda'' \in \qm{M}$ which is strictly smaller than $\lambda'$ such that $h$ restricted to $\lambda''$ forms  a "homomorphism" $h':\lambda'' \homto \lambda$. Moreover, since $h' \subseteq h$, the size of $\Image(h')$ is still at most $\Zbound$, contradicting the minimality of $\lambda'$.
\end{proof}

\begin{figure}[t]
\centerline{
	\includegraphics[width=.49\textwidth]{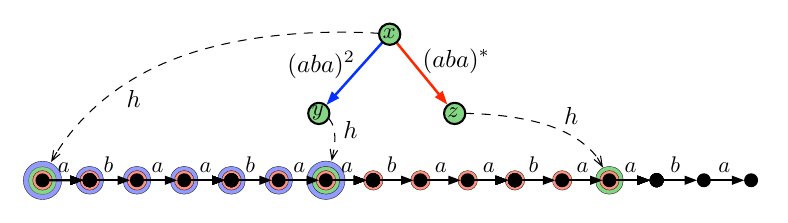}}
	\caption{Consider the query $x \xrightarrow{(aba)^n} y \wedge x \xrightarrow{(aba)^*} z$, where $(aba)^n$ is a $\SSF$ with $n=11$ in binary ("ie", 3).. 
    Below, we have the corresponding 
	coloring scheme in $\lambda^+$ with the respective colors according to a "homomorphism" $h$.}
	\label{fig:color}
\end{figure}

Thus it remains to prove \Cref{cl:boundimage}.
    \begin{proof}[Proof of \Cref{cl:boundimage}]
        Let $M'$ be the size of $\qm{M}$ (seen as a "UCQ").
        Let $\lambda^+$ be the result of replacing each atom expansion $x \xrightarrow{w^\ell} y$ of $\lambda$ such that $\ell > \Zbound$ with $x \xrightarrow{w^{M'}} y$.
        Notice that $\lambda^+ \in \Exp(q)$ ---indeed, "atom expansions" of this kind can only come from   "recursive atoms" (because $\Zbound$ is sufficiently large). Now pick a $\lambda'\in \qm{M}$ such that $\lambda'$ is of minimal "size" and there is a "homomorphism" $h : \lambda' \homto \lambda^+$.
        We show the following:

        \begin{enumerate}[(1)]
            \itemAP $|\Image(h)| \leq \Zbound$; \label{eq:image-bound}
            \itemAP $h$ is also a "homomorphism" to (an "isomorphic copy"\footnote{By \AP""isomorphic copy"" of $\lambda$ we mean an "expansion" $\tilde\lambda$ such that there is a bijection $f$ between the variables 
            of $\lambda$ and $\tilde\lambda$ such that $x  \xrightarrow{a} y$ is an "atom" of $\lambda$ "iff" 
            $f(x)  \xrightarrow{a} f(y)$ is an "atom" of $\tilde\lambda$.} of) $\lambda$, "ie",  $h : \lambda' \homto \lambda$.
            \label{eq:h-witness-hom}
        \end{enumerate}

        \begin{figure*}
            \centering
       \includegraphics[width=\textwidth]{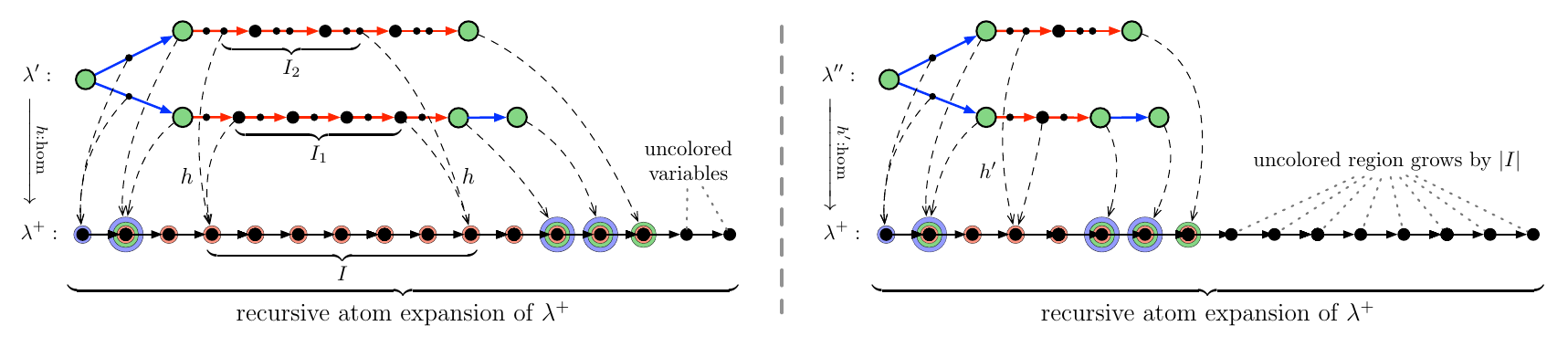}
            \caption{(Left) An example of a "homomorphism" from $\lambda'$ to $\lambda^+$ with a large "purely-red" interval $I$. 
            In $\lambda'$, red (resp.\ blue) edges correspond to "expansions" of "recursive atoms" (resp.\ "non-recursive" "atoms"), and green vertices correspond to variables of $q$.
            The $h$-preimage of $I$ contains two intervals $I_1$ and $I_2$ coming from recursive atom expansions of words of length 2 and 3 respectively.             \\
            (Right) The resulting $\lambda''$. The intervals $I_1, I_2$ are contracted to a single variable in the "homomorphism" to $\lambda^+$. The image of the variables which appeared to the right of $I$ are now shifted to the left, making the number of uncolored variables grow by $|I|-1$.}
            \label{fig:shrinkhom}
        \end{figure*}
        
        \proofcase{\eqref{eq:image-bound}}
        Let us start with the "expansion" $\lambda' \in \qm{M}$ such that $h : \lambda' \homto \lambda^+$. First, we devise a color scheme for the variables of $\lambda^+$.

        Let us color with \textit{blue} any variable of $\lambda^+$ which is the $h$-image of a variable of a "non-recursive" "atom expansion" of $\lambda'$. Color a variable \textit{green} if it is the $h$-image of a variable of $q$, and color it \textit{red} if it is the $h$-image of a variable from a "recursive"  "atom expansion". Of course, some variables may be colored with multiple colors (see Figure \ref{fig:color}).

        Consider any "recursive" "atom expansion" of $\lambda^+$ of the form $x \xrightarrow{w^{M'}} y$. Recall that $M'$ is the size of the $M$-expansion $q(M)$
         considered as a UCQ, and $\lambda' \in q(M)$. Since $M'$ is larger than any "expansion" of $\qm{M}$ (in particular $\lambda'$), the "expansion" $x \xrightarrow{w^{M'}} y$ must contain some uncolored variables. Let $x_0, \dotsc, x_{M'}$ be the variables of the path $x \xrightarrow{w^{M'}} y$, in order, with $x_0 = x$ and $x_{M'}=y$. 
        \AP
        If there is an ""interval"" $I= \set{x_i,x_{i+1}, \dotsc, x_j}$ with $j-i = \intro*\ZredBound \defeq \Pi_{w \in \Recwords} |w|$ which is ""purely-red"" ("ie", whose every variable is colored red and by no other color), this means that the $h$-preimage of $I$ consists of "intervals" of "recursive" "atom expansions" of the form $\tilde w^k$ for some $\tilde w$ and $k$ such that $|\tilde w^k| = \ZredBound$ (note that $\tilde w$ need not be a "recursive atom", it could be some word in $\alphabet^*$). 
        Concretely, $h^{-1}(I) = I_1 \cup \dotsb \cup I_r$ such that each $I_s$ is of the form $I_s = \set{z_{(s,0)}, \dotsc, z_{(s,\ZredBound)}}$ where $h(z_{(s,t)}) = x_{i+t}$ for every $s,t$ and $I_s$ forms a directed path reading some $\tilde w^k$.

        \AP
        Consider $\lambda''$ as the result of ""contracting"" these "intervals" $I_s$ from $\lambda'$: for each interval $I_s$, we remove all atoms of $\lambda'$ inside $I_s$, and we rename $z_{(s,0)}$ as $z_{(s,\ZredBound)}$(see Figure \ref{fig:shrinkhom}). 
        Since this operation corresponds to removing some iterations 
        of a "recursive" "atom expansion" (and \emph{only} "recursive" "atom expansion" since the color is \emph{exclusively} red), we have $\lambda'' \in \qm{M}$.

        Further, note that since there is always at least one uncolored variable $\lambda^+$ (which is a "recursive" "atom expansion"),
        we have
        $\lambda'' \homto \lambda^+$ because it suffices to shift the image of the variables appearing of one side of $I$ as shown in \Cref{fig:shrinkhom}, in particular making the uncolored region grow.
        This is in contradiction with $\lambda'$ being minimal.

        Hence, in an "atom expansion" $x \xrightarrow{w^{M'}} y$ of $\lambda^+$, there cannot be a "purely-red" "interval" of size $\ZredBound$.      On the other hand, since $\nrvars q$ is the number of variables in $q$, there cannot be more than $\nrvars q$ green-colored variables. Finally, there cannot be more than $\nratoms q \cdot \Nbound$ blue-colored variables.
        
        \AP
        Thus, there cannot be a colored "interval" of size greater than $\intro*\ZcolBound \defeq \nratoms q \cdot \Nbound \cdot \nrvars q \cdot \ZredBound$ (as it would imply a $\ZredBound$-sized "purely-red" "interval"). Since $\nratoms q$ is the number of atoms, 
         there cannot be more than $\nratoms q$ colored intervals of maximal size in such "atom expansions" $x \xrightarrow{w^{M'}} y$  (actually, the number of connected components of $q$ would suffice).

        \AP
        How many colored variables does $\lambda^+$ have? Since the maximum length of a colored interval is $\ZcolBound$ and since there are no more than $\nratoms q$ "intervals" on any given "atom expansion" of $\lambda^+$, there cannot be more than 
        $\Zbound = \nratoms[2]{q} \cdot \ZcolBound$ colored variables in $\lambda^+$. Hence, $|\Image(h)| \leq \Zbound$.

        \proofcase{\eqref{eq:h-witness-hom}}
        Next, we show that $h$ is also a "homomorphism" to an "isomorphic copy" of $\lambda$. Observe that if we have an uncolored "interval" of $x \xrightarrow{w^{M'}} y$ of size $|w|$ we can simply "contract" it obtaining some $\lambda^\pm$ such that $h$ is still a "homomorphism" $\lambda' \homto \lambda^\pm$ and $\lambda^\pm \in \Exp(q)$.

        \AP
        Consider any "recursive" "atom expansion" from $\lambda^+$ of the form $x \xrightarrow{w^{M'}} y$. By construction of $\lambda^+$, recall that  $x \xrightarrow{w^{M'}} y$ was obtained by replacing some "atom expansion" 
        $x \xrightarrow{w^{m}} y$ of $\lambda$ where $m>\Zbound$.
        If we could find $M' - m$ such uncolored "intervals" of an "atom expansion" $x \xrightarrow{w^{M'}} y$ as before, we could "contract" them and obtain (an "isomorphic copy" of) $x \xrightarrow{w^m} y$. Further, $h$ would still be a valid "homomorphism".

        But how many such uncolored $|w|$-sized "intervals" are there?
        The worst-case scenario in which we could not find any such uncolored "interval" is the one where every pair of colored-intervals (and there are at most $\nratoms q$ many) separated by $|w|-1$ uncolored-intervals. Since the size and number of colored-"intervals" is bounded by $\ZcolBound$ and $\nratoms q$ respectively, the length of such worst-case "atom expansion" is bounded by
        \knowledgenewrobustcmd{\Zworst}{\cmdkl{Z_{\textit{worst}}}}
        \AP
        $\intro*\Zworst \defeq \nratoms q  \cdot \ZcolBound + (\nratoms q +1) \cdot (|w|-1)$.
        Hence, there are at least $\lceil ((M' \cdot |w|) - \Zworst)/|w| \rceil$ uncolored-"intervals" of size $|w|$, and since
        \begin{align*}
            &\lceil ((M' \cdot |w|) - \Zworst)/|w| \rceil \\
            \geq{}& M' - \Zworst/|w|\\   
            \geq{}& M' - \nratoms q  \cdot \ZcolBound \cdot |w|\\
            \geq{}& M' - \Zbound   \tag{since $\Zbound > \nratoms q  \cdot \ZcolBound \cdot |w|$}\\
            \geq{}& M' - m,\tag{since $m>\Zbound$}
        \end{align*}
        we can "contract" $(M'-m)$-many uncolored intervals among the available ones.
        If we repeat these contractions for every $M'$-"atom expansion", we obtain (an "isomorphic copy" of) $\lambda$ from $\lambda^+$, which shows $h : \lambda' \homto \lambda$.
    \end{proof}
    This concludes the proof of \Cref{cl:boundimage} and hence of \Cref{lem:new:q_equiv_q(n)}.
\subsection{Proof of \Cref{lem:new:generating_algo_pi2p}}
We will actually prove the following statement, which entails \Cref{lem:new:generating_algo_pi2p}.
\begin{lemma}\label{lem:bound-both-sides}
    For every query $q \in \bigQL$, the following are equivalent:
    \begin{enumerate}[(1)]
        \item \label{lem:bound-both-sides:1} $q$ is "bounded",
        \item \label{lem:bound-both-sides:2} $q \semequiv \qm{\Zbound}$,
        \item \label{lem:bound-both-sides:3} $\qm{\Zboundplus} \semequiv \qm{\Zbound}$.
    \end{enumerate}
\end{lemma}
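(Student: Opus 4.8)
The plan is to prove the cycle of implications $\eqref{lem:bound-both-sides:1} \Rightarrow \eqref{lem:bound-both-sides:2} \Rightarrow \eqref{lem:bound-both-sides:3} \Rightarrow \eqref{lem:bound-both-sides:1}$. The implication $\eqref{lem:bound-both-sides:1} \Rightarrow \eqref{lem:bound-both-sides:2}$ is exactly \Cref{lem:new:q_equiv_q(n)}, already proved, so nothing further is needed there. The implication $\eqref{lem:bound-both-sides:2} \Rightarrow \eqref{lem:bound-both-sides:3}$ is also easy: since $\Zboundplus > \Zbound$, we always have $\qm{\Zbound} \semsubset \qm{\Zboundplus} \semsubset q$ (every "expansion" of $\qm{\Zbound}$ is one of $\qm{\Zboundplus}$, and both are "expansions" of $q$); combined with the hypothesis $q \semequiv \qm{\Zbound}$, this sandwiches all three queries into a single equivalence class, giving $\qm{\Zboundplus} \semequiv \qm{\Zbound}$ in particular.

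The crux is the implication $\eqref{lem:bound-both-sides:3} \Rightarrow \eqref{lem:bound-both-sides:1}$: from a \emph{single} equivalence $\qm{\Zboundplus} \semequiv \qm{\Zbound}$ we must bootstrap to "boundedness", i.e.\ to $q \semequiv \qm{\Zbound}$. The natural strategy is a "pumping"/"bootstrapping" argument showing $\qm{\ell} \semequiv \qm{\Zbound}$ for every $\ell \geq \Zbound$, and hence $q \semequiv \qm{\Zbound}$ by \Cref{obs:equiv-leq}. By \Cref{lem:characterization-containment} it suffices to show that every "expansion" $\lambda \in \Exp(q)$ maps homomorphically onto by some "expansion" of $\qm{\Zbound}$. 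Take such a $\lambda$; each "recursive" "atom" $x \xrightarrow{w^*} y$ is realized in $\lambda$ as $x \xrightarrow{w^{\ell}} y$ for some $\ell$. We argue by induction on $\sum_i \max(\ell_i - \Zbound, 0)$ over the "recursive" "atom expansions" of $\lambda$. If this quantity is $0$, then $\lambda \in \qm{\Zbound}$ and we are done. Otherwise some "recursive" "atom expansion" uses $\ell > \Zbound$; we want to use the hypothesis $\qm{\Zboundplus} \semequiv \qm{\Zbound}$ to find a homomorphism from some expansion with \emph{smaller} total excess. The key is that the hypothesis gives us, for the "sub-expansion" obtained by capping all "recursive atoms" at $\Zboundplus$, an expansion $\mu \in \qm{\Zbound}$ with $\mu \homto$ that capped expansion; one then has to \emph{lift} this homomorphism back to $\lambda$ by re-inserting the trimmed copies of $w$ and re-routing $\mu$ through them — this is the same "contraction"/"shift" mechanism used in the proof of \Cref{cl:boundimage}, now run in reverse to localize where the excess can be absorbed. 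Since $\Zboundplus = \nratoms q \cdot \Zbound + 1$ is strictly larger than $\nratoms q$ times the bound on the colored part of any expansion of $\qm{\Zbound}$, a pigeonhole/counting argument guarantees that inside the freshly capped "recursive atom expansion" there is a long uncolored stretch; contracting the appropriate number of $|w|$-blocks there lowers the excess of that atom by at least one while preserving a homomorphism from (an "isomorphic copy" of) an expansion of $\qm{\Zbound}$ onto $\lambda$.

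The main obstacle I anticipate is precisely this re-routing/lifting step: making rigorous the claim that a homomorphism witnessing $\mu \semsubset (\text{capped }\lambda)$ can be transported to a homomorphism witnessing $\mu' \semsubset \lambda$ for an expansion $\mu'$ of $\qm{\Zbound}$ with strictly smaller total excess, using only the single equivalence at the two levels $\Zbound$ and $\Zboundplus$ rather than at arbitrarily large levels. One has to be careful that the colors (green from $q$-variables, blue from "non-recursive" atoms, red from "recursive" atoms) on the capped expansion behave correctly when extra $w$-blocks are spliced back in, and that the counting constant $\Zboundplus$ really does force an uncolored block of length a multiple of $|w|$ — this is where the specific value $\nratoms q \cdot \Zbound + 1$ and the bound $\Zbound$ on the number of colored variables (from \Cref{cl:boundimage}) are used together. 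Once this one induction step is established, the rest is routine: iterating it drives the excess to $0$, yielding $\Exp(q) \semsubset \qm{\Zbound}$, and the reverse containment $\qm{\Zbound} \semsubset q$ is immediate, so $q \semequiv \qm{\Zbound}$ and $q$ is "bounded".
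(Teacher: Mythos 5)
Your plan is in substance the same as the paper's proof. The paper also dispatches the two easy implications via \Cref{lem:new:q_equiv_q(n)} and the sandwich $\qm{\Zbound} \semsubset \qm{\Zboundplus} \semsubset q$, and its proof of the remaining implication (stated contrapositively: $q \not\semsubset \qm{\Zbound}$ implies $\qm{\Zboundplus} \not\semsubset \qm{\Zbound}$) uses exactly the mechanism you identify --- cap every over-long "recursive" "atom expansion" of $\lambda$ at $\Zboundplus$ to obtain $\lambda' \in \qm{\Zboundplus}$, and transport "homomorphisms" between $\lambda'$ and $\lambda$. The only place where you make life harder than necessary is precisely the step you flag as the main obstacle: no induction on ``excess'' and none of the colour machinery of \Cref{cl:boundimage} is needed. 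A single application of the hypothesis \eqref{lem:bound-both-sides:3} yields $\mu \in \qm{\Zbound}$ with $h : \mu \homto \lambda'$, and since $\mu$ has fewer variables than the capped segment $x \xrightarrow{w^{\Zboundplus}} y$ has internal vertices (this is the sole purpose of the constant $\Zboundplus$), the pigeonhole principle gives a vertex $v$ on that segment with $v \notin \Image(h)$. Splicing all $\ell - \Zboundplus$ missing copies of (a rotation of) $w$ at $v$ restores $x \xrightarrow{w^{\ell}} y$ in one shot while keeping $h$ a "homomorphism", because the internal vertices of the segment lie on a simple path, so no "atom" of $\mu$ can be mapped to an edge straddling $v$. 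Thus the lifting is a purely local, one-step argument recovering the full length $\ell$ at once, rather than one block at a time, and your induction (whose inductive object is in any case not cleanly identified, since $\lambda$ is fixed) collapses to a single application.
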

\begin{proof}
    \proofcase{\eqref{lem:bound-both-sides:2} $\Rightarrow$ \eqref{lem:bound-both-sides:1}} is by definition of being "bounded".

    \proofcase{\eqref{lem:bound-both-sides:1} $\Rightarrow$ \eqref{lem:bound-both-sides:3}} By the previous \Cref{lem:new:q_equiv_q(n)}, if $q$ is "bounded" then it is "equivalent" to $\qm{\Zbound}$, and hence also to $\qm{\Zboundplus}$ since $\Zboundplus \geq \Zbound$.

    \proofcase{\eqref{lem:bound-both-sides:3} $\Rightarrow$ \eqref{lem:bound-both-sides:2}}
    We show the contrapositive statement, namely that $q \not\semsubset \qm{\Zbound}$ implies $\qm{\Zboundplus} \not\semsubset \qm{\Zbound}$.
    Assume that $q \not\semsubset \qm{\Zbound}$. Then  there is some "expansion" $\lambda \in \Exp(q)$ such that %$\qm{\Zbound} \not\homto \lambda$.
    $\lambda \not\semsubset\qm{\Zbound}$.
     Observe that the size of every "expansion" from $\qm{\Zbound}$ is strictly bounded by 
    $\Zboundplus = \nratoms q \cdot \Zbound + 1$. Consider the "expansion" $\lambda'$ as the result of replacing each "atom expansion" $x \xrightarrow{w^\ell} y$ of $\lambda$ such that $\ell > \Zbound$ with $x \xrightarrow{w^{\Zboundplus}} y$. Clearly, $\lambda'\in \qm{\Zboundplus}$.    

    We want to show that 
    $\lambda' \not\semsubset \qm{\Zbound}$. 
    By means of contradiction, assume that there is an "expansion"  $\hat\lambda \in \qm{\Zbound}$ such that $h:\hat\lambda \homto \lambda'$. 
    Then, each replaced "atom expansion" $x \xrightarrow{w^{\Zboundplus}} y$ of $\lambda'$ must contain at least one vertex which is not in the image of $h$.
    Then, we can expand it back\footnote{Remember that $\ell > \Zboundplus$ and we can add sufficiently many new vertices in the neighborhood of the vertex that is not in the image of $h$ to obtain $x \xrightarrow{w^{\ell}} y$.} to $x \xrightarrow{w^{\ell}} y$ 
    still results in a "homomorphism". Therefore, $\hat\lambda \homto \lambda'$ implies $\hat\lambda \homto \lambda$; and since 
    $\lambda \not\semsubset\qm{\Zbound}$, we have $\lambda'\not\semsubset\qm{\Zbound}$.
    
    Hence, $q \not\semsubset \qm{\Zbound}$ implies $\qm{\Zboundplus} \not\semsubset \qm{\Zbound}$.
\end{proof}

\section{Lower Bound}
\label{sec-lowerBound}

Here we show that even for the simplest fragment, namely "CRPQ" where the regular expressions are of the form $a^*$ or just $a$, the "boundedness problem" is already "PiP2"-hard. 

\begin{theorem}\label{thm:lowerbound} 
The \IsBdd ~problem for "CRPQ"$(\aSingleton,\aStar)$ is "PiP2"-hard.
\end{theorem}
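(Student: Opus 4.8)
The plan is to give a polynomial-time many-one reduction from the validity problem for $\Pi_2$-quantified Boolean formulas: given $\Phi = \forall \bar u\,\exists \bar v\;\psi(\bar u,\bar v)$ with $\psi$ quantifier-free (which we may assume to be in 3-CNF), decide whether $\Phi$ is true. This problem is $\piPTwo$-complete, so it suffices to build from $\Phi$, in polynomial time, a Boolean CRPQ $q_\Phi$ all of whose atoms carry expressions of the form $a$ or $a^*$ over a fixed finite alphabet, such that $q_\Phi$ is bounded if and only if $\Phi$ is true. Note that a single CRPQ (not a union) is required for the sharper claim, which makes the encoding harder.

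The construction is driven by the expansion characterisation of boundedness: by \Cref{characterization_boundedness} together with \Cref{lem:characterization-containment}, $q_\Phi$ is bounded iff there is a single bound $m$ such that every $\lambda \in \Exp(q_\Phi)$ is homomorphically covered by some $m$-expansion of $q_\Phi$. Accordingly, $q_\Phi$ will have a recursive ``spine'' atom $x \xrightarrow{a^*} y$ — whose expansion to a directed $a$-path of length $N$ represents ``$N$ rounds'' — together with a constant number of non-recursive ($a$-)gadget atoms and a few auxiliary recursive atoms (using a handful of extra letters). The auxiliary atoms are arranged so that any homomorphism between two expansions of $q_\Phi$ is forced to synchronise with the spine and, round by round, to ``read off'' a truth assignment $\alpha$ to the universal variables $\bar u$; the gadget atoms then encode the clauses of $\psi$ within each round so that a small expansion $\lambda'$ homomorphically covering a long expansion $\lambda$ must in effect supply, for the $\alpha$ encoded in $\lambda$, an assignment $\bar v$ together with a per-clause witness that $\psi(\alpha,\bar v)$ holds — and any such $\lambda'$ has size bounded by a polynomial in $|\Phi|$ independent of $N$.

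Correctness then splits into two directions. If $\Phi$ is true, every assignment $\alpha$ to $\bar u$ admits a satisfying extension $\bar v$, so an arbitrary expansion $\lambda$ of length $N$ can be ``folded'' down to a covering expansion $\lambda'$ using the witness for the $\alpha$ that the gadget machinery extracts from $\lambda$; since $\bar v$ and the clause-witnesses have bounded size, all such $\lambda'$ have size at most some fixed $m = \mathrm{poly}(|\Phi|)$, whence $q_\Phi \semequiv \textup{Exp}_m(q_\Phi)$ and $q_\Phi$ is bounded. Conversely, if $\Phi$ is false, fix a ``bad'' assignment $\alpha$ to $\bar u$ with no satisfying extension; for every $N$ we build the expansion $\lambda_N$ that threads $\alpha$ along a spine of length $N$. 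By the gadget constraints, any $m$-expansion homomorphically covering $\lambda_N$ would have to exhibit a $\bar v$ with $\psi(\alpha,\bar v)$ true, which is impossible; hence for each $m$ there is an expansion (namely $\lambda_N$ with $N$ large relative to $m$) not covered by any $m$-expansion, so by \Cref{characterization_boundedness} $q_\Phi$ is unbounded. Since $q_\Phi$ lies in $\crpq{\aSingleton,\aStar}$, this establishes $\piPTwo$-hardness already for this fragment.

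The difficulty is concentrated entirely in designing the gadgets. Having only $a$ and $a^*$ available — in particular no general Kleene star and no union inside a single CRPQ — we cannot branch on truth values directly; the whole encoding must be carried by path lengths and by the rigidity of homomorphisms between directed $a$-paths (such a homomorphism is essentially a shift), with a few extra letters used only to pin down the spine and to mark clause and variable positions within a round. The delicate points are (i) forcing the synchronisation that makes ``read off $\alpha$ from $\lambda$'' and ``propose $\bar v$ in $\lambda'$'' well-defined, and (ii) ensuring \emph{genuine} incompressibility for \emph{every} $N$ in the false case — not merely for one large value — while keeping $q_\Phi$ a single Boolean CRPQ of polynomial size; getting all of this to hold simultaneously is the technical heart of the argument.
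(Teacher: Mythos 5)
Your reduction source ($\forall\exists$-QBF) and the high-level target (a polynomial-size query in $\crpq{\aSingleton,\aStar}$ that is bounded iff the formula is true) match the paper, but what you have written is a plan rather than a proof: the gadgets, which you yourself identify as ``the technical heart of the argument,'' are never constructed, and the sketch as given has a conceptual flaw that the paper's actual construction is specifically designed to avoid.

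The flaw is in your false case. Your unboundedness witnesses are long expansions of a spine atom $x \xrightarrow{a^*} y$, i.e.\ long directed $a$-paths. But such paths are always homomorphically compressible: a short $a$-path maps into a long one by collapsing onto a neighbourhood of one endpoint, which is exactly the content of the paper's Claim~\ref{b_a_a*_bounded} (the query $y\xleftarrow{a}x\xrightarrow{a^*}z \xrightarrow{b} w$ is equivalent to its $1$-expansions). So ``$\lambda_N$ with $N$ large relative to $m$'' is \emph{not} automatically uncovered by $m$-expansions; you need a rigid structure whose long instances genuinely reject homomorphisms from short ones. The paper obtains this by putting $s\cdot s^*$ self-loops on the roots of the clause gadgets of $q_2$: expanding these yields $s$-cycles of length $m+2$, into which no $s$-cycle of length $\le m+1$ can map. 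Relatedly, your picture of a single spine whose length $N$ encodes ``$N$ rounds, each reading off an assignment $\alpha$'' does not work with these expressions — the expansion of one $a^*$ atom carries only a single integer, not $2^{|\bar u|}$ choices; the paper instead uses one recursive gadget $\bullet\xleftarrow{b}\bullet\xleftarrow{a^*}\bullet\xrightarrow{a}\bullet$ \emph{per universal variable}, where expanding $a^*$ zero versus at least one time encodes false versus true.

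Structurally, the paper also decomposes the problem differently, and this decomposition is the missing key lemma in your write-up: it builds two queries $q_1$ (independent of $\varphi$, and provably bounded) and $q_2$ (one gadget per clause), establishes that $\Phi$ is satisfiable iff $q_1 \semsubset q_2$ (an adaptation of a known $\Pi^p_2$-hard containment construction), and then proves that $q_1 \land q_2$ is bounded iff $q_1 \semsubset q_2$ — the forward direction because containment makes $q_1 \land q_2 \semequiv q_1$, the converse via the $s$-cycle argument above. Reducing boundedness to containment of two fixed subqueries sidesteps all of the synchronisation difficulties your sketch defers, and is what makes the construction go through with only $a$ and $a^*$ available. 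As it stands, your proposal does not establish the theorem.
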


    \begin{observation}\label{rk:differences-kr20}
        The queries $q_1$ and $q_2$ defined in the proof of \Cref{thm:lowerbound} are constructed in the same way as the queries $Q_1$ and $Q_2$ of  \cite[proof of Theorem 4.3]{FigueiraGKMNT20}, except that the following changes have been made:
    \begin{enumerate}
        \item $\bullet \xrightarrow{t}\bullet$ has been replaced with $\bullet\xleftarrow{b}\bullet\xleftarrow{a}\bullet$;
        \item $\bullet \xrightarrow{f}\bullet$ has been replaced with $\bullet\xleftarrow{b}\bullet\xrightarrow{a}\bullet$;
        \item $\bullet\xrightarrow{t+f}\bullet$ in $D$ has been replaced with $\bullet\xleftarrow{b}\bullet\xleftarrow{a^*}\bullet\xrightarrow{a}\bullet$;
        \item a self-loop on $s$ has been added to all roots of the $D/E$-trees in $q_1$;
        \item a self-loop on $s\cdot s^*$ (same as $s^+$) has been added to all roots of each clause sub-query in $q_2$. 
    \end{enumerate}
    \end{observation}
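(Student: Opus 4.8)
The plan is to establish the observation by placing the construction of $Q_1, Q_2$ from \cite[proof of Theorem 4.3]{FigueiraGKMNT20} alongside that of $q_1, q_2$ and checking that the five listed operations are exactly what turns one into the other. The purpose of such an observation is to let us inherit the correctness argument of the earlier \emph{containment} reduction for our \emph{boundedness} reduction, so I would first recall, up to renaming, how $Q_1$ and $Q_2$ are built: they are CRPQs whose atoms carry the truth-markers $t$, $f$ and, in the database-side query $D$, the choice-marker $t+f$, with a tree-shaped $D/E$-structure encoding the validity of a $\Pi^p_2$ quantified Boolean formula. The target alphabet of $q_1, q_2$ is $\{a, b, s\}$, which already signals that every $t$-, $f$- and $(t+f)$-labelled atom of the source must be removed by one of the first three edits.

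I would then present each edit as an explicit local rewriting, fixing the fresh intermediate variables and the identification of the extremities so that the operation is deterministic. Writing $u \to v$ for the source and target of an atom, edit (1) replaces each $t$-atom by the two-edge path $\bullet\xleftarrow{b}\bullet\xleftarrow{a}\bullet$ with its endpoints identified with $u, v$; edit (2) replaces each $f$-atom by $\bullet\xleftarrow{b}\bullet\xrightarrow{a}\bullet$ likewise; and edit (3) replaces the unique $t+f$ atom of $D$ by the gadget $\bullet\xleftarrow{b}\bullet\xleftarrow{a^*}\bullet\xrightarrow{a}\bullet$, the $a^*$-atom being the only recursive ingredient these rewritings introduce. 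Edits (4) and (5) are purely additive: edit (4) attaches an $s$-self-loop (a single-letter atom) to each root of the $D/E$-trees in $q_1$, and edit (5) attaches a self-loop expressing $s^+ = s\cdot s^*$, realised with the allowed $a$/$a^*$-atoms, to each root of a clause sub-query in $q_2$. The only genuine care needed here is to match the arrow orientations of the gadgets to the intended drawing and to fix the endpoint identifications consistently, so that the whole rewriting is unambiguous.

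The verification is then a routine atom-by-atom check organised by the label of the source atom: every atom of $Q_1, Q_2$ is either a $t$-, $f$- or $(t+f)$-atom dispatched by (1)--(3), or an atom already over $\{a, b, s\}$ that is left untouched; applying the rules and then adding the self-loops of (4)--(5) reproduces, atom for atom, the queries $q_1, q_2$ as defined in the proof of \Cref{thm:lowerbound}. The main obstacle I anticipate is not depth but \emph{completeness}: I must argue that these five edits account for \emph{all} differences, with nothing altered silently. I would secure this through the alphabet argument above ---since $t$, $f$ and $t+f$ do not occur in $q_1, q_2$, every such occurrence is forced through (1)--(3)--- together with the observation that edits (4) and (5) are the sole genuinely new material, introduced precisely because the boundedness setting, unlike containment, requires a recursive self-loop feature at the roots.
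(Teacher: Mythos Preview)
The paper gives no proof of this observation: it is a stand-alone remark, placed immediately after the statement of \Cref{thm:lowerbound}, whose content \emph{is} the enumerated list of edits. Its role is documentary---to tell the reader how $q_1,q_2$ (defined by the figures) differ from the $Q_1,Q_2$ of the cited reference---so that the containment argument of \cite{FigueiraGKMNT20} can be reused in the proof of \Cref{lem-tfae-sat-cont-bdd}. There is therefore no ``paper's proof'' to compare your proposal against; the observation is simply asserted and any verification amounts to inspecting the two constructions side by side, exactly as you outline.

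Your verification plan is reasonable, but it contains a factual slip that would make the completeness argument fail as written. The alphabet of $q_1,q_2$ is not $\{a,b,s\}$: the paper explicitly sets $\alphabet=\{b,a,s,j,x_1,\dots,x_n,y_1,\dots,y_l\}$. The letters $j$ and the $x_i,y_i$ already occur in the $Q_1,Q_2$ of \cite{FigueiraGKMNT20} (they encode the tree structure and the variable names) and are carried over unchanged; your dichotomy ``either a $t/f/(t{+}f)$-atom or an atom over $\{a,b,s\}$'' should instead read ``\dots or an atom over $\{j,x_1,\dots,x_n,y_1,\dots,y_l\}$ (and possibly $s$) that is left untouched''. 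With that correction, the atom-by-atom check you describe goes through.
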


The proof goes by reduction from $\forall\exists$-QBF satisfiability. Consider an instance of $\forall\exists$-QBF 
\begin{align}
    \Phi=\forall x_1,\dots,x_n\exists y_1,\dots,y_l\:\varphi(x_1,\dots,x_n,y_1,\dots,y_l)    \label{eq:qbf-form}
\end{align}
where $\varphi$ is quantifier free and in 3-CNF.
We define "Boolean" "CRPQ"$(\aSingleton,\aStar)$ queries $q_1$ and $q_2$ on the alphabet $\alphabet=\set{b,a,s,j,x_1,\dots,x_n,y_1,\dots,y_l}$ as depicted in Figures~\ref{fig:q_1} and \ref{fig:q_2} respectively. Note that $q_1$ does not depend on the structure of $\varphi$ and $q_2$ encodes every clause occurring in $\varphi$ as a disjoint gadget. We will prove that $\Phi$ is satisfiable "iff" the "CRPQ" $q_1\land q_2$ is bounded.
\begin{figure}
    \includegraphics[scale=.7]{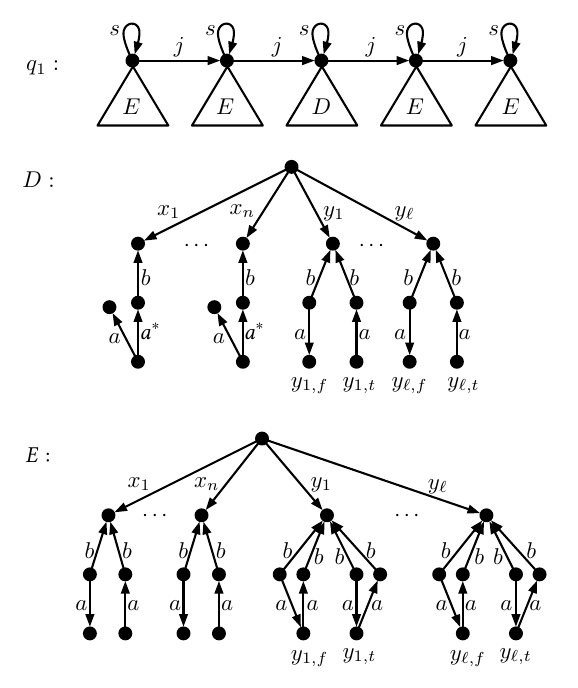}
    \centering
    \caption{Query $q_1$ used in the proof of \Cref{thm:lowerbound}. Variables with identical $y_{i,\alpha}$ label of the gadgets $D$ and $E$ (across all $E$) represent the \emph{same} variable ("eg", $y_{1,f}$ in $D$ and $E$ are the same variable).}
        \label{fig:q_1}
  \end{figure}

The construction is similar to the proof of  \cite[Theorem 4.3]{FigueiraGKMNT20}. However, the context is different since the cited theorem proves "PiP2"-hardness for the "containment problem" for the fragment of "CRPQs" where disjunctions of letters ("ie", expressions like ``$a+b$'') are allowed.

\begin{claim}\label{b_a_a*_bounded}
    The "Boolean" query $q = y\xleftarrow{a}x\xrightarrow{a^*}z \xrightarrow{b} w$
    is "bounded". More precisely, $q\semequiv  \Expm{1}$, "ie" $q$ is equivalent to the disjunction of patterns $\bullet\xleftarrow{b}\bullet\xrightarrow{a}\bullet$ and $\bullet\xleftarrow{b}\bullet\xleftarrow{a}\bullet$.
\end{claim}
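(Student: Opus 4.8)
The plan is to prove the two "containment"s $\Expm{1} \semsubset q$ and $q \semsubset \Expm{1}$ separately; since $\Expm{1}$ is a "UCQ", this shows that $q$ is "bounded" by definition and identifies the equivalent "UCQ". The inclusion $\Expm{1} \semsubset q$ is immediate: every "1-expansion" of $q$ is in particular an "expansion" of $q$, so any "graph database" $G$ with $G \models \lambda$ for some $\lambda \in \Expm{1}$ also satisfies $q$.

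For the converse $q \semsubset \Expm{1}$ I would invoke \Cref{lem:characterization-containment}: it suffices to show that for every $\lambda \in \Exp(q)$ there is some $\mu \in \Expm{1}$ with a "homomorphism" $\mu \homto \lambda$. The "expansions" of $q$ are exactly the "CQs" $\lambda_\ell$ (for $\ell \ge 0$) obtained by expanding the "recursive" "atom" $x \xrightarrow{a^*} z$ into a directed $a$-path $x = z_0 \xrightarrow{a} z_1 \xrightarrow{a} \dotsb \xrightarrow{a} z_\ell = z$, while keeping the "atoms" $x \xrightarrow{a} y$ and $z \xrightarrow{b} w$. If $\ell = 0$, then $z = x$ and $\lambda_0$ is literally the "1-expansion" of $q$ obtained from the empty word, so the identity works. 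If $\ell \ge 1$, take $\mu \in \Expm{1}$ to be the "1-expansion" that chooses the word $a$ for the "recursive atom", namely $\mu = (x' \xrightarrow{a} y') \wedge (x' \xrightarrow{a} z') \wedge (z' \xrightarrow{b} w')$; then the map $x' \mapsto z_{\ell-1}$, $y' \mapsto z_\ell$, $z' \mapsto z_\ell$, $w' \mapsto w$ is a "homomorphism" $\mu \homto \lambda_\ell$, since $z_{\ell-1} \xrightarrow{a} z_\ell$ and $z_\ell = z \xrightarrow{b} w$ are "atoms" of $\lambda_\ell$. (The same map works uniformly, including $\ell = 1$, where $z_{\ell-1} = x$.) Hence $q \semsubset \Expm{1}$, and therefore $q \semequiv \Expm{1}$, so $q$ is "bounded".

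To match the stated "UCQ", I would finally simplify $\Expm{1}$ by taking cores: the $\ell = 0$ "1-expansion" is exactly the pattern $\bullet\xleftarrow{b}\bullet\xrightarrow{a}\bullet$, and the $\ell = 1$ "1-expansion" $\mu$ above is "equivalent" to $\bullet\xleftarrow{b}\bullet\xleftarrow{a}\bullet$ — fold $y'$ onto $z'$ to delete the redundant dangling $a$-edge, the reverse "homomorphism" being the obvious inclusion. The whole argument is routine; the only points that require care are the orientation of the edges hidden in the "$\xleftarrow{}$" shorthand and the degenerate "expansion" coming from $\epsilon \in a^*$, which must be handled as a separate case.
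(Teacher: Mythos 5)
Your proof is correct and follows essentially the same route as the paper: both argue $q \semequiv \Expm{1}$ by exhibiting, for each $n$-expansion with $n\geq 1$, the homomorphism from the $1$-expansion that sends $x$ to the penultimate vertex $z_{n-1}$ of the $a$-path and folds $y$ onto $z$, and then identify the two $1$-expansions with the stated patterns up to hom-equivalence. Your treatment of the $\epsilon$-expansion and the final core-taking step matches the paper's (more tersely stated) observations.
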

\begin{proof}
Let $\lambda_m$ denote the "$m$-expansion" of $a^*$ in $q$.
Note that $\lambda_0$ and $\lambda_1$ are equivalent to the two patterns $\bullet\xleftarrow{b}\bullet\xrightarrow{a}\bullet$ and $\bullet\xleftarrow{b}\bullet\xleftarrow{a}\bullet$, respectively.
Also observe that $\lambda_1 \homto \lambda_n$ for every $n>0$ via the "homomorphism" $\set{w \mapsto w, z \mapsto z, y \mapsto z, x\mapsto z'}$, where $z'$ is the variable such that $z' \xrightarrow{a} z$ is an atom of $\lambda_n$.
Hence, $q$ is equivalent to the disjunction of $\lambda_0$ and $\lambda_1$.
\end{proof}

Note that this query $q$ in \Cref{b_a_a*_bounded} is attached as a `tail' for each of the $x_i$ variable in the $D$ gadget of $q_1$. 
As a consequence we have the following.
\begin{claim}\label{q1_bounded}
    $q_1$ is "bounded".
\end{claim}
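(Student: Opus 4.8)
The plan is to prove the stronger statement $q_1 \semequiv \Expm[q_1]{1}$, which yields "boundedness" of $q_1$ immediately (either by definition, since $\Expm[q_1]{1}$ is a "UCQ", or via \Cref{characterization_boundedness}). As every $1$-"expansion" of $q_1$ is an "expansion" of $q_1$, the containment $\Expm[q_1]{1} \semsubset q_1$ is trivial, so by \Cref{lem:characterization-containment} it remains to show $q_1 \semsubset \Expm[q_1]{1}$, \ie, that for every $\lambda \in \Exp(q_1)$ there exist $\lambda' \in \Expm[q_1]{1}$ and a "homomorphism" $\lambda' \homto \lambda$.

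The first step is to isolate the shape of $q_1$. By \Cref{rk:differences-kr20}, the only "recursive" "atoms" of $q_1$ are the $a^*$-atoms created when each edge $\bullet\xrightarrow{t+f}\bullet$ of the $D$-gadget is replaced by $\bullet\xleftarrow{b}\bullet\xleftarrow{a^*}\bullet\xrightarrow{a}\bullet$ (the self-loops added there carry single letters and introduce no recursion). Hence each "recursive" "atom" of $q_1$ sits inside an isomorphic copy $G_i$ of the query $q$ of \Cref{b_a_a*_bounded}, with "atoms" $y^{(i)}\xleftarrow{a}x^{(i)}\xrightarrow{a^*}z^{(i)}\xrightarrow{b}w^{(i)}$, where $x^{(i)},y^{(i)},z^{(i)},w^{(i)}$ play the roles of $x,y,z,w$ respectively. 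The structural fact I would read off Figure~\ref{fig:q_1} is that $x^{(i)}$ and $y^{(i)}$ are fresh variables occurring in no "atom" outside $G_i$, while $G_i$ touches the rest of $q_1$ only through $z^{(i)}$ and $w^{(i)}$, which are exactly the variables fixed by the "homomorphism" used in the proof of \Cref{b_a_a*_bounded}.

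Now fix $\lambda \in \Exp(q_1)$ and suppose it expands the "recursive" "atom" of $G_i$ as $x^{(i)}\xrightarrow{a^{k_i}}z^{(i)}$ (every other "atom" of $q_1$ is a single letter and has a unique "expansion"). Take $\lambda'$ to be the element of $\Expm[q_1]{1}$ that expands this "atom" as $x^{(i)}\xrightarrow{a^{\min(k_i,1)}}z^{(i)}$ and coincides with $\lambda$ on every other "atom". Since $k_i = 0 \iff \min(k_i,1)=0$, the canonical collapsing of the $\epsilon$-"expansions" is carried out on the same pairs in $\lambda$ and in $\lambda'$, so the variables of $\lambda'$ are identified with the variables of $\lambda$ coming from $q_1$, while $\lambda$ carries in addition the internal variables $u^i_1,\dots,u^i_{k_i-1}$ of each "expansion" with $k_i\ge 2$. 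Define $h\colon \lambda'\to\lambda$ as follows: for each $i$ with $k_i\ge 2$ (so that $x^{(i)}\xrightarrow{a}u^i_1\xrightarrow{a}\cdots\xrightarrow{a}u^i_{k_i-1}\xrightarrow{a}z^{(i)}$ occurs in $\lambda$), set $h(y^{(i)}):=z^{(i)}$ and $h(x^{(i)}):=u^i_{k_i-1}$, and let $h$ be the identity on all remaining variables. This is a well-defined function precisely because $x^{(i)},y^{(i)}$ occur only in $G_i$, and it is a "homomorphism": "atoms" outside the gadgets and "atoms" of gadgets with $k_i\le 1$ are mapped identically onto "atoms" present in $\lambda$; and for a gadget with $k_i\ge 2$, both the "atom" $y^{(i)}\xleftarrow{a}x^{(i)}$ and the "atom" $x^{(i)}\xrightarrow{a}z^{(i)}$ of $\lambda'$ are mapped onto $u^i_{k_i-1}\xrightarrow{a}z^{(i)}\in\lambda$, while $z^{(i)}\xrightarrow{b}w^{(i)}$ is mapped onto itself. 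Hence $\lambda'\homto\lambda$, which finishes the proof.

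The only genuinely delicate step is the structural claim in the second paragraph: one must check from the construction in Figure~\ref{fig:q_1} that each $a^*$-gadget is attached to the remainder of $q_1$ solely through its $z$- and $w$-nodes, so that the local rewiring $x^{(i)}\mapsto u^i_{k_i-1}$, $y^{(i)}\mapsto z^{(i)}$ never conflicts with the identity used on the rest of the query; were the central node $x^{(i)}$ or the leaf $y^{(i)}$ shared with other "atoms", the "homomorphism" of \Cref{b_a_a*_bounded} could not be glued into a global one, so this is exactly where the design of $q_1$ is used. Everything else is the routine "homomorphism" verification above, and the value $m=1$ suffices only because all "non-recursive" "atoms" of $q_1$ are single letters.
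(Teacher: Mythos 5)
Your proof is correct and follows essentially the same route as the paper's: the paper likewise derives the claim from \Cref{b_a_a*_bounded} together with the observation that the $a^*$-gadgets are ``leafs''/tails of $q_1$, so that the local "homomorphism" (which fixes $z$ and $w$ and retracts $x,y$ onto the last $a$-edge) glues with the identity on the rest of the query. You merely spell out the gluing and the $k_i=0$ collapsing case more explicitly than the paper's two-sentence sketch does.
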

\begin{proof}
    From \Cref{b_a_a*_bounded}, we have that 
    $\bullet\xleftarrow{b}\bullet\xleftarrow{a}\bullet\xrightarrow{a}\bullet$ can be embedded into 
    any non-zero "expansion" of $\bullet\xleftarrow{b}\bullet\xleftarrow{a^*}\bullet\xrightarrow{a}\bullet$. 
    Since these gadgets are ``leafs'' in the definition of $q_1$, it follows that for any "expansion" $\lambda$ of $q_1$ there is a corresponding "expansion" of $\Expm 1$ (choosing one of the two gadgets depending on whether the corresponding "expansion" was $0$ or $>0$) which can be "homomorphically" mapped to $\lambda$. Thus, the claim follows.
\end{proof}

 \Cref{thm:lowerbound} is a consequence of the following lemma.

\begin{lemma}
\label{lem-tfae-sat-cont-bdd}
Let $\Phi$ be an instance of $\forall\exists$-QBF and $q_1$ and $q_2$ as described. Then the following are equivalent:

(1) $\Phi$ is satisfiable,
(2) $q_1\semsubset q_2$, and
(3) $q_1\land q_2$ is "bounded".
\end{lemma}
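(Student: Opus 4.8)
The plan is to establish $(1)\Leftrightarrow(2)$ and $(2)\Leftrightarrow(3)$, the second equivalence carrying all the new content. For $(1)\Leftrightarrow(2)$ I would essentially re-run the correctness argument of \cite[Theorem~4.3]{FigueiraGKMNT20}, adapted to the substitutions listed in \Cref{rk:differences-kr20}. By \Cref{lem:characterization-containment}, $q_1\semsubset q_2$ holds iff every "expansion" of $q_1$ receives a "homomorphism" from some "expansion" of $q_2$. The "expansions" of $q_1$ are parametrised by truth assignments $\sigma$ to $x_1,\dots,x_n$: in each $D$-gadget the recursive "atom" introduced by item~(3) is expanded either to a length-$0$ "path" --- which, by \Cref{b_a_a*_bounded}, is exactly the pattern that item~(2) puts in place of $f$ --- or to a "path" of length $\ge1$ --- which "homomorphically" contains the pattern that item~(1) puts in place of $t$; one checks that in each case only the corresponding literal pattern embeds, so the "expansion" faithfully records $\sigma$. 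The two-letter substitutions of items~(1)-(2) and the root self-loops of items~(4)-(5) are transparent to the "homomorphism" bookkeeping of \cite{FigueiraGKMNT20} (an $s^+$-self-loop at a clause-root of $q_2$ always maps onto an $s$-self-loop at a $D/E$-tree root of $q_1$, and conversely). Hence some "expansion" of $q_2$ maps into the "expansion" of $q_1$ associated with $\sigma$ iff $\varphi$ is made true by $\sigma$ together with an assignment of $y_1,\dots,y_l$ (the "homomorphism" selecting, per clause, a satisfied literal), and therefore $q_1\semsubset q_2$ iff $\Phi$ is satisfiable.

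\textbf{$(2)\Rightarrow(3)$.} If $q_1\semsubset q_2$ then $q_1\semequiv q_1\land q_2$: the inclusion $q_1\land q_2\semsubset q_1$ always holds, and every "graph database" satisfying $q_1$ also satisfies $q_2$, hence $q_1\land q_2$. Since $q_1$ is "bounded" by \Cref{q1_bounded}, so is $q_1\land q_2$.

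\textbf{$(3)\Rightarrow(2)$.} I would prove the contrapositive. Assume $q_1\notsemsubset q_2$; by \Cref{lem:characterization-containment} there is $\lambda^{\star}\in\Exp(q_1)$ such that no "expansion" of $q_2$ maps "homomorphically" into $\lambda^{\star}$, equivalently the canonical "graph database" $G^{\star}$ of $\lambda^{\star}$ satisfies $q_1$ but not $q_2$. For every prime $k$ let $G_k\defeq G^{\star}\uplus H_k$, where $H_k$ is the canonical "graph database" of the "expansion" of $q_2$ obtained by expanding each $s^+$-self-loop at a clause-root to an $s$-cycle of length $k$ and every other "atom" minimally. Then $G^{\star}\models q_1$, $H_k\models q_2$, and --- using that, as one reads off the construction, $q_1$ and $q_2$ have disjoint variables --- $G_k\models q_1\land q_2$. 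I claim that for every $m$ and every prime $k>m$, $G_k$ fails the $m$-"expansion" of $q_1\land q_2$ (viewed as a "UCQ"): any "homomorphism" witnessing the contrary would yield a "homomorphism" $\mu\homto G_k$ for some $m$-"expansion" $\mu$ of $q_2$, and since $G^{\star}\not\models q_2$ at least one of the pairwise disjoint clause-gadgets of $\mu$ must be mapped into $H_k$; its clause-root then carries an expanded self-loop $s^j$ with $1\le j\le m$, which must trace a closed $s$-"path" in $H_k$. But the only $s$-cycles of $H_k$ have length $k$ while its remaining $s$-edges form an acyclic graph, so every closed $s$-"path" of $H_k$ has length divisible by $k$ --- impossible for $1\le j\le m<k$. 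Hence $G_k$ witnesses that $q_1\land q_2$ is not equivalent to its $m$-"expansion", for any $m$, so by \Cref{characterization_boundedness} (via \Cref{obs:equiv-leq}) $q_1\land q_2$ is not "bounded".

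The step I expect to be the main obstacle is the closed-$s$-"path" dichotomy used in the last paragraph: it requires describing the $s$-labelled part of $q_2$'s clause gadgets carefully enough to be certain that, after the expansion, the inserted $s^k$-loops are the only short non-trivial $s$-cycles of $H_k$ --- which is precisely why the self-loops of items~(4)-(5) are labelled by $s$ and why expanding the recursive "atoms" is the only mechanism that can create the required divisibility obstruction. By comparison, $(1)\Leftrightarrow(2)$ is routine, modulo re-checking that the edge substitutions of \Cref{rk:differences-kr20} leave the homomorphisms of \cite{FigueiraGKMNT20} intact.
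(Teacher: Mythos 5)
Your proposal is correct and follows essentially the same route as the paper: $(1)\Leftrightarrow(2)$ by adapting the containment reduction of Figueira et al., $(2)\Rightarrow(3)$ via $q_1\land q_2\semequiv q_1$ together with the boundedness of $q_1$, and $(3)\Rightarrow(2)$ via the obstruction that a short $s$-cycle cannot map homomorphically onto a long one. The only cosmetic difference is in the last step: the paper fixes the bound $m$ coming from the assumed boundedness and builds a single expansion of $q_1\land q_2$ whose $s$-cycles have length $m+2$, whereas you argue the contrapositive and exhibit, for every $m$, a disjoint-union counterexample database with $s$-cycles of prime length $k>m$ --- the same length-divisibility argument either way.
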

\begin{proof}
    \proofcase{$(1) \Leftrightarrow (2)$}(Proof sketch.)\footnote{For a formal proof along similar lines, we refer the reader to \cite[Theorem 3]{DBLP:journals/corr/abs-2003-04411}} This follows by an adaptation of the proof of Theorem 4.3 of \cite{FigueiraGKMNT20}.%
    \footnote{Indeed, it suffices to reproduce the cited proof by replacing each occurrence of $\bullet\xrightarrow{t}\bullet$ with $\bullet \xleftarrow{b}\bullet\xleftarrow{a}\bullet$, and each occurrence of $\bullet\xrightarrow{f}\bullet$ with $\bullet \xleftarrow{b}\bullet\xrightarrow{a}\bullet$.}

    The proof goes by  reduction from $\forall\exists$-QBF satisfiability. Given a $\forall\exists$-QBF formula $\Phi$ as in \eqref{eq:qbf-form}, we construct queries $q_1$ and $q_2$ as depicted in \Cref{fig:q_2,fig:q_1}. 
    The query $q_2$ consists of gadgets as in \Cref{fig:q_2} per clause, while $q_1$ is defined in \Cref{fig:q_1}.

    The ``tail parts'' of the edges labeled $x_i$  in the $D$ gadget allows any assignment to the variable $x_i$, that is,  the $a^*$ in $D$  can be expanded for a `true' assignment $\bullet \xleftarrow{b}\bullet\xleftarrow{a}\bullet$ or shrunk for a `false' assignment $\bullet \xleftarrow{b}\bullet\xrightarrow{a}\bullet$. The valuation for the $y_i$ variables comes from $q_2$ (as in \Cref{fig:q_2}):  the $y_{i,tf}$ node  in $q_2$ embeds uniquely into one of the $y_{i,t}$ or $y_{i,f}$ nodes of $q_1$ to witness the containment.  
    
    When the formula is satisfiable, there exists a valuation for all the $y_i$ variables for any valuation of the $x_i$ variables. The $D$ gadgets allows the choice of any true/false assignments for $x_i$ variables, and we can embed the  $y_{-,tf}$ nodes of $q_2$ into exactly one of the $y_{-,t}$, $y_{-,f}$ nodes of $q_1$ (depending on whether the $y_i$ takes on true or false to make the formula true). Thus, this gives the containment of $q_1$ in $q_2$. 
    
    For the converse, assume the containment. Then we have an embedding of each clause of  $q_2$ into the given graph database for a given choice of ``tail parts'' of the $x_i$ labeled edges of $D$. In particular, we can always map a literal in each clause of $q_2$ to $D$: this ensures satisfiability of the formula.

    \smallskip

    \proofcase{$(2) \Leftrightarrow (3)$}
    For the left-to-right direction, suppose $q_1\semsubset q_2$. Note that we always have $q_1\land q_2 \semsubset q_1$ and since $q_1 \semsubset q_2$ we also have $q_1 \semsubset q_1\land q_2$. Thus, $q_1\land q_2 \semequiv q_1$. From \Cref{q1_bounded}, $q_1$ is "bounded" and hence $q_1\land q_2$ is also "bounded".

    For the right-to-left direction, assume $Q=q_1\wedge q_2$ is "bounded". In view of \Cref{characterization_boundedness}, it is equivalent to $\Expm[Q]{m}$, for some $m$.
    We show that this implies $q_1\semsubset q_2$.
    By contradiction, assume there is an "expansion" $\lambda_1$ of $q_1$ such that for all "expansions" $\lambda_2$ of $q_2$, $\lambda_2$ does not "homomorphically" map to $\lambda_1$.

    Now consider the "expansion" $\lambda_2$ of $q_2$ in which all $s^*$ "atoms" are expanded exactly $m+1$ times ---"ie",
    $\bullet \xrightarrow{s^*} \bullet$ is replaced with an $s^{m+1}$-path and hence we get a loop on $s^{m+2}$ at each `base' node in $q_2$.

    Let $\lambda=\lambda_1\wedge \lambda_2$, which is an "expansion" of $Q$. 
    Since $Q$ is equivalent to $\Expm[Q]{m}$, we have $\lambda\semsubset \Expm[Q]{m}$. In other words, there is an "expansion" $\hat\lambda$ of $\Expm[Q]{m}$  and a "homomorphism" $h: \hat\lambda \homto \lambda$. Note that $\hat\lambda$ must be of the form $\hat\lambda= \hat\lambda_1\wedge \hat\lambda_2$, where $\hat\lambda_1$ is an "expansion" of $q_1$ and $\hat\lambda_2$ is an "expansion" of $q_2$ where every $s$-cycle in $\hat\lambda_2$ is of length at most $m+1$. 
    
    But then $\hat\lambda_2\homto \lambda_1$ is not possible by the choice of $\lambda_1$. This implies that we have $\hat\lambda_2 \homto\lambda_2$ which means that an $s$-cycle of length $\leq m+1$ is "homomorphically" mapped to an $s$-cycle of length $m+2$, which is not possible. Hence, no such "homomorphism" $h$ can exist, which is a contradiction.
\end{proof}

    \begin{figure}
        \includegraphics[scale=.7]{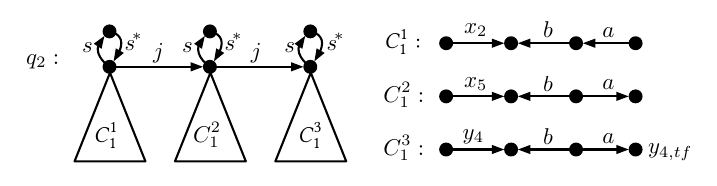}
        \centering
        \caption{An example for query $q_2$, used in the proof of \Cref{thm:lowerbound}, for the clause $(x_2\vee \neg x_5\vee\neg y_4)$. There will be one such gadget for every clause of the formula in $q_2$. Variables having identical $y_{i,\textit{tf}}$-label represent the \emph{same} variable. Only the final variable of paths representing $y_i$-variables from $\Phi$ may have a $y_{i,\textit{tf}}$-label.}
        \label{fig:q_2}
    \end{figure}

\section{Boundedness by letter}
\label{sec:letter-boundedness}

If a query $q$ in \ucrpq{\SSF, \aStar} is not "bounded", we can still try to bound it in as many "atoms" as possible, this is the object of the $\bddInPb{A}$ problem. Recall that $q \in \ucrpq{\aSingleton,\aStar}$ is $\bddIn{A}$ if it is equivalent to a $\ucrpq{\aSingleton,\aStar[\bar A]}$ query, with $\bar A = \alphabet \setminus A$. 

\AP
For $A \subseteq \alphabet$ and $m \in \Nat$, let $\intro*\qAm{A,m}$ be the result of replacing each $a^*$ s.t.\ $a \in A$ with $a^{\leq m}$. 
We still have the characterization along the lines of \Cref{characterization_boundedness}.

\begin{proposition}\label{characterization_Aboundedness}
  $q \in \crpq{\SSF,\aStar}$ is $\bddIn{A}$ if, and only if, $q$ is "equivalent" to $\qAm{A,m}$ for some $m \in \Nat$. 
\end{proposition}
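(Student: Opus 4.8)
The plan is to adapt the proof of \Cref{characterization_boundedness} so that only the letters of $A$ get ``tamed''. The $(\Leftarrow)$ direction I would handle immediately: replacing each recursive atom $x\xrightarrow{a^*}y$ with $a\in A$ by $x\xrightarrow{a^{\le m}}y$ turns it into an $\SSF$ expression, so $\qAm{A,m}\in\ucrpq{\SSF,\aStar[\bar A]}$, and hence $q\semequiv\qAm{A,m}$ already witnesses that $q$ is $\bddIn{A}$.

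For the harder direction $(\Rightarrow)$, assume $q\semequiv q'$ with $q'\in\ucrpq{\SSF,\aStar[\bar A]}$. I would set $N$ to be the maximum length of a word denoted by an $\SSF$-atom of $q'$ — this is finite, since every $\SSF$ expression denotes a finite language — let $k'$ be the number of atoms of $q'$, and put $m:=N\cdot k'$. The crucial observation is that every expansion $\mu\in\Exp(q')$ carries at most $m$ edges labelled by letters from $A$: such edges can only come from $\SSF$-atoms, since the $\aStar[\bar A]$-atoms produce only $\bar A$-labelled edges. The claim is then $q\semequiv\qAm{A,m}$. The inclusion $\qAm{A,m}\semsubset q$ is trivial, as each $(A,m)$-expansion of $q$ is an expansion of $q$. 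For the converse, fix $\lambda\in\Exp(q)$: by $q\semsubset q'$ and \Cref{lem:characterization-containment} I get $\mu\in\Exp(q')$ with a homomorphism $h:\mu\homto\lambda$, and by $q'\semsubset q$ and the same lemma I get $\nu\in\Exp(q)$ with a homomorphism $g:\nu\homto\mu$.

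The next step is to contract $\nu$, along the lines of \Cref{sec-upperBound}. Consider an atom-expansion $x\xrightarrow{a^\ell}y = x_0\xrightarrow{a}x_1\xrightarrow{a}\dotsb\xrightarrow{a}x_\ell$ in $\nu$ arising from a recursive atom $x\xrightarrow{a^*}y$ of $q$ with $a\in A$ and $\ell>m$. Under $g$ the $\ell$ edges of this path are sent to $a$-labelled edges of $\mu$, of which there are at most $m$, so two of the images coincide, yielding indices $i<j$ with $g(x_i)=g(x_j)$. Replacing the atom-expansion by $x\xrightarrow{a^{\ell-(j-i)}}y$ — keeping $x_0,\dotsc,x_i,x_{j+1},\dotsc,x_\ell$ and identifying $x_i$ with $x_j$ — again yields an expansion of $q$ (only iterations of a recursive atom have been removed) on which $g$ restricts to a homomorphism into $\mu$. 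Iterating over all such atom-expansions produces some $\nu^-\in\qAm{A,m}$ with $g:\nu^-\homto\mu$, so $h\circ g:\nu^-\homto\lambda$ and $\lambda\semsubset\nu^-\semsubset\qAm{A,m}$. Since $\lambda$ was arbitrary, $q\semsubset\qAm{A,m}$, hence $q\semequiv\qAm{A,m}$.

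I expect the contraction step to be the only real obstacle: one must check that shrinking the $A$-recursive atom-expansions of $\nu$ leaves a legitimate expansion of $q$ and does not break the homomorphism $g$ into $\mu$. This is exactly where the hypothesis that the witness $q'$ has \emph{no} recursion on letters of $A$ is used — it is what bounds the number of $a$-edges of $\mu$ for every $a\in A$ — and the remaining bookkeeping is routine given the contraction arguments already developed in \Cref{sec-upperBound}.
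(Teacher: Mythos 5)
Your proof is correct, but it takes a genuinely different route from the paper's own argument for this proposition. For the $(\Rightarrow)$ direction the paper works with a single homomorphism: it takes the witness $q'\in\ucrpq{\SSF,\aStar[\bar A]}$, an expansion $\lambda'\in\Exp(q')$ with $\lambda'\homto\lambda$, sets $m$ to the largest exponent occurring in $q'$, and argues that the homomorphic image of $\lambda'$ inside $\lambda$ already forms an expansion of $\qAm{A,m}$, so that the identity map witnesses $\lambda\semsubset\qAm{A,m}$. You instead make a round trip through the containments $q\semsubset q'\semsubset q$ to obtain $\mu\in\Exp(q')$ and $\nu\in\Exp(q)$ with $\nu\homto\mu\homto\lambda$, and then shrink the over-long $A$-recursive atom expansions of $\nu$ by pigeonhole on the at most $N\cdot k'$ many $A$-labelled edges of $\mu$. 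This is precisely the contraction technique the paper deploys elsewhere (in \Cref{lem:new:q_equiv_q(n)} and in the proof of \Cref{thm-A-bdd-confluence}), transplanted to this proposition. What your version buys is that the object you produce, $\nu^-$, is by construction an expansion of $q$ whose $A$-recursion is bounded by $m$, hence genuinely a member of $\Exp(\qAm{A,m})$; the paper's ``image of $h(\lambda')$'' is a homomorphic image, which may identify variables, and recognizing it as an expansion of $\qAm{A,m}$ requires an extra step that the paper leaves implicit. The price is a somewhat larger constant ($m=N\cdot k'$ rather than the maximal exponent appearing in $q'$), which is immaterial since the statement only asserts the existence of some $m$. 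Your $(\Leftarrow)$ direction coincides with the paper's.
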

\begin{proof}
 $(\Leftarrow)$ follows by definition of $\bddIn{A}$. For $(\Rightarrow)$, if $q$ is $\bddIn{A}$ then let $q'$ be the query "equivalent" to $q$ where $q'$ is $\bddIn{A}$ where every "atom" involving $a\in A$ is an \SSF or of the form $a^\le k$.   Let $m$ be the maximum such that $a^{\le m}$ appears in $q'$. 
 
 We claim that $q$ is "equivalent" to $\qAm{A,m}$. Clearly $\qAm{A,m}\semsubset q$.
  Now let $\lambda \in \Exp(q)$. So there is some $\lambda'\in \Exp(q')$ such that there is a "homomorphism" $h:\lambda'\homto \lambda$. Consider the image of $h(\lambda')$. We claim that this forms an "expansion" $\hat \lambda \in \qAm{A,m}$ because all "atoms" that contains $a\in A$ in $\lambda'$ are of the form $a^{\le k}$ for some $K\le m$. Hence, the identity function from $\hat\lambda$ to $\lambda$ gives the required "homomorphism".
\end{proof}

In fact, there is a unique maximal $A \subseteq \alphabet$ such that $q$ is $\bddIn{A}$ because of the following property.

\

\begin{theorem}\label{thm-A-bdd-confluence}
    If a query $q \in \ucrpq{\aSingleton,\aStar}$ is $\bddIn{A}$ and $\bddIn{B}$ then $q$ is also $\bddIn{(A\cup B)}$.
\end{theorem}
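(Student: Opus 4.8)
The plan is to use the characterization from \Cref{characterization_Aboundedness}: $q$ is $\bddIn{A}$ iff $q \semequiv \qAm{A,m}$ for some $m$, and likewise $q$ is $\bddIn{B}$ iff $q \semequiv \qAm{B,m'}$ for some $m'$. Let $M = \max(m,m')$. Since increasing the bound only enlarges the set of expansions while preserving equivalence to $q$ (each $\qAm{A,m}$ expansion maps homomorphically into the corresponding $q$ expansion, and conversely every $q$-expansion still homomorphically receives some bounded expansion once $M$ is large enough), we may assume $q \semequiv \qAm{A,M}$ and $q \semequiv \qAm{B,M}$ with a \emph{common} bound $M$. The goal is then to show $q \semequiv \qAm{A\cup B, M'}$ for a suitable $M'$, which by \Cref{characterization_Aboundedness} gives $\bddIn{(A\cup B)}$.

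\textbf{The core argument.} Take $M' \defeq M$ (or a modest polynomial blow-up thereof if needed for the homomorphism bookkeeping). One direction, $\qAm{A\cup B, M'} \semsubset q$, is immediate since every expansion of $\qAm{A\cup B,M'}$ is an expansion of $q$. For the other direction, let $\lambda \in \Exp(q)$; I must find an expansion of $\qAm{A\cup B, M'}$ homomorphically mapping into $\lambda$. Using \Cref{lem:characterization-containment} with $q \semequiv \qAm{A,M}$, there is $\lambda_A \in \Exp(\qAm{A,M})$ with $\lambda_A \homto \lambda$, i.e.\ in $\lambda_A$ every $a$-atom with $a \in A$ is expanded at most $M$ times. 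Now $\lambda_A$ is itself an expansion of $q$, so applying $q \semequiv \qAm{B,M}$ (again via \Cref{lem:characterization-containment}, but this time contracting only $b$-atoms for $b \in B$) to $\lambda_A$, we get $\lambda_{AB} \in \Exp(\qAm{B,M})$ with $\lambda_{AB} \homto \lambda_A \homto \lambda$. The key point is that the contraction step that produces $\lambda_{AB}$ from (the $q$-expansion underlying) $\lambda_A$ only \emph{shortens} $b$-labeled recursive atoms for $b \in B$, and in particular does not re-lengthen any $a$-atom for $a \in A$: a contraction of a $b$-recursive-atom-expansion removes whole iterations of a single $b$-cycle, so $a$-atoms with $a\in A \setminus B$ retain their bounded length $\le M$, while $b$-atoms with $b \in B$ now have length $\le M$, and atoms labeled by letters in $A \cap B$ are bounded by $M$ throughout. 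Hence every recursive atom of $\lambda_{AB}$ whose letter lies in $A \cup B$ is expanded at most $M$ times, i.e.\ $\lambda_{AB} \in \Exp(\qAm{A\cup B, M})$, and $\lambda_{AB}\homto \lambda$. Since $\lambda \in \Exp(q)$ was arbitrary, \Cref{lem:characterization-containment} gives $q \semsubset \qAm{A\cup B, M}$, completing the equivalence.

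\textbf{Where the care is needed.} The subtle step is the claim that contracting $B$-recursive atoms in $\lambda_A$ "commutes" with the $A$-boundedness already achieved — concretely, that the $\qAm{B,M}$-expansion we extract for $\lambda_A$ does not undo the fact that $A$-atoms are short. This hinges on observing that, by definition, an expansion of $\qAm{B,M}$ only constrains $b$-atoms for $b \in B$, while every other atom of $q$ is still expanded by an \emph{arbitrary} word of its language; but since $\lambda_A$ \emph{is} a concrete expansion of $q$ in which the $A\setminus B$ atoms already have length $\le M$, the expansion witnessing $q \semequiv \qAm{B,M}$ on the instance "$\lambda_A$ as a graph database" can be taken to be $\lambda_A$ itself with $b$-atoms ($b\in B$) possibly contracted — formally, take the homomorphic image $h(\Exp(\qAm{B,M}) \ni \mu)$ inside $\lambda_A$ and note that it is an expansion whose $A\setminus B$ atoms are sub-paths of the already-short $A\setminus B$ atoms of $\lambda_A$, hence still $\le M$. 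This is the same "image is an expansion" maneuver used in the proof of \Cref{characterization_Aboundedness}. Once that observation is in place, everything else is routine, and one concludes $q$ is $\bddIn{(A\cup B)}$, which also immediately yields \Cref{thm-Max-bdd-result}\ref{thm-Max-bdd-result:2} by taking $A$ to be the union of all $A'$ with $q$ $\bddIn{A'}$.
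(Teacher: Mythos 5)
Your overall strategy coincides with the paper's: compose the two containments to obtain, for each $\lambda\in\Exp(q)$, a chain $\lambda_{AB}\homto\lambda_A\homto\lambda$, and then argue that the outermost expansion $\lambda_{AB}$ is simultaneously short on its $A$-atoms and its $B$-atoms. The gap is in how you justify that last point. You claim that the witness for $\lambda_A\semsubset\qAm{B,M}$ can be taken to be the homomorphic image $h(\mu)$ inside $\lambda_A$, and that this image is an expansion whose $(A\setminus B)$-atom expansions are sub-paths of the (already short) $(A\setminus B)$-atom expansions of $\lambda_A$. Neither half of this is sound. First, the image of an atom expansion under a homomorphism is a directed walk, not a path: if the walk is not injective (for instance if it winds around an $a$-cycle of $\lambda_A$ arising from an atom $x\xrightarrow{a^*}x$, or from several $a^*$-atoms whose endpoints coincide), then $h(\mu)$ is not an expansion of $q$ at all. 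Second, even when the image is a simple path, an $a$-walk in $\lambda_A$ is not confined to a single atom expansion: it can chain through several $a$-labelled atom expansions glued at variables of $q$, so its length is only bounded by about $\nratoms q\cdot M$, not by $M$. This also shows your proposed bound $M'=M$ is false in general (two $a^*$-atoms in series, with endpoints pinned by the rest of the query, force an $a$-atom expansion of $\mu$ of length $2M$), which is precisely why the paper's statement carries the extra factor $\nratoms q$ on one of the two letter sets.

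The repair is the one the paper uses: choose $\lambda_{AB}$ of \emph{minimal size} among the expansions of $\qAm{B,M}$ that map into $\lambda_A$, and argue by pigeonhole that any $a$-atom expansion of $\lambda_{AB}$ of length greater than $M\cdot\nratoms q$ (for $a\in A\setminus B$; the case $a\in A\cap B$ is already handled since those atoms are $B$-bounded in $\lambda_{AB}$) must map to a walk revisiting a vertex of $\lambda_A$, because $\lambda_A$ contains at most $M\cdot\nratoms q$ $a$-labelled edges; the loop of that walk can be excised from $\lambda_{AB}$, yielding a strictly smaller expansion of $\qAm{B,M}$ still mapping into $\lambda_A$, contradicting minimality. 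With that replacement, and the target query weakened to $\qAm{A\cup B,\,M\cdot\nratoms q}$, your argument goes through; the remaining steps, including the reduction to a common bound $M$, the easy direction of the containment, and the derivation of the second item of \Cref{thm-Max-bdd-result}, are fine.
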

\begin{proof}[Proof idea]
\AP
For any $A \subseteq \alphabet$, $n \in \Nat$ and "UCRPQ"$(\aSingleton,\aStar)$ query $q$, let $q[A \mapsto n]$ be the result of replacing in $q$ every $\xrightarrow{a^*}$ with $\xrightarrow{a^{\leq n}}$ for every $a \in A$.  
Hence, by \Cref{characterization_Aboundedness}, $q$ is $\bddIn{A}$  if it is "equivalent" to $q[A \mapsto n]$ for some $n$. 
Let $q[A \mapsto n, B \mapsto m]$ denote $(q[A \mapsto n]) [B \mapsto m]$.

From \Cref{characterization_Aboundedness} let $q$ be equivalent to both $\qAm{A,n}$ and $\qAm{B,m}$ and let $N_q = \max(n,m)$. Since $q$ is $\bddIn{A}$ and $\bddIn{B}$, we have $q[A \mapsto N_q] \semequiv q[B \mapsto N_q] \semequiv q$.
It suffices to show that $q$ is "contained" in $q[A \mapsto N_q, B \mapsto N_q \cdot \nratoms q]$. We will show that for every "expansion" $\lambda$ of $q$ there is an "expansion" $\lambda'$ of $q[A \mapsto N_q, B \mapsto N_q \cdot \nratoms q]$ that maps "homomorphically" to $\lambda$.

Take then such $\lambda$. Since $\lambda$ is "contained" in $q[B \mapsto N_q]$, there is an "expansion" $\lambda_B$ of $q[B \mapsto N_q]$ which maps to $\lambda$. 
Since $q[B \mapsto N_q]$ is "contained" in $q[A \mapsto N_q]$, there is an "expansion" $\lambda_A$ of $q[A \mapsto N_q]$ which maps "homomorphically" to $\lambda_B$. Pick such a $\lambda_A$ of minimal "size".
Hence, we have the "homomorphisms" $h_1: \lambda_A \homto \lambda_B$  and $h_2: \lambda_B \homto \lambda$.

We next show that $\lambda_A$ is in fact an "expansion" of $q[A \mapsto N_q, B \mapsto N_q \cdot \nratoms q]$, which would already prove the statement. By means of contradiction, if it is not the case, there is some $B$-atom "expansion" of length $m > N_q \cdot \nratoms q$ in $\lambda_A$.
% \anantha{Last line in above sentence was $\lambda_B$. I think its a typo and I have changed it to $\lambda_A$. Please check}
 
 This means that the $h_1$-image of such a $B$-atom "expansion" induces a cycle in $\lambda_B$. We can then ``cut out'' the cycle producing another "expansion" $\lambda'_A$ of $q[A \mapsto N_q]$ which is strictly smaller than $\lambda_A$ and which still maps "homomorphically" to $\lambda_B$ via $h_1$ (restricted to $\lambda'_A$). This contradicts the minimality of $\lambda_A$.
\end{proof}

Towards proving the upper bound of \Cref{thm-Max-bdd-result}, \Cref{lem:new:q_equiv_q(n)} and \Cref{lem:new:generating_algo_pi2p} can be generalized trivially:

\begin{lemma}\label{lem:new:q_equiv_q(A,n)}
 Let $q \in \ucrpq{\aSingleton,\aStar}$ and $A\subseteq \alphabet$. If $q$ is $\bddIn{A}$, then it is "equivalent" to $\qAm{A,\Zbound}$ for $\reintro*\Zbound \defeq \nratoms[3]{q} \cdot \Nbound \cdot \nrvars q \cdot \Pi_{w \in \Recwords} |w|$.
\end{lemma}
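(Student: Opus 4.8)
The plan is to replay the proof of \Cref{lem:new:q_equiv_q(n)} (and its core, \Cref{cl:boundimage}) almost verbatim, reading ``recursive atom'' as ``$A$-recursive atom'', i.e. an atom $x \xrightarrow{a^*} y$ with $a \in A$, and treating the $b$-labelled atoms $x \xrightarrow{b^*} y$ with $b \notin A$ as fixed scaffolding that is never contracted, just like the genuinely non-recursive atoms. First I would set up the contradiction: assume $q$ is $\bddIn{A}$ but not equivalent to $\qAm{A,\Zbound}$. By \Cref{characterization_Aboundedness}, $q \semequiv \qAm{A,M}$ for some $M \in \Nat$; we may assume $M > \Zbound$, since otherwise $a^{\le M} \subseteq a^{\le \Zbound}$ as languages gives $\qAm{A,M} \semsubset \qAm{A,\Zbound} \semsubset q$, which together with $q \semequiv \qAm{A,M}$ already forces $q \semequiv \qAm{A,\Zbound}$. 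Since $\qAm{A,\Zbound} \semsubset q$ always holds, the failure of equivalence means $q \not\semsubset \qAm{A,\Zbound}$, so by \Cref{lem:characterization-containment} there is $\lambda \in \Exp(q)$ not homomorphically covered by any expansion of $\qAm{A,\Zbound}$; from $q \semsubset \qAm{A,M}$ we also fix some $\lambda' \in \Exp(\qAm{A,M})$ with $\lambda' \homto \lambda$.

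Next I would prove the analogue of \Cref{cl:boundimage}, now phrased about the bounded letters only: there are $\lambda' \in \Exp(\qAm{A,M})$ and $h : \lambda' \homto \lambda$ such that the set of variables of $\lambda$ that are the $h$-image of some variable of an $A$-recursive atom expansion of $\lambda'$ has size at most $\Zbound$. The argument is the colouring argument of \Cref{cl:boundimage}: let $M'$ be the size of $\qAm{A,M}$ viewed as a UCQ, form $\lambda^+$ from $\lambda$ by replacing every atom expansion $x \xrightarrow{a^\ell} y$ with $a \in A$ and $\ell > \Zbound$ by $x \xrightarrow{a^{M'}} y$ (so $\lambda^+ \in \Exp(q)$, since $\Zbound$ is large enough that only $A$-recursive atoms can produce such long $a$-stretches), take a minimal $\lambda'$ with $h : \lambda' \homto \lambda^+$, and colour each variable of $\lambda^+$ \emph{red} (image of an $A$-recursive atom-expansion variable), \emph{green} (image of a variable of $q$), or \emph{blue} (image of a variable of a non-$A$-recursive atom expansion, which now subsumes the atoms $b^*$ with $b \notin A$). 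The one genuinely new observation is that an edge of a non-$A$-recursive atom expansion carries a label that cannot appear as an $a \in A$-edge inside a blown-up stretch $x \xrightarrow{a^{M'}} y$ (homomorphisms preserve labels, and the internal vertices of such a stretch carry only $a$-edges), so each such stretch sees only red, green and uncoloured vertices --- in particular no blue vertices and at most $\nrvars q$ green ones. Hence, exactly as in \Cref{cl:boundimage}, a purely-red interval of length $\ZredBound$ inside a blown-up stretch would let one contract the corresponding $A$-recursive atom expansions of $\lambda'$ and slide $h$ into the uncoloured region, yielding a strictly smaller witness and contradicting minimality. The same counting as in \Cref{cl:boundimage} then bounds the coloured intervals inside these stretches by $\ZcolBound$, their number by $\nratoms q$ per stretch over $\le \nratoms q$ stretches, hence bounds the red image by $\Zbound$; and it leaves $\ge M' - \Zbound$ uncoloured vertices available for contraction in each stretch, so contracting $M' - m$ of them (where $a^m$ was the original expansion, $m > \Zbound$) recovers an isomorphic copy of $\lambda$ and gives $h : \lambda' \homto \lambda$.

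Finally I would conclude as in \Cref{lem:new:q_equiv_q(n)}: re-minimise $\lambda'$ among the expansions of $\qAm{A,M}$ that map to $\lambda$ with red image of size $\le \Zbound$ (contractions only shrink the image, so this property survives). Since no expansion of $\qAm{A,\Zbound}$ maps to $\lambda$, we get $\lambda' \notin \Exp(\qAm{A,\Zbound})$, so some $A$-recursive atom expansion $z_0 \xrightarrow{a} z_1 \xrightarrow{a} \dotsb \xrightarrow{a} z_\ell$ of $\lambda'$ has $\ell > \Zbound$; all the $z_i$ have red images, so $h(z_i) = h(z_j)$ for some $i < j$ by pigeonhole, and deleting the segment between them yields a strictly smaller $\lambda'' \in \Exp(\qAm{A,M})$ with $h|_{\lambda''} : \lambda'' \homto \lambda$ and red image still $\le \Zbound$ --- contradicting minimality. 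Hence $q \semequiv \qAm{A,\Zbound}$. The only real obstacle I anticipate is that the global bound $|\Image(h)| \le \Zbound$ of \Cref{cl:boundimage} genuinely \emph{fails} in this setting, because the non-$A$-recursive atoms may be expanded arbitrarily far; the fix is to bound only the red part of the image, which works precisely because the edge labels keep the $a \in A$-stretches edge-disjoint from everything touched by non-$A$-recursive atoms, so all the estimates of \Cref{cl:boundimage} localise to those stretches and go through with the same (comfortably slack) constants.
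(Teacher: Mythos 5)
Your proposal is correct and follows essentially the same route as the paper, which proves this lemma by replaying the argument of \Cref{lem:new:q_equiv_q(n)} and \Cref{cl:boundimage} while performing contractions only on the $a^*$ atom expansions with $a \in A$ and leaving the $b^*$ expansions ($b \notin A$) untouched. Your additional observation --- that the global bound $|\Image(h)| \le \Zbound$ must be weakened to a bound on the image of the $A$-recursive part only, which is sound because label-preservation keeps the $b$-labelled expansions out of the blown-up $a$-stretches --- is exactly the point the paper's one-line justification glosses over, and you handle it correctly.
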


\begin{lemma}\label{lem:new:generating_algo_pi2p(A,n)}
   Let $q \in \ucrpq{\aSingleton,\aStar}$ and $A\subseteq \alphabet$. Then $q$ is $\bddIn{A}$ "iff" $\qAm{A,\Zbound}$ is "equivalent" to $\qAm{A,\Zboundplus}$, for $\reintro*\Zboundplus \defeq \nratoms q \cdot \Zbound + 1$.
\end{lemma}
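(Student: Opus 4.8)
The plan is to follow the proof of \Cref{lem:new:generating_algo_pi2p} (i.e., of \Cref{lem:bound-both-sides}) almost verbatim, replacing "boundedness" by $\bddIn{A}$, the truncated query $\qm m$ by $\qAm{A,m}$ (which truncates only the recursive atoms $\xrightarrow{a^*}$ with $a\in A$), and invoking \Cref{lem:new:q_equiv_q(A,n)} and \Cref{characterization_Aboundedness} in place of \Cref{lem:new:q_equiv_q(n)} and \Cref{characterization_boundedness}. Concretely, I would establish the equivalence of the three statements: \emph{(i)}~$q$ is $\bddIn{A}$; \emph{(ii)}~$q \semequiv \qAm{A,\Zbound}$; and \emph{(iii)}~$\qAm{A,\Zboundplus} \semequiv \qAm{A,\Zbound}$; here \emph{(i)}~$\Leftrightarrow$~\emph{(iii)} is precisely the statement of the lemma.

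The easy implications first. For \emph{(ii)}~$\Rightarrow$~\emph{(i)}: $\qAm{A,\Zbound}$ is a $\ucrpq{\SSF,\aStar[\bar A]}$ query --- its atoms $\xrightarrow{a^*}$ with $a\in A$ have become $a^{\le\Zbound}$, which is in $\SSF$ --- and hence it witnesses that $q$ is $\bddIn{A}$ (alternatively, invoke \Cref{characterization_Aboundedness}). For \emph{(i)}~$\Rightarrow$~\emph{(iii)}: \Cref{lem:new:q_equiv_q(A,n)} gives $q \semequiv \qAm{A,\Zbound}$; since $\Exp(\qAm{A,\Zbound}) \subseteq \Exp(\qAm{A,\Zboundplus}) \subseteq \Exp(q)$, we get $\qAm{A,\Zbound} \semsubset \qAm{A,\Zboundplus} \semsubset q$ by \Cref{lem:characterization-containment}, so all three queries are "equivalent" and in particular $\qAm{A,\Zboundplus} \semequiv \qAm{A,\Zbound}$.

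The substantive direction is \emph{(iii)}~$\Rightarrow$~\emph{(ii)}, which I would prove by contraposition exactly as in the corresponding case of \Cref{lem:bound-both-sides}. Assuming $q \not\semsubset \qAm{A,\Zbound}$, fix some $\lambda \in \Exp(q)$ with $\lambda \not\semsubset \qAm{A,\Zbound}$ and let $\lambda'$ be obtained from $\lambda$ by replacing every "atom expansion" $x \xrightarrow{a^\ell} y$ with $a \in A$ and $\ell > \Zboundplus$ by $x \xrightarrow{a^{\Zboundplus}} y$, leaving every other "atom expansion" --- in particular those of letter atoms, and those of the unbounded recursive atoms $\xrightarrow{a^*}$ with $a \notin A$ --- untouched; then $\lambda' \in \Exp(\qAm{A,\Zboundplus})$. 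It remains to show $\lambda' \not\semsubset \qAm{A,\Zbound}$. Suppose for contradiction that $h : \hat\lambda \homto \lambda'$ for some $\hat\lambda \in \Exp(\qAm{A,\Zbound})$. The key claim is that each replaced segment $S = (x \xrightarrow{a^{\Zboundplus}} y)$ of $\lambda'$ contains a vertex outside $\Image(h)$; granting it, we re-expand every such $S$ back to $x \xrightarrow{a^\ell} y$ by inserting the missing $a$-edges next to that uncovered vertex, which turns $h$ into a "homomorphism" $\hat\lambda \homto \lambda$ and contradicts $\lambda \not\semsubset \qAm{A,\Zbound}$. Hence $q \not\semsubset \qAm{A,\Zbound}$ implies $\qAm{A,\Zboundplus} \not\semsubset \qAm{A,\Zbound}$, which is the desired contrapositive.

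The one point that needs slightly more care than in the fully bounded case of \Cref{lem:bound-both-sides} --- and which I expect to be the only genuine, if minor, obstacle --- is the key claim above. In \Cref{lem:bound-both-sides} it came from a uniform size bound on all "expansions" of $\qm{\Zbound}$, which is no longer available because $\qAm{A,\Zbound}$ still contains the recursive atoms $\xrightarrow{a^*}$ with $a \notin A$, whose "expansions" are arbitrarily long. The remedy is to use the letter: $S$ is a directed path all of whose edges carry the single letter $a \in A$ and whose internal vertices occur nowhere else in $\lambda'$, so any connected sub-pattern of $\hat\lambda$ whose $h$-image meets the interior of $S$ must itself be a directed $a$-labelled path, hence a sub-path of some "atom expansion" of $\hat\lambda$ labelled $a$. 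Since $a \in A$, every $a^*$-atom of $q$ has been truncated to $a^{\le\Zbound}$ in $\qAm{A,\Zbound}$, so $\hat\lambda$ has at most $\nratoms q$ "atom expansions" labelled $a$, each of length at most $\Zbound$; as $\Zboundplus = \nratoms q\cdot\Zbound+1$ is chosen large enough that $\Image(h)$ then cannot reach every vertex of $S$, the claim follows. With this observation in hand, the rest of the argument is word for word that of \Cref{lem:bound-both-sides}, which completes the proof of \Cref{lem:new:generating_algo_pi2p(A,n)}.
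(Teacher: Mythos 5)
Your proposal is correct and follows exactly the route the paper takes: the paper proves this lemma only by remarking that the argument of \Cref{lem:new:generating_algo_pi2p} (via \Cref{lem:bound-both-sides}) carries over verbatim to $\qAm{A,m}$, with contractions and re-expansions performed only on $a^*$-atoms for $a\in A$. Your additional observation --- that the uniform size bound on expansions of the truncated query is lost because $b^*$-atoms with $b\notin A$ remain, and that the uncovered-vertex claim must instead be recovered by noting that only $a$-labelled (hence truncated) atom expansions of $\hat\lambda$ can map into the interior of a replaced $a$-segment --- correctly fills in the one detail the paper glosses over, at the same level of precision in the constants as the paper itself.
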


The proofs of both statements follow along the same lines as those of \Cref{lem:new:q_equiv_q(n)} and \Cref{lem:new:generating_algo_pi2p} respectively, except that we need to work with $\qAm{A,m}$ "expansions" instead of $\qm m$. Note that in the proof, the operations of extending and contracting of paths in the "recursive" "atom expansions" are done only for 
only those $a^*$ "atoms" such that $a\in A$. For all $b\not\in A$ we keep the "atom expansion" paths unaltered and hence the arguments continue to hold. Together, the two statements above give us the following upper bound.

\begin{theorem}\label{thm-pip2-upper-bound-qAm}
  The $\bddInPb{A}$ problem for $\ucrpq{\aSingleton,\aStar}$ is in "PiP2". Further, an equivalent $\ucrpq{\aSingleton,\aStar[\bar A]}$ query of linear size can be computed, with $\bar A = \alphabet \setminus A$.
 \end{theorem}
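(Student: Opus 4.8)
The plan is to mirror the proof of \Cref{thm-pip2-upper-bound}, replacing \Cref{lem:new:generating_algo_pi2p} and \Cref{lem:new:q_equiv_q(n)} by \Cref{lem:new:generating_algo_pi2p(A,n)} and \Cref{lem:new:q_equiv_q(A,n)}. By \Cref{lem:new:generating_algo_pi2p(A,n)}, deciding whether $q \in \ucrpq{\aSingleton,\aStar}$ is $\bddIn{A}$ amounts to deciding $\qAm{A,\Zbound} \semequiv \qAm{A,\Zboundplus}$; and since $\Zbound \le \Zboundplus$, every "expansion" of $\qAm{A,\Zbound}$ is also an "expansion" of $\qAm{A,\Zboundplus}$, so the inclusion $\qAm{A,\Zbound} \semsubset \qAm{A,\Zboundplus}$ is trivial and only $\qAm{A,\Zboundplus} \semsubset \qAm{A,\Zbound}$ needs to be tested. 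Moreover, for queries in $\ucrpq{\aSingleton,\aStar}$ every "recursive" and "non-recursive" "atom" carries a single letter, so $\Nbound = 1$ and $\Pi_{w \in \Recwords}|w| = 1$; hence $\Zbound = \nratoms[3]{q}\cdot\nrvars q$ and $\Zboundplus = \nratoms q\cdot\Zbound + 1$ are polynomial in $|q|$, and $\qAm{A,\Zbound}$, $\qAm{A,\Zboundplus}$ are computable efficiently.

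I would then show $\qAm{A,\Zboundplus} \semsubset \qAm{A,\Zbound}$ to be in "PiP2" by giving a "SigmaP2" procedure for its negation: guess an "expansion" $\lambda$ of $\qAm{A,\Zboundplus}$ witnessing non-containment (via \Cref{lem:characterization-containment}), then check in "coNP" that no "expansion" of $\qAm{A,\Zbound}$ maps "homomorphically" into $\lambda$. The new difficulty, which I expect to be the main obstacle, is that $\qAm{A,\Zboundplus}$ still contains genuine Kleene stars — the "recursive" "atoms" $b^*$ with $b \notin A$ — so a priori $\lambda$ need not be of bounded size. The key point to establish is that, whenever a non-containment witness exists, one exists in which every such "atom" $b^*$ is "expanded" to a $b$-path of polynomial length: since the same "atoms" $b^*$ occur in $\qAm{A,\Zbound}$, one can contract $b$-cycles inside an offending $\lambda$ without ever creating a "homomorphism" from $\qAm{A,\Zbound}$ — a pumping argument in the spirit of the small-model arguments used for "containment" of simple CRPQ fragments. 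Combined with the polynomial bound $\Zboundplus$ on the lengths of the "expansions" of the "atoms" $a^*$ with $a \in A$, this makes $\lambda$ of polynomial size.

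Given such a polynomial-size $\lambda$, the inner "coNP" check only needs to range over "expansions" $\lambda'$ of $\qAm{A,\Zbound}$ of polynomial size: a "homomorphism" $h : \lambda' \homto \lambda$ has image of size at most $|\lambda|$, so each "atom" $b^*$ of $\lambda'$ may be "expanded" to a shortest $b$-walk in $\lambda$ between the $h$-images of its endpoints, of length below $|\lambda|$. Guessing such a $\lambda'$ as a polynomial-size "succinct CQ" and applying \Cref{thm:containment-succinctCQ}, the test $\lambda' \homto \lambda$ is in "NP", so checking that \emph{no} such $\lambda'$ exists is in "coNP"; hence non-containment is in "SigmaP2", $\qAm{A,\Zboundplus} \semsubset \qAm{A,\Zbound}$ is in "PiP2", and so is the $\bddInPb{A}$ problem. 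Finally, when $q$ is $\bddIn{A}$, \Cref{lem:new:q_equiv_q(A,n)} yields $q \semequiv \qAm{A,\Zbound}$; since $\Zbound$ is polynomial, expanding the bounded "atoms" $a^{\le\Zbound}$ (for $a \in A$) into finite unions of $a$-paths produces the required equivalent query in $\ucrpq{\aSingleton,\aStar[\bar A]}$, computable efficiently.
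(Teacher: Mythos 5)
Your proposal takes the same route as the paper: reduce the $\bddInPb{A}$ problem to the single equivalence test $\qAm{A,\Zbound} \semequiv \qAm{A,\Zboundplus}$ via \Cref{lem:new:generating_algo_pi2p(A,n)}, note that only the containment $\qAm{A,\Zboundplus} \semsubset \qAm{A,\Zbound}$ is nontrivial, and decide it by a $\Sigma^p_2$ procedure for its negation built on \Cref{thm:containment-succinctCQ}. The paper presents the theorem as an immediate consequence of \Cref{lem:new:q_equiv_q(A,n)} and \Cref{lem:new:generating_algo_pi2p(A,n)}, implicitly reusing the proof of \Cref{thm-pip2-upper-bound}; you go further and correctly isolate exactly where that proof does not transfer verbatim: $\qAm{A,n}$ still contains the stars $b^*$ for $b \notin A$, so neither the counterexample expansion $\lambda$ nor the candidate expansions of $\qAm{A,\Zbound}$ have bounded size a priori. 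Both of your fixes are sound. The inner one is already complete: if some expansion of $\qAm{A,\Zbound}$ maps into $\lambda$, rerouting each $b^*$-expansion along a shortest $b$-walk in $\lambda$ between the images of its endpoints yields one of size polynomial in $|\lambda|$. The outer small-counterexample claim, which you leave as the key point to establish, does hold with a polynomial bound and can be proved by the same coloring technique as \Cref{cl:boundimage}: take a size-minimal counterexample $\lambda$; if some $b$-path in it is longer than $\nrvars q + 2$, delete one edge to get $\lambda^-$, use minimality to obtain $h : \mu \homto \lambda^-$ with $\mu \in \Exp(\qAm{A,\Zbound})$, and observe that an interior vertex of the shortened $b$-path can only be the image of a variable of $q$ or of an interior point of a $b^*$-atom expansion of $\mu$ (interior points of $c$-labelled atom expansions with $c \neq b$, including the $a^{\le \Zbound}$ ones, cannot land there). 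Hence some interior vertex is either uncolored or hit only by interior points of $b^*$-expansions; re-inserting the deleted edge at that vertex while stretching those $b^*$-expansions of $\mu$ stays inside $\Exp(\qAm{A,\Zbound})$ because $b \notin A$, contradicting that $\lambda$ was a counterexample. With that lemma written out, your argument is a correct and in fact more explicit version of what the paper asserts.

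One caveat on your last step: expanding each $a^{\le \Zbound}$ into a union of $a$-paths and distributing the unions to land in $\ucrpq{\aSingleton,\aStar[\bar A]}$ produces exponentially many disjuncts, so it is not efficiently computable in that form. The linear-size equivalent query is $\qAm{A,\Zbound}$ itself, which lives in $\ucrpq{\SSF,\aStar[\bar A]}$ as in the statement of \Cref{thm-Max-bdd-result}; the class named in the statement of \Cref{thm-pip2-upper-bound-qAm} appears to be an inaccuracy of the paper rather than of your proof.
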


 \begin{proof}[Proof of \Cref{thm-Max-bdd-result}]
 \proofcase{Item \eqref{thm-Max-bdd-result:1}}
 The upper bound follows from \Cref{thm-pip2-upper-bound-qAm} and lower bound from \Cref{thm:lowerbound}. 
 
 \proofcase{Item \eqref{thm-Max-bdd-result:2}}
 The existence of a unique maximal $A$ follows from \Cref{thm-A-bdd-confluence}. 
 In order to find it, a "PiP2" algorithm can check if $q$ is $\bddIn{\set a}$ for every $a\in \alphabet$, and we know, thanks to \Cref{thm-A-bdd-confluence}, that the maximal $A$ will be the set of all the $a$'s for which $q$ turned out to be $\bddIn{\set a}$. The "equivalent" $A$-bounded query is then $\qAm{A,\Zbound}$.
\end{proof}

\section{Conclusion}
\label{sec-conclusion}
We have shown that the "boundedness problem" for "UCRPQs" with simple recursion of the form $a^*$ is "PiP2"-complete. This is in line  with the complexity for the "containment problem" of similar fragments \cite{FigueiraGKMNT20}.

\smallskip 

\textbf{Constants and free variables.}
While we have focused our study on "Boolean" queries without constants, our upper bounds also extend to queries containing constants and free variables via the classical notion of homomorphism between conjunctive queries with constants and free variables.
For the case of free variables, one can use a standard reduction to Boolean queries:
For any formula with free variables, consider replacing each free variable $x$ with a bound variable $y_x$ and adding a self-loop $y_x \xrightarrow{a_x} y_x$, where $a_x$ is a fresh alphabet symbol depending on $x$. It follows that the original query is "bounded" "iff" the modified Boolean query is "bounded". 
% The standard way of doing so is by replacing constants and free variables with bound variables and adding a self-loop with a fresh symbol

\smallskip 

\textbf{Less trivial recursion.}
Instead of allowing just for one word or one letter to appear under a Kleene star, as in $a^*$, one could also consider the case of a set of letters appearing under a Kleene star, as in $(a + b + c)^*$.
The simplest class of regular expressions containing such behavior, namely the one containing only the regular expression $a$ for each $a \in \alphabet$ and $A^*$ for any set $A \subseteq \alphabet$, is denoted by 
\AP
\knowledgenewrobustcmd{\Astar}{\cmdkl{\mathbf{A}^{\!*}}}
$(\aSingleton,\intro*\Astar)$ in \cite{FigueiraGKMNT20}. The "containment problem" for $\crpq{\aSingleton,\Astar}$ does not enjoy a better complexity than the "containment problem" for general "CRPQ", that is, it is "ExpSpace"-complete.

\AP
We would expect a similar behaviour for the "boundedness problem" of $\ucrpq{\SSF,\Astar}$ but the complexity checking if such a query is "bounded" is open. The "ExpSpace" upper bound follows from \Cref{thm-earlier-bdd-result}, but no matching lower bound is known. We can also generalize this question to \ucrpq{\SSF, \WStar} where $\intro*\WStar$ denotes the regular expressions of the form $(w_1+\ldots +w_k)^*$.
\begin{open}
    What is the complexity for the "boundedness problem" for $\ucrpq{\SSF,\Astar}$ and $\ucrpq{\SSF,\WStar}$?
\end{open}

\smallskip

\textbf{Rewritability of ontology-mediated CRPQs.}
\knowledgenewrobustcmd{\horndl}{\mathcal{ELHI}_{\bot}}
CRPQs can also be considered in the presence of an ontology (see, "eg", \cite{BagetBMT17,Gutierrez-Basulto18}). In such a context, an ontology-mediated query (OMQ) $(T,q)$ admits a rewriting in a language $\+L$ if there is a query $q' \in \+L$ such that for every ABox $A$ we have that $A,T$ entails $q$ if, and only if, $A \models q'$.
For Horn description logics such as $\horndl$, the resulting OMQs $(T,q)$ (where $T \in \horndl$ and $q \in \textup{"CRPQ"}$) is closed under homomorphisms. In these cases the OMQ is FO-rewritable if{f} it is UCQ-rewritable. Hence, our positive complexity result can be seen as a first step towards investigating the rewritability of OMQs with "CRPQs" as query langauges.

\section*{Acknowledgements}
  Diego Figueira is partially supported by ANR AI Chair INTENDED, grant ANR-19-CHIA-0014.

%% The file kr.bst is a bibliography style file for BibTeX 0.99c
\bibliographystyle{kr}
\bibliography{long,ref}

\end{document}